\newcommand{\ED}{\textnormal{ED}}
\newcommand{\LCS}{\textnormal{LCS}}
\newcommand{\RSD}{\textnormal{RSD}}
\newcommand{\eps}{\varepsilon}
\newcommand{\Decode}{\textnormal{Dec}}
\newcommand{\IEEEPARstart}[2]{#1#2}
\newtheorem{theorem}{Theorem}[section]
\newtheorem{definition}[theorem]{Definition}
\newtheorem{lemma}[theorem]{Lemma}
\newtheorem{proposition}[theorem]{Proposition}
\newcommand{\IEEEOnly}[1]{}
\begin{document}

\date{}

\title{Synchronization Strings and Codes for \\Insertions and Deletions -- a Survey}



\author{Bernhard Haeupler\footnote{Supported in part by NSF grants CCF-1527110, CCF-1618280, CCF-1814603, CCF-1910588, NSF CAREER award CCF-1750808, a Sloan Research Fellowship, and funding from the European Research Council (ERC) under the European Union's Horizon 2020 research and innovation program (ERC grant agreement 949272).}
\\Carnegie Mellon University \& ETH Zurich\\ \texttt{haeupler@cs.cmu.edu} \and Amirbehshad Shahrasbi\footnotemark[1]\\Carnegie Mellon University\\ \texttt{shahrasbi@cs.cmu.edu}}

\maketitle
\thispagestyle{empty}

\begin{abstract}
Already in the 1960s, Levenshtein and others studied error-correcting codes that protect against synchronization errors, such as symbol insertions and deletions. However, despite significant efforts, progress on designing such codes has been lagging until recently, particularly compared to the detailed understanding of error-correcting codes for symbol substitution or erasure errors. This paper surveys the recent progress in designing efficient error-correcting codes over finite alphabets that can correct a constant fraction of worst-case insertions and deletions. 

Most state-of-the-art results for such codes rely on synchronization strings, simple yet powerful pseudo-random objects that have proven to be very effective solutions for coping with synchronization errors in various settings. This survey also includes an overview of what is known about synchronization strings and discusses communication settings related to error-correcting codes in which synchronization strings have been applied. 

\end{abstract}

\thispagestyle{empty}

\newpage
\setcounter{page}{1}

\section{Introduction}
\IEEEPARstart{F}{ollowing} the inspiring works of Shannon and Hamming
a sophisticated and extensive body of research on error-correcting codes has led to a deep and detailed theoretical understanding as well as practical implementations that have helped fuel the Digital Revolution. Error-correcting codes can be found in virtually all modern communication and computation systems. While being remarkably successful in understanding the theoretical limits and trade-offs of reliable communication under substitution errors and erasures, the coding theory literature lags significantly behind when it comes to overcoming errors that concern the timing of communications. In particular, the study of correcting synchronization errors, i.e., symbol insertions and deletions, while initially introduced by Levenshtein in the 60s, has significantly fallen behind our highly sophisticated knowledge of codes for Hamming-type errors, that are symbol substitutions and erasures.

This discrepancy has been well noted in the literature. An expert panel~\cite{golomb1963synchronization} in 1963 concluded: \emph{``There has been one glaring hole in [Shannon's] theory; viz., uncertainties in timing, which I will propose to call time noise, have not been encompassed \ldots. Our thesis here today is that the synchronization problem is not a mere engineering detail, but a fundamental communication problem as basic as detection itself!''} however as noted in a comprehensive survey~\cite{mercier2010survey} in 2010: \emph{``Unfortunately, although it has early and often been conjectured that error-correcting codes capable of correcting timing errors could improve the overall performance of communication systems, they are quite challenging to design, which partly explains why a large collection of synchronization techniques not based on coding were developed and implemented over the years.''} or as Mitzenmacher puts in his survey~\cite{mitzenmacher2009survey}: \emph{``Channels with synchronization errors, including both insertions and deletions as well as more general timing errors, are simply not adequately understood by current theory. Given the near-complete knowledge we have for channels with erasures and errors \ldots our lack of understanding about channels with synchronization errors is truly remarkable.''}

However, over the last five years, partially spurred by new emerging application areas, such as DNA-storage~\cite{organick2017scaling,blawat2016forward,goldman2013towards,church2012next,yazdi2015dna,bornholt2016dna}, significant breakthroughs in our theoretical understanding of error correction methods for insertions and deletions have been made. 

This survey focuses on error-correcting codes over finite alphabets that can correct a constant fraction of worst-case insertions and deletions and provides a complete account of the recent progress in this area. Much of this progress has been obtained through synchronization strings, recently introduced, simple yet powerful pseudo-random objects proven to be very effective solutions for coping with synchronization errors in various communication settings. This paper includes streamlined and self-contained proofs for the state-of-the-art code constructions and decoding procedures for both unique-decodable and list-decodable error-correcting codes over large constant alphabets, which are based on synchronization strings. We also provide in-depth discussions of such codes over binary and other fixed (small) alphabets. Lastly, this paper includes an overview of what is known about synchronization strings themselves and discusses other communication settings in which synchronization strings have been successfully applied.



\subsection{Synchronization Errors}
Consider a stream of symbols being transmitted through a noisy channel.
There are two basic types of errors that we will consider, Hamming-type errors and synchronization errors. 
\emph{Hamming-type errors} consist of \emph{erasures}, that is, a symbol being replaced with a special ``?'' symbol indicating the erasure, and \emph{substitutions} in which a symbol is replaced with any other symbol of the alphabet. We will measure Hamming-type errors in terms of \emph{half-errors}. The wording half-error comes from the realization that, when it comes to code distances, erasures are half as bad as symbol substitutions. An erasure is thus counted as one half-error while a symbol substitution counts as two half-errors. 
\emph{Synchronization errors} consist of \emph{deletions}, that is, a symbol being removed without replacement, and \emph{insertions}, where a new symbol is added somewhere within the stream. 

Synchronization errors are strictly more general and harsher than half-errors. In particular, any symbol substitution, worth two half-errors, can also be achieved via a deletion followed by an insertion. Any erasure can be interpreted as a deletion together with the extra information where this deletion has taken place. This shows that any error pattern generated by $k$ half-errors can also be replicated using $k$ synchronization errors, making dealing with synchronization errors at least as hard as half-errors. The real problem that synchronization errors bring about, however, is that they cause sending and receiving parties to become ``out of sync''. This easily changes how received symbols are interpreted and makes designing codes or other systems tolerant to synchronization errors an inherently difficult and significantly less well-understood problem. 

\subsection{Scope of the Survey and Related Works}
The study of coding for synchronization errors was initiated by Levenshtein~\cite{levenshtein1966binary} in 1966 when he showed that Varshamov-Tenengolts codes can correct a single insertion, deletion, or substitution error with an optimal redundancy of almost $\log n$ bits. 
Ever since, synchronization errors have been studied in various settings. In this section, we specify and categorize some of the commonly studied settings and give a detailed summary of past works within the scope of this survey.

The first important aspect is the noise model. Several works have studied coding for synchronization errors under the assumption of random errors, most notably, to study the capacity of deletion channels, which independently delete each symbol with some fixed probability. In this paper, we exclusively focus on worst-case error models in which correction has to be possible from any (adversarial) error pattern bounded only by the total number of insertions and deletions. We refer to the recent survey (in the same special issue) by Cheraghchi and Ribeiro~\cite{cheraghchi2019overview} on capacity results for synchronization channels as well as the surveys by Mitzenmacher~\cite{mitzenmacher2009survey} and Mercier~\cite{mercier2010survey}, for an extensive review of the literature on codes for random synchronization errors.

Another angle to categorize the previous work on codes for synchronization error from is the noise regime. In the same spirit as ordinary error-correcting codes, the study of families of synchronization codes has included both ones that protect against a fixed number of synchronization errors and ones that consider error count that is a fixed fraction of the block length. The inspiring work of Levenshtein~\cite{levenshtein1966binary} falls under the first category and is followed by several works designing synchronization codes correcting $k$ errors for specific values of $k$~\cite{sloane2002single,tenengolts1984nonbinary,helberg2002multiple,gabrys2018codes} or with $k$ as a general parameter~\cite{abdel2011helberg,brakensiek2017efficient}. In this work, we focus on the second category, i.e., infinite families of synchronization codes with increasing block length that are defined over a fixed alphabet size and can correct from constant-fractions of worst-case synchronization errors. 

Furthermore, we mainly focus on codes that can be efficiently constructed and decoded -- in contrast to merely existential results.
The first such code was constructed in 1990 by Schulman and Zuckerman~\cite{schulman1999asymptotically}. They provided an efficient, asymptotically good synchronization code with constant rate and constant distance. 
In the following, we will give a complete review of the previous work relevant to the scope of this paper.

\medskip

\subsubsection{Rate-Distance Trade-Off}
One of the main problems in coding theory concerns the question of what the largest achievable communication rate is while protecting from a certain fraction of (synchronization) errors. This question can be studied under the regime that assumes some fixed alphabet of size $q$, specifically binary alphabets, or an alphabet-free regime that studies the rate achievability when alphabet size can be chosen arbitrarily large but independent of the block length.

For the large alphabet setting, the Singleton bound suggests that no family of codes can correct a $\delta$ fraction of deletions, and hence, $\delta$ fraction of synchronization errors while achieving a rate strictly larger than $1-\delta$.
A series of works by Guruswami \emph{et al}.~\cite{guruswami2016efficiently,guruswami2017deletion} provides codes that achieve a rate of $\Omega((1-\delta)^5)$ and $1 - \tilde{O}(\sqrt{\delta})$ while being able to efficiently recover from a $\delta$ fraction of insertions and deletions in high-noise and high-rate regimes respectively. In this paper, we will take a deep dive into synchronization string based code constructions that provide codes that can approach the Singleton bound up to an arbitrarily small additive term over the entire distance spectrum $\delta\in(0, 1)$.

For binary alphabet codes, one can show that the optimal achievable rate to protect against a $\delta$ fraction of insertions or deletions is $1-O(\delta \log\frac{1}{\delta})$~\cite{levenshtein1966binary}. Works of Guruswami \emph{et al}.~\cite{guruswami2016efficiently,guruswami2017deletion} and Haeupler \emph{et al}.~\cite{haeupler2017synchronization2} present efficient codes with distance $\delta$ and rate $1-O\left(\sqrt{\delta}\log ^{O(1)}\frac{1}{\delta}\right)$ for sufficiently small $\delta$. Recent works by Cheng \emph{et al}.~\cite{cheng2018deterministic} and Haeupler~\cite{haeupler2018optimal} have achieved codes with rate $1-O(\delta\log^2\frac{1}{\delta})$.

\medskip
\subsubsection{List Decoding}
Like error-correcting codes, synchronization codes have been studied under the \emph{list decoding} model where, as opposed to unique decoding, the decoder is expected to produce a list of codewords containing the transmitted codeword as long as the error rate is sufficiently small.
    
Guruswami and Wang~\cite{guruswami2017deletion} have provided positive-rate binary deletion codes that can be list-decoded from close to $\frac{1}{2}$ fraction of deletions. 
Haeupler \emph{et al}.~\cite{haeupler2018synchronization4,haeupler-list-dec-capacity2020} gave upper and lower bounds on the maximum achievable rate of list-decodable insertion-deletion codes (or insdel codes for short) over any alphabet size $q$.
Recent works of Wachter-Zeh~\cite{wachter2017list} and Hayashi and Yasunaga~\cite{hayashi2018list} have studied list-decoding by providing Johnson-type bounds for synchronization codes that relate the minimum edit-distance of the code to its list decoding properties. We generally define the edit-distance between two strings as the smallest number of insertions and deletions needed to convert one to another.
The bounds presented in \cite{hayashi2018list} show that binary codes by Bukh, Guruswami, and H\aa stad~\cite{bukh2017improved} can be list-decoded from a fraction $\approx 0.707$ of insertions. Via a concatenation scheme used in \cite{guruswami2017deletion} and \cite{guruswami2016efficiently}, Hayashi and Yasunaga furthermore made these codes efficient. A recent work of Liu, Tjuawinata, and Xing~\cite{liu2019list} also derives bounds on list-decoding radius,  provides efficiently list-decodable insertion-deletion codes over small alphabets, and gives a Zyablov-type bound for synchronization codes.

\medskip
\subsubsection{Error Resilience}
As mentioned above, it is known that there exist positive-rate binary deletion codes that are list-decodable from any fraction of errors smaller than $\frac{1}{2}$. Also, there are codes that can list-decode from a fraction $\approx 0.707$ of insertions.
We will present a recent result from \cite{guruswami2019optimally} that, for any alphabet size $q$, precisely identifies the maximal rates of combinations of insertion and deletion errors from which list-decoding is possible.

A similar question can be asked for uniquely-decodable synchronization codes, i.e., what is the largest fraction of errors $\delta_0$ where there exist positive-rate synchronization codes with minimum edit-distance $\delta_0$? For binary alphabets, it is easy to see that $\delta_0 \leq \frac{1}{2}$. However, most resilient binary codes with positive rate to date are ones introduced by Bukh, Guruswami, and H\aa stad~\cite{bukh2017improved} that can correct a $\sqrt{2}-1 \approx 0.4142$ fraction of errors. Determining the optimal error resilience for uniquely-decodable synchronization codes remains an interesting open question. We refer the reader to \cite{cheraghchi2019overview} for a more comprehensive review of past works on the error resilience for synchronization codes.

\subsection{Coding with Synchronization Strings}\label{sec:statement-of-main-results}
One commonly studied approach to correct from synchronization errors is to use special symbols or sequences with specific structures as markers or delimiters to keep track of insertions and deletions and realign a transmitted word~\cite{sellers1962bit,morita1997prefix,ferreira1997insertion,gilbert1960synchronization,guibas1978maximal,van1995extended,morita1996construction,kautz1965fibonacci}. In this work, we focus on a very recent form of such technique -- indexing with synchronization strings.

Introduced in \cite{haeupler2017synchronization}, synchronization strings allow efficient synchronization of streams that are affected by insertions and deletions using an abstract indexing scheme. Essentially, synchronization strings enable compartmentalization of coding against synchronization errors into two steps of (1) realigning the received stream of symbols in a way that guarantees most symbols are in their original position and (2) coding against Hamming-type errors caused by wrong realignments. Synchronization strings have made progress on a wide variety of settings and problems. This survey focuses on code constructions that are based on synchronization strings. Most importantly, we will review the following results.

\medskip
\subsubsection{Codes Approaching the Singleton Bound}\label{sec:intro-main-result-1}
Synchronization strings enable construction of families of synchronization codes that approach an almost optimal rate-distance trade-off as suggested by the Singleton bound over constant alphabet sizes. In other words, as shown in~\cite{haeupler2017synchronization}, for any $0\leq \delta < 1$ and any $\eps > 0$, there exists a family of synchronization codes that can uniquely and efficiently correct any $\delta$ fraction of insertions and deletions and achieve a rate of $1-\delta-\eps$. Such codes exist over alphabets of size $\exp(1/\eps)$ which is shown in~\cite{haeupler2017synchronization3} to be the asymptotically optimal alphabet size for a code with such properties.

\medskip
\subsubsection{Near-Linear Time Codes}
We then present an improvement from \cite{Rubinstein18-blog} over the result just described that modifies the construction and decoding in a way that enables near-linear time decoding. Two main ingredients are used to achieve this improvement: (1) generalizations of synchronization strings and their fast construction methods introduced in~\cite{haeupler2017synchronization3}, and (2) a fast indexing scheme for edit-distance computation from~\cite{haeupler2019near}. For any $n$ and $\eps > 0$, \cite{haeupler2019near} gives string $I$ of length $n$ over an alphabet of size $|\Sigma| = O_\eps(1)$ which enables fast approximation of the edit distance in the following way: Let $S\in\Sigma'^n$ be another string of length $n$ over some other alphabet $\Sigma'$. If one concatenates $S$ and $I$, symbol-by-symbol, to obtain the string $S\times I\in\left(\Sigma\times\Sigma'\right)^n$, then edit distance from any other string $S'\in\left(\Sigma\times\Sigma'\right)^*$ to $S\times I$ can be approximated within a multiplicative factor of $1+\eps$ in near-linear time.

\medskip
\subsubsection{List Decoding for Insertions and Deletions}\label{sec:main-result-list-dec}
We then proceed to present a recent result on list-decodable synchronization codes. Using a similar synchronization string-based approach, \cite{haeupler2018synchronization4} shows that for every $0\leq \delta < 1$, every $0 \leq \gamma < \infty$ and every $\eps > 0$ there exist a family of codes with rate $1 - \delta - \eps$, over an alphabet of constant size $q = O_{\delta,\gamma,\eps}(1)$ that are list-decodable from a $\delta$-fraction of deletions and a $\gamma$-fraction of insertions. This family of codes are efficiently decodable and their decoding list size is sub-logarithmic in terms of the code's block length. We stress that the fraction of insertions can be arbitrarily large (even more than 100\%) and the rate is independent of this parameter.

\medskip
\subsubsection{Optimal Error Resilience for List Decoding}\label{sec:main-result-resilience}
Finally, we review a result by Guruswami \emph{et al}. \cite{guruswami2019optimally} that, using a code concatenation scheme for synchronization codes with codes from \cite{bukh2017improved} and \cite{haeupler2018synchronization4}, exactly identifies the maximal fraction of insertions and deletions that can be tolerated by $q$-ary list-decodable codes with non-vanishing information rate. This includes efficient binary codes that can be list-decoded from any $\delta$ fraction of deletions and $\gamma$ fraction of insertions as long as $2\delta + \gamma < 1$. One can show that list decoding is not possible for any family of codes achieving positive rates for any error fraction out of this region. Guruswami \emph{et al}.~\cite{guruswami2019optimally} have generalized this result to alphabets of size $q$ and identified the feasibility region for $(\gamma, \delta)$ as a more complex region with a piece-wise linear boundary.


\subsection{Organization of the Paper}
In \cref{sec:main-results}, we will provide proofs for claims presented in \cref{sec:statement-of-main-results} by formally introducing indexing based code constructions and giving a minimal introduction to pseudo-random strings used for indexing. 
%
%
%
In \cref{sec:sync-strings}, we discuss several pseudo-random string properties, their constructions, their repositioning algorithms and the decoding properties that they enable once used to construct codes.
We then mention applications of synchronization strings and related string properties in other communication problems such as coding for block errors and interactive communication under synchronization errors in \cref{sec:other-applications}.

\section{Code via Indexing}\label{sec:main-results}
In this section, we explain the construction of codes stated in \cref{sec:statement-of-main-results}. We start with a self-contained simplified proof of Singleton bound approaching codes presented in \cref{sec:intro-main-result-1} that encapsulates the major ideas behind synchronization string-based code constructions while avoiding unnecessary details.

\subsection{Approaching the Singleton Bound: Technical Warm-up}\label{sec:technical-warm-up}

We start by defining the notion of \emph{$\eps$-self-matching strings} that satisfy a weaker property than synchronization strings but can be used in a similar fashion to construct synchronization codes.

\begin{definition}
String $S\in\Sigma^n$ is $\eps$-self-matching if it contains no two identical non-aligned subsequences of length $n\eps$ or more, i.e., there exist no two sequences $a_1, a_2, \ldots, a_{\lfloor n\eps \rfloor}$ and $b_1, b_2, \ldots, b_{\lfloor n\eps \rfloor}$ where for all $i$s $a_i \neq b_i$ and $S[a_i] = S[b_i]$.
\end{definition}

\subsubsection{Pseudo-random Property}\label{sec:pseudo-random-property}
We first point out that random strings over an alphabet of size $\Omega(\eps^{-2})$ satisfy $\eps$-self-matching property with high probability. Note that the probability of two given non-aligned subsequences of length $n\eps$ in a random string over alphabet $\Sigma$ being identical is $\frac{1}{|\Sigma|^{n\eps}}$. Also, there are no more than ${n \choose n\eps}^2$ pairs of such subsequences. Therefore, by the union bound, the probability of such random string satisfying $\eps$-self-matching property is 
${n\choose n\eps}^2 \frac{1}{|\Sigma|^{n\eps}}\leq \left(\frac{ne}{n\eps}\right)^{2n\eps} \frac{1}{|\Sigma|^{n\eps}}=\left(\frac{e^2}{|\Sigma|\eps^2}\right)^{n\eps}$ and thus, if $|\Sigma| = \Omega(\eps^{-2})$, the random string would satisfy the $\eps$-self-matching property with high probability.

\medskip
\subsubsection{Indexing Scheme}\label{sec:indexing-scheme}
Consider a communication channel where a stream of $n$ message symbols are communicated from the sender to the receiver and assume that the communication may suffer from up to $n\delta$ adversarial insertions or deletions for some $0 \leq \delta < 1$. We introduce a simple indexing scheme that will be used to construct synchronization codes. Let $m_1, m_2, \ldots, m_n$ represent the message symbols that the sender wants to get to the receiver and $s_1, s_2, \ldots, s_n$ be some $\eps$-self-matching string that the sender and the receiver have agreed upon beforehand. To communicate its message to the receiver, we have the sender send the sequence 
$(m_1, s_1), (m_2, s_2), \ldots, (m_n, s_n)$ through the channel. We will refer to this sequence as $m$ indexed by $s$ and denote it by $m\times s$.
Note that in this setting a portion of the channel alphabet is designated to the $\eps$-self-matching string and thus, does not contain information. This portion will be used to reposition the message symbols on the receiving end of the communication as we will describe in the next section.

\medskip
\subsubsection{Repositioning (Decoding)}
We now show that, having the indexing scheme described above, the receiver can correctly identify the positions of most of the symbols it receives. Let us denote the sequence of symbols arriving at the receiving end by $(m'_1, s'_1), (m'_2, s'_2), \ldots, (m'_{n'}, s'_{n'})$. We show the following.
\begin{lemma}\label{lem:global-decoding}
There exists an algorithm for the receiving party that, having 
$(m'_1, s'_1), \ldots, (m'_{n'}, s'_{n'})$
and 
$s_1, \ldots, s_n$,
guesses the \emph{position} of all received symbols in the sent string such that positions of all but $O(n\sqrt{\eps})$ of the symbols that are not deleted in the channel are guessed correctly. This algorithm runs in $O_\eps(n^2)$ time.
\end{lemma}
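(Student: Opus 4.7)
The plan is to align the received stream to the sent stream by computing a longest common subsequence (LCS) of the index symbols. Concretely, the receiver runs the standard $O(nn')$ dynamic program on $s'_1, \ldots, s'_{n'}$ and $s_1, \ldots, s_n$ to produce an LCS $M$, a monotone matching from received positions to sent positions. Every matched received symbol $j$ is declared to lie at its paired sent position $i(j)$, and each unmatched received symbol is assigned a position by interpolation between its nearest matched neighbors on either side. Since $n'$ is linear in $n$, the procedure runs in $O_\eps(n^2)$ time.

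Next I would set up the analysis via the \emph{ground truth} matching $T$ that pairs every non-deleted received symbol $j$ with its original position $t(j)$. Because insertion/deletion channels leave surviving symbols unchanged, $s'_j = s_{t(j)}$, and $T$ is a monotone common subsequence of length $n - \ell$ (with $\ell$ the number of deletions), so $|M| \ge |T| = n - \ell$. Let $W$ be the set of non-deleted received symbols matched by $M$ to some $i(j) \ne t(j)$. The two integer sequences $(t(j))_{j \in W}$ and $(i(j))_{j \in W}$ are both monotone (as $T$ and $M$ are), pointwise distinct (by the definition of $W$), and satisfy $s_{t(j)} = s'_j = s_{i(j)}$. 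If $|W| \ge \lfloor n\eps \rfloor$, this contradicts the $\eps$-self-matching property of $s$; hence $|W| < n\eps$.

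The main step, and the one I expect to be the principal obstacle, is to bound the set $U$ of non-deleted received symbols that $M$ leaves unmatched, since these are the other potential sources of wrong guesses. Maximality of the LCS forces, for each $j \in U$, the existence of a \emph{blocker}: either the sent position $t(j)$ is already used by some pair $(j', t(j)) \in M$ with $j' \in W \cup I_M$ (where $I_M$ denotes the matched insertions), or adding $(j, t(j))$ to $M$ would break monotonicity against a pair in $W \cup I_M$. A charging argument from $U$ into $W \cup I_M$, combined with the observation that $\eps$-self-matching forces every alphabet symbol to appear at most $\lfloor n\eps \rfloor$ times in $s$ (the first and last $\lfloor n\eps \rfloor$ occurrences of a single symbol, paired elementwise, would otherwise constitute a forbidden non-aligned self-match), controls both the multiplicity of blockers and the size of $|I_M|$. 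Balancing $|W|$ and $|U|$ through an alignment-offset threshold parameter optimized at $\Theta(\sqrt{\eps})$ then yields $|W| + |U| = O(n\sqrt{\eps})$, exactly the claimed count of misguessed non-deleted symbols.

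The hard part is precisely this last balance: the one-sided bound $|W| < n\eps$ is already stronger than what the lemma asks, but it does not in itself control $|U|$, and an aggressive LCS can displace genuine non-deleted symbols in favor of insertion matches. One therefore has to invoke the pseudo-random property of $s$ a second time — once for wrong matches and once for overambitious insertion matches — and the combination of these two uses is what produces the $\sqrt{\eps}$ exponent. The $O_\eps(n^2)$ running time is then immediate from the LCS dynamic program.
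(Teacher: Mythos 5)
Your bound on the wrongly matched symbols is exactly right and coincides with half of the paper's analysis: for $j\in W$ the pairs $(t(j),i(j))$ are pointwise distinct, both coordinates are monotone, and $s_{t(j)}=s_{i(j)}$, so they form a non-aligned self-matching of $s$ and $|W|<n\eps$. The genuine gap is the other half. A single LCS gives you no control over the set $U$ of non-deleted symbols it leaves unmatched: the adversary can insert symbols that imitate stretches of $s$, the dynamic program is free to match those insertions in place of the genuine symbols, and the only a priori bound on $|U|$ is $|U|\le|I_M|$, i.e.\ the number of matched insertions, which can be as large as $\Theta(n\delta)$ and has nothing to do with $\eps$. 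Your interpolation step then assigns those displaced genuine symbols essentially arbitrary positions, so they must all be counted as potential errors. The proposed charging argument into $W\cup I_M$ and the ``alignment-offset threshold optimized at $\Theta(\sqrt{\eps})$'' do not close this: you have correctly located the obstacle, but the mechanism you gesture at is not the one that removes it, and the observation that each alphabet symbol occurs at most $\lfloor n\eps\rfloor$ times in $s$ does not bound $|I_M|$ in any useful way.

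The paper's mechanism is iteration rather than interpolation. The decoder computes the LCS between $s$ and the (remaining) received index symbols, records positions for the matched symbols, \emph{removes} those received symbols, and repeats $1/\sqrt{\eps}$ times; whatever is still unassigned at the end is simply declared an error. If $k$ genuine symbols survive all rounds unassigned, then in every round those $k$ symbols themselves formed a common subsequence with $s$ of length $k$ (via the ground-truth matching), so every round's LCS had length at least $k$; since the rounds consume pairwise disjoint received symbols, $k\cdot\frac{1}{\sqrt{\eps}}\le n'\le 2n$, giving $k\le 2\sqrt{\eps}\,n$. The self-matching bound of $n\eps$ wrong assignments that you proved applies per round, contributing $n\eps\cdot\frac{1}{\sqrt{\eps}}=n\sqrt{\eps}$ in total. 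The exponent $\sqrt{\eps}$ thus arises from balancing the number of rounds against these two opposing contributions, and the running time stays $O_\eps(n^2)$ because only $O_\eps(1)$ quadratic-time LCS computations are performed.
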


Note that if no error occurs, the receiver expects the index portion of the received symbols to be similar to the $\eps$-self-matching string $s$. Having this observation, we present the decoding procedure in \cref{alg:global-decoder}.
The decoding algorithm calculates the longest common subsequence (LCS) between the synchronization string, $s$, and the index portion of the received string, $s'$. It then assigns each of the symbols from the received string that appear in the common subsequence to the position of the symbol from $s$ that corresponds to it under the common subsequence.
The algorithm repeats this procedure $1/\sqrt{\eps}$ times and after each round eliminates received symbols whose positions are guessed.

\begin{algorithm}
\caption{Insertion-Deletion Decoder}
\begin{algorithmic}[1]\label{alg:global-decoder}
\REQUIRE $s$, $(m'_1, s'_1), \cdots, (m'_{n'}, s'_{n'})$
\smallskip

\STATE $L = \left[s'_1, s'_2, \cdots, s'_{n'}\right]$
\FOR {$i = 1$ to  $n'$}
\STATE ${Position}[i] \leftarrow \texttt{Undetermined}$
\ENDFOR

\smallskip

\FOR {$i = 1$ to  $\frac{1}{\sqrt{\eps}}$}
\STATE Compute $\LCS(s, L)$
\FORALL{Corresponding $s[i]$ and $L[j]$ in $\LCS(s, L)$}\label{line:LCS-pairs}
\STATE $Position[j] \leftarrow i$
\ENDFOR
\STATE Remove all elements of $\LCS(s, L)$ from $L$
\ENDFOR

\smallskip

\ENSURE $Position$
\end{algorithmic}
\end{algorithm}

\begin{proof}[Proof of \cref{lem:global-decoding}]
Clearly, \cref{alg:global-decoder} takes quadratic time as it mainly runs $O_\eps(1)$ instances of {\LCS} computation over strings of length $O(n)$. 

To prove the correctness guarantee, we remark that there are two types of incorrect guesses for symbols that are not deleted by the adversary and bound the number of incorrect guesses of each type.
\begin{enumerate}
    \item[I)] \emph{The position of the received symbol remains \texttt{Undetermined} by the end of the algorithm:}
    Note that if by the end of the algorithm there are $k$ original symbols--i.e., symbols that are originally sent by the sender and not inserted by the adversary--that have undetermined positions, then the remainder of $L$ after $1/\sqrt{\eps}$ rounds has a common subsequence of size $k$ with $s$. This implies that, in each round of the for loop, $|\LCS(s, L)| \geq k$. Note the total size of these {\LCS}s cannot exceed the initial size of $L$ that is $n'$. Therefore,
    $k\cdot\frac{1}{\sqrt{\eps}} \leq n' \leq 2n \Rightarrow k \leq 2\sqrt{\eps}n$.
    \item[II)] \emph{The position of the received symbol is incorrectly guessed in one recurrence of the for loop:} We claim that the number of such wrong assignments in each round of the for loop is no more than $n\eps$. Let $s[i]$ and $L[j]$ be corresponding elements under $\LCS(s, L)$ in Line~\ref{line:LCS-pairs} while the received symbol that $L[j]$ identifies is the $i'$th symbol sent by the sender. This implies that $s[i] = L[j] = s[i']$. If there are more than $n\eps$ such incorrect guesses in one {\LCS} computation, we have $n\eps$ such pairs of identical symbols in $s$ that constitute a self-matching of size $n\eps$ in $s$ and violate the assumption of $s$ being an $\eps$-self-matching string. Therefore, overall there are no more than $\frac{1}{\sqrt{\eps}}\cdot n\eps = n\sqrt{\eps}$ incorrect determination of the original positions of received symbols.\qedhere
\end{enumerate}
\end{proof}

\subsubsection{Codes Approaching the Singleton Bound}
We now use the discussions on $\eps$-self-matching strings and \cref{lem:global-decoding} to construct efficient synchronization codes that can approach the Singleton bound. 

\begin{theorem}\label{thm:main-1}
For any $\eps>0$, $\delta \in (0,1)$, and sufficiently large $n$, there exists an encoding map $E: \Sigma^k \rightarrow \Sigma^n$ and a decoding map $D: \Sigma^* \rightarrow \Sigma^k$, such that, if $\ED(E(m),x) \leq \delta n$ then $D(x) = m$. Further, the rate is $\frac{k}{n} > 1 - \delta - \eps$, $|\Sigma|=\exp(1/\eps)$, and $E$ and $D$ are explicit and can be computed in linear and quadratic time in $n$. 
\end{theorem}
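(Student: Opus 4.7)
My plan is to convert insertion--deletion errors into Hamming-type (half-)errors via the indexing scheme of \cref{sec:indexing-scheme}, then clean them up with an efficient near-MDS outer code. Concretely, I fix $\eps_s = \Theta(\eps^2)$ so that the $O(n\sqrt{\eps_s})$ slack of \cref{lem:global-decoding} is at most $\eps n/4$, and take an $\eps_s$-self-matching string $s$ of length $n$ over an alphabet $\Sigma_s$ of size $O(\eps_s^{-2})$ (whose existence follows from the probabilistic argument of \cref{sec:pseudo-random-property}, and whose explicit construction follows by standard derandomization). For the outer code, I take an efficient near-MDS Hamming code $C_{\text{out}} : \Sigma_m^{k'} \to \Sigma_m^n$ over a constant alphabet $\Sigma_m$ with rate $\geq 1-\delta-\eps/2$ that corrects a $(\delta + \eps/2)$-fraction of half-errors; such codes are standard over any sufficiently large constant alphabet (e.g., Reed--Solomon over a field extension concatenated with a brute-force near-MDS inner code, or an AG code). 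I set $\Sigma = \Sigma_m \times \Sigma_s$ and $E(m) = C_{\text{out}}(m) \times s$, with $|\Sigma_m|$ chosen large enough that $\log|\Sigma_s| \leq \tfrac{\eps}{2}\log|\Sigma_m|$, which keeps $|\Sigma| = \exp(1/\eps)$.

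To decode $x$, I split each received channel symbol into its $(\Sigma_m,\Sigma_s)$-coordinates, run \cref{alg:global-decoder} on the index-coordinates to obtain a guessed position in $[n]$ for every received symbol, and assemble $r \in (\Sigma_m \cup \{?\})^n$ by placing each message-coordinate at its guessed position (marking conflicts and unfilled positions as erasures), before applying the outer half-error decoder to $r$. The main technical claim is that $r$ is within $(\delta + \eps/2)n$ half-errors of $C_{\text{out}}(m)$: writing $D$ and $I$ for the numbers of deletions and insertions (so $D+I \leq \delta n$), \cref{lem:global-decoding} says at most $O(n\sqrt{\eps_s})$ non-deleted originals are misplaced, so at most $D + O(n\sqrt{\eps_s})$ positions in $r$ lack a correctly-pinned symbol; the $I$ insertions together with the same misplacements can each corrupt at most one further position, contributing $I + O(n\sqrt{\eps_s})$ more half-errors. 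Summing gives a half-error count of at most $(D+I) + O(n\sqrt{\eps_s}) \leq \delta n + \eps n/2$, and the outer decoder then recovers $m$.

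The resulting rate in bits per channel symbol is $R_{\text{out}} \cdot \log|\Sigma_m|/\log|\Sigma| \geq (1-\delta-\eps/2)(1-\eps/2) \geq 1-\delta-\eps$; encoding is linear (outer encoding plus appending the precomputed $s$), and decoding is $O_\eps(n^2)$ (dominated by the LCS steps in \cref{alg:global-decoder} and by the outer decoder). The delicate part I expect to be the main obstacle is the half-error bookkeeping in the second paragraph: a naive charge that treats each insertion as a full substitution (two half-errors) would only support rate $1-2\delta-\eps$. Getting the tight $(\delta+\eps/2)n$ bound requires arguing that across the $D+I\leq \delta n$ insdel errors each contributes at most one half-error to $r$ on average --- an argument that ultimately rests on the $\eps_s$-self-matching property of $s$ together with a careful treatment of ``solo wrong pins.''
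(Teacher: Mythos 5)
Your proposal follows essentially the same route as the paper's proof: index a near-MDS half-error code with an $\eps_s=\Theta(\eps^2)$-self-matching string, reposition with \cref{alg:global-decoder}, and do exactly the half-error accounting the paper does (each deletion and each insertion contributes one half-error, each of the $O(n\sqrt{\eps_s})$ misplacements contributes $O(1)$), before handing the result to the outer decoder. Your worry in the last paragraph is resolved the same way the paper resolves it: an insertion pinned to position $i$ either collides with a correctly pinned symbol (turning a detection into an erasure, one half-error) or lands in an otherwise empty/erased position (turning an erasure into a possible substitution, again one additional half-error), so insertions never cost two half-errors each.

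The one concrete gap is your instantiation of the outer code. The theorem demands that $E$ run in linear time, and the examples you offer (Reed--Solomon over an extension field concatenated with a brute-force inner code, or AG codes) are not linear-time encodable; you assert ``encoding is linear'' but your cited constructions do not deliver it. The paper uses the Guruswami--Indyk linear-time encodable and decodable near-MDS codes (\cref{thm:guruswami-indyk-near-MDS-codes}) precisely for this reason, and their alphabet size $2^{O(\eps^{-4}\log(1/\eps))}$ is also what makes the index overhead $\log|\Sigma_s|/\log|\Sigma_m|$ negligible and keeps $|\Sigma|=\exp(\poly(1/\eps))$. Substituting that family for your generic outer code closes the gap; the rest of your argument goes through as written.
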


We use $\ED(x, y)$ to denote the edit distance between $x$ and $y$.
Note that the indexing scheme from \cref{sec:indexing-scheme} and \cref{lem:global-decoding} essentially gives a way to reduce insertions and deletions to symbol substitutions and erasures at the cost of designating a portion of the message symbols to an $\eps$-self-matching string. More precisely, with the indexing scheme from \cref{sec:indexing-scheme} in place, a receiver can use \cref{alg:global-decoder} to guess the position of the symbols it receives in the sent message and rearrange them to recover the message sent by the sender. 

Let $\tilde{m}$ denote the recovered message and $Position$ denote the output of \cref{alg:global-decoder}.
More precisely, for any $1\leq i\leq n$, the decoder sets $\tilde{m}[i] = j$ if only for one value of $j$, $Position[j] = i$. If there are zero or multiple received symbols that are guessed to be at position $i$, the decoder simply decides $\tilde{m}[i] = \texttt{?}$.

We claim that $\tilde{m}$ is different from $m$ by no more than $n(\delta + 12\sqrt{\eps})$ half-errors. Note that if an adversary applies no errors and \cref{alg:global-decoder} guesses the positions perfectly, $\tilde{m} = m$. In the following steps, we add these imperfections and see the effect in the Hamming distance between $m$ and $\tilde{m}$.
\begin{itemize}
    \item Each deleted symbol turns a detection in $\tilde{m}$ to a \texttt{?} and, therefore, adds one half-error to the Hamming distance between $m$ and $\tilde{m}$.
    \item Each inserted symbol can either turn a detection in $\tilde{m}$ to a \texttt{?} or a \texttt{?} to an incorrect value. Therefore, each insertion also adds one half-error to the Hamming distance between $m$ and $\tilde{m}$.
    \item Each incorrectly guessed symbol can also change up to two symbols in $\tilde{m}$ and therefore increase the Hamming distance between $m$ and $\tilde{m}$ by up to four.
\end{itemize}
This implies that the $m$ and $\tilde{m}$ are far apart by no more than $n(\delta+12\sqrt\eps)$. Having this reduction, we derive codes promised in \cref{thm:main-1} by taking the following near-MDS codes from \cite{guruswami2005linear} and indexing their codewords with a self-matching string.
\begin{theorem}[Guruswami and Indyk~{\cite[Theorem 3]{guruswami2005linear}}]\label{thm:guruswami-indyk-near-MDS-codes}
For every $r$, $0 < r < 1$, and all sufficiently small $\varepsilon > 0$, there exists an explicitly
specified family of GF(2)-linear (also called additive)
codes of rate $r$ and relative distance at least
$(1-r-\eps)$ over an alphabet of size $2^{O(\eps^{-4}r^{-1} \log(1/\eps))}$ such that codes from the family can be encoded
in linear time and can also be (uniquely) decoded in linear time from a fraction $e$ of errors and $s$
of erasures provided $2e + s \leq (1-r-\eps)$.
\end{theorem}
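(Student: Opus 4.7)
The plan is to combine code concatenation with expander-based amplification, in the style of Alon--Luby and Spielman, and to push the parameters of the inner code by exhaustive search so that the composite code approaches the Singleton bound. The construction will have two layers: a small inner code $C_{\text{in}}$ selected by brute force that is already near-MDS over a moderate alphabet, and an outer expander-based structure that lifts the block length to $n$ without sacrificing rate or distance while enabling linear-time algorithms.

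First I would build $C_{\text{in}}$. A random $\mathbb{F}_2$-linear code of block length $n_0 = \Theta(\eps^{-2} r^{-1})$ over an alphabet $\Sigma$ achieves rate $r+\eps/4$ and relative distance at least $1-r-\eps/4$ with positive probability, by a Gilbert--Varshamov-style argument applied at the Singleton bound for large alphabets. Since $n_0$ is a constant (depending only on $\eps, r$), an exhaustive search over $\mathbb{F}_2$-linear codes of block length $n_0$ finds such a code in constant time, and constant-size syndrome tables allow $O(1)$-time encoding and unique decoding from any pattern of $e$ errors and $s$ erasures with $2e + s \leq 1-r-\eps/4$. The alphabet size $|\Sigma| = 2^{O(\eps^{-4} r^{-1} \log(1/\eps))}$ arises as the size needed for the probabilistic argument (ensuring existence at this sharp a rate--distance trade-off) to succeed at block length $n_0$.

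Next I would use a $d$-regular bipartite expander $G$ with $d = O(1/\eps)$ and second eigenvalue a small $\eps$-dependent fraction of $d$ to organize the outer layer. I partition the length-$n$ codeword into $n/n_0$ blocks, and for every right vertex of $G$ require that the tuple of symbols from its $d$ left-neighbor blocks lies in $C_{\text{in}}$. Because $C_{\text{in}}$ is $\mathbb{F}_2$-linear, so is the resulting composite code. An expander-mixing-lemma calculation then shows that any nonzero codeword activates a large enough set of inner blocks for its global relative weight to be at least $1-r-\eps$, and a rate accounting that balances the $O(1/\eps)$-degree overhead against the rate of $C_{\text{in}}$ gives overall rate at least $r$.

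Finally I would argue linear-time encoding and decoding. Encoding is a direct evaluation of the parity constraints, costing $O(d \cdot n/n_0) = O(n)$. For decoding I would run $O(\log n)$ rounds of the Spielman-style outer loop: in each round, run the $O(1)$-time inner decoder of $C_{\text{in}}$ on every block, flagging any block on which the local decoder rejects; then use the unique-neighbor property of the expander to resolve flagged blocks by consistency with their unflagged neighbors. The main obstacle is handling errors and erasures simultaneously with the combined budget $2e+s \leq 1-r-\eps$: one must show that the weighted quantity $2e+s$, summed over surviving blocks, strictly decreases across rounds, so that it never exceeds the inner-code correction radius on any block. This requires a potential-function argument tracking how a flagged block's mislabeled symbols redistribute among its expander neighbors, combined with the expander-mixing lemma to bound the cross-terms; this is the step where the precise exponents in the alphabet size, in particular the $\eps^{-4}$ factor, are fixed, because the potential argument demands the inner code to succeed with a comfortable slack of $\Theta(\eps)$ on both the rate and the distance.
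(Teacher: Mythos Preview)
The paper does not prove this statement at all; it is quoted verbatim as Theorem 3 of Guruswami and Indyk and used purely as a black box in the subsequent construction of insertion-deletion codes (see the proof of \cref{thm:main-1}). There is therefore no ``paper's own proof'' to compare your proposal against.

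As a sketch of the actual Guruswami--Indyk construction, your outline is in the right spirit (brute-force inner codes, expander-based redistribution, Spielman-style iterative decoding) but is too loose in several places to stand as a proof. First, the actual construction is more layered than a single expander composed with a brute-force inner code: it builds near-MDS codes recursively, using an already-constructed near-MDS code over a smaller alphabet as the outer code in a concatenation, with expanders used to redistribute symbols so that errors are spread evenly across inner blocks; the specific alphabet exponent $\eps^{-4}r^{-1}\log(1/\eps)$ comes out of balancing the parameters across these layers, not from a single GV-type existence argument at block length $\Theta(\eps^{-2}r^{-1})$ as you suggest. Second, your decoding description does not actually yield linear time as written: running the inner decoder on \emph{every} block in each of $O(\log n)$ rounds is $O(n\log n)$; the genuine linear-time argument restricts work in later rounds to the shrinking set of still-unsettled positions so that the total work forms a geometric series. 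Third, the claimed potential-function argument for the mixed $2e+s$ budget is asserted rather than carried out, and this is exactly the delicate part of the Guruswami--Indyk analysis. If you want to supply a proof here rather than a citation, you would need to fill in all three of these.
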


\begin{proof}[Proof of \cref{thm:main-1}]
Let $C$ be a code from \cref{thm:guruswami-indyk-near-MDS-codes} with relative distance $\delta_{{C}} = \delta + \frac{\eps}{3}$ and rate $1-\delta_C -\eps_{C}$ for $\eps_C=\frac{\eps}{3}$ and $S$ be an $\eps_S$-self-matching string with parameter $\eps_S = \frac{\eps^2}{36^2}$. We construct code ${C}'$ by simply taking the code ${C}$ and indexing each codeword of it with $S$. We claim that the resulting code satisfies the properties promised in the statement of the theorem.

We start with showing the decoding guarantee by describing the decoder. Note that a decoder can use the procedure described in \cref{alg:global-decoder} to use the index portion of codewords to reconstruct the codeword by up to a $\delta + 12\sqrt{\eps'} = \delta + \frac{\eps}{3} = \delta_{{C}}$ fraction of half-errors. The decoder then simply feeds the resulting string into the decoder of ${C}$ to fully recover the original string. The encoding and decoding complexities of ${C}'$ follow from the fact that ${C}$ is encodable and decodable in linear time and that \cref{alg:global-decoder} runs in quadratic time.

We finish the proof with proving the rate guarantee. Note that $\Sigma_{C'} = \Sigma_C \times \Sigma_S$.
\begin{eqnarray}
r_{C'} &=& \frac{|C'|}{n \log |\Sigma_{C'}|} 
= \frac{|C|}{n \log \left(|\Sigma_C|\times |\Sigma_S|\right)}\nonumber\\ 
&=& 
r_C \cdot \frac{\log |\Sigma_C|}{\log |\Sigma_C| + \log |\Sigma_S|} = \frac{r_C}{1 + \frac{\log |\Sigma_S|}{\log |\Sigma_C|}}\label{eqn:near-MDS-rate}
\end{eqnarray}
Note that the discussion in \cref{sec:pseudo-random-property} implies that $S$ can be over an alphabet of size $|\Sigma_S|=O(\eps^{-2})$ and \cref{thm:guruswami-indyk-near-MDS-codes} gives that $\log |\Sigma_C|=\omega(\eps^{-4}\log 1/\eps)$. Thus, $\frac{\log |\Sigma_S|}{\log |\Sigma_C|} = O(\eps^{-4})$, which plugged in \eqref{eqn:near-MDS-rate} implies that $r_{C'} \geq \frac{r_C}{1+O(\eps^{4})} \geq r_C - \frac{\eps}{3}$. Therefore, since the rate of code ${C}$ is $r_C = 1-\delta_{{C}}-\eps_{{C}} = 1-\delta-\frac{2}{3}\eps$, the rate of the code ${C}'$ is at least $r_{C'} = 1-\delta-\frac{2}{3}\eps - \frac{\eps}{3} = 1-\delta-\eps$.
\end{proof}

Note that the alphabet size of the codes from \cref{thm:main-1} is exponentially large in terms of $\eps^{-1}$. This is in sharp contrast to the Hamming error setting where there are codes known that can get $\eps$ close to unique decoding capacity with alphabets of polynomial size in terms of $1/\eps$. While large alphabet sizes might seem as an intrinsic weakness of the indexing-based code constructions, it turns out that an exponentially large alphabet size is actually necessary. We present the following theorem from~\cite{haeupler2018synchronization4} that shows any such code requires exponentially large alphabet size in terms of $\exp(\eps^{-1})$.


\begin{theorem}\label{cor:unique-alphabetSize}
There exists a function $f:(0,1) \to (0,1)$ such that for every $\delta,\eps>0$, every
family of insertion-deletion codes of rate $1-\delta-\eps$ that can be uniquely decoded from $\delta$-fraction of synchronization errors must have alphabet size $q \geq \exp\left(\frac{f(\delta)}{\eps}\right)$.
\end{theorem}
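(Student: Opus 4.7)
My plan is to reduce the theorem to a volume/packing argument in the space of length-$(1-\delta)n$ subsequences. Since unique decoding from a $\delta$-fraction of synchronization errors is strictly stronger than unique decoding from $\delta n$ pure deletions, the hypothesized code $C$ must have pairwise disjoint $\delta n$-deletion balls: for each codeword $c \in C$, let $D(c) \subseteq \Sigma^{(1-\delta)n}$ denote the set of distinct length-$(1-\delta)n$ subsequences of $c$; then $D(c)\cap D(c') = \emptyset$ whenever $c \neq c'$, and consequently
$$\sum_{c \in C} |D(c)| \;\leq\; q^{(1-\delta)n}.$$

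The heart of the argument is to lower bound $|D(c)|$ for essentially every codeword by its combinatorial maximum $\binom{n}{\delta n} \approx 2^{H(\delta) n}$, where $H$ is the binary entropy function. The intuition is that the $\binom{n}{\delta n}$ index sets of size $\delta n$ collide (produce the same deleted substring) only when $c$ is highly structured---has very few runs or draws from a very small subset of $\Sigma$. The total count of such ``degenerate'' strings of length $n$ is at most $2^{o(n)}\cdot q^{o(n)}$, which is negligible compared with $M = q^{(1-\delta-\eps)n}$ for large $n$. Discarding such codewords costs at most a subexponential factor in $|C|$, so we may assume $|D(c)| \geq 2^{H(\delta) n - o(n)}$ for every remaining codeword.

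Plugging this into the volume inequality gives
$$q^{(1-\delta-\eps) n} \cdot 2^{H(\delta) n - o(n)} \;\leq\; q^{(1-\delta) n},$$
hence $q^{\eps n} \geq 2^{H(\delta) n - o(n)}$ and
$$q \;\geq\; \exp\!\bigl((H(\delta)\ln 2 - o(1))/\eps\bigr).$$
Taking $f(\delta) = \tfrac{1}{2} H(\delta) \ln 2$, which lies in $(0,1)$ on $(0,1)$, finishes the proof.

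The step I expect to be the main obstacle is the ``most codewords are non-degenerate'' claim, i.e.\ the structural lower bound $|D(c)| \geq 2^{H(\delta) n - o(n)}$ after possibly removing a negligible subset of codewords. Formalizing it requires either a direct enumeration of strings with few distinct subsequences (e.g.\ parameterized by the number of runs, where each extra run can only multiply the subsequence count by a bounded factor) or an averaging/random-restriction argument paired with a bound on the maximum multiplicity with which a single short string can embed as a subsequence of $c$. Both routes are delicate, but either should suffice for the quantitative bound; a slightly weaker $f(\delta)$ is acceptable since the theorem only asserts existence of some positive $f$.
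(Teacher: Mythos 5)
Your reduction to disjoint deletion balls and the inequality $\sum_{c\in C}|D(c)|\le q^{(1-\delta)n}$ are fine, but the central structural claim --- that all but a $2^{o(n)}q^{o(n)}$-sized ``degenerate'' set of strings satisfy $|D(c)|\ge \binom{n}{\delta n}^{1-o(1)}=2^{H(\delta)n-o(n)}$ --- is false for any fixed alphabet size $q$, and in fact fails for essentially \emph{every} string, not just a few structured ones. First, a sanity check: for $q=2$ and $\delta=1/2$ the claimed lower bound $2^{H(1/2)n}=2^{n}$ exceeds the total number $2^{n/2}$ of binary strings of length $n/2$, so no string whatsoever satisfies it. More generally, the number of distinct length-$m$ subsequences of $c$ equals the number of position sets $p_1<\cdots<p_m$ that are \emph{canonical} (greedy leftmost) embeddings, i.e., such that the symbol $c_{p_j}$ does not occur strictly between $p_{j-1}$ and $p_j$. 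For a uniformly random $c\in[q]^n$ a fixed position set with $p_m-m=i$ skipped positions is canonical with probability exactly $(1-1/q)^{i}$, so
$$\mathbb{E}\bigl[|D(c)|\bigr]=\sum_{i=0}^{\delta n}\binom{m-1+i}{m-1}\Bigl(1-\tfrac1q\Bigr)^{i}\;\le\;(\delta n+1)\binom{n}{\delta n}\Bigl(1-\tfrac1q\Bigr)^{\delta n}=2^{H(\delta)n-\Theta(\delta n/q)},$$
which is smaller than $2^{H(\delta)n}$ by a factor exponential in $n$ whenever $q$ is a constant. By Markov's inequality all but an exponentially small fraction of strings obey this deficit, so the ``degenerate'' set you hope to discard is essentially all of $\Sigma^n$. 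Neither of your proposed repairs helps: the run-counting route cannot reach $2^{H(\delta)n}$ even for the extremal (alternating or cyclic) string when $q$ is small, and an averaging argument gives an upper tail, whereas what your packing step needs is a \emph{lower} tail (few codewords with too few subsequences), which is the genuinely hard direction. Even with the corrected count $2^{H(\delta)n-\delta n\log\frac{q}{q-1}}$ you would only get $\eps\log q\gtrsim H(\delta)-\delta\log\frac{q}{q-1}$, whose right-hand side can be negative for small $q$ and large $\delta$, so that regime would need a separate argument.

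For contrast, the paper's proof avoids subsequence counting entirely. It fixes the adversary that deletes every occurrence of the $\delta q$ least frequent symbols of the codeword (at most $\delta n$ deletions), observes that the received word then lives in a set of at most $\binom{q}{(1-\delta)q}\,\bigl((1-\delta)q\bigr)^{(1-\delta)n}$ strings, and uses unique decodability to conclude $|C|$ is at most this quantity; comparing logarithms gives $(1-\delta)\log\frac{1}{1-\delta}\le \eps\log q$, i.e., $f(\delta)=(1-\delta)\ln\frac{1}{1-\delta}$. The point there is that deletions shrink the \emph{effective alphabet} of the surviving symbols --- a three-line counting argument with no tail bounds needed. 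If you want to salvage a packing proof along your lines, you would have to replace $\binom{n}{\delta n}$ by the correct extremal/typical subsequence counts and prove a lower-tail structural lemma, which is considerably more work than the theorem requires.
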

\begin{proof}[Proof Sketch]
For simplicity, assume that $\delta = \frac{d}{q}$ for some integer $d$. Consider a code of block length $n$ and an adversary that always deletes all occurrences of the $d$ least frequent symbols. With such an adversary, the string received on receiver's side will be a string of length $n(1-\delta)$ over an alphabet of size $q-d=q(1-\delta)$. This means that there are a total of 
$M={q \choose q-d}(q-d)^{n(1-\delta)}$ possible strings that may arrive at the receiver's end which implies that the rate of any such code is no more than
$$\frac{\log M}{n\log q}
=(1-\delta) \left(1+\frac{\log (1-\delta)}{\log q}\right)+o(1).$$
Therefore, to achieve a rate of $1-\delta-\eps$, 
\begin{eqnarray*}
&&1-\delta-\eps \leq (1-\delta) \left(1+\frac{\log (1-\delta)}{\log q}\right)\\
&\Rightarrow& (1-\delta)\frac{\log\frac{1}{1-\delta}}{\log q} \leq \eps 
\Rightarrow q \geq e^{(1-\delta)\log\frac{1}{1-\delta}}
\end{eqnarray*}
For the general case where $\delta q$ is not necessarily an integer, a similar, more careful argument proves the theorem. (See~\cite{haeupler2018synchronization4}.)
\end{proof}

The alphabet reduction idea used in the proof of \cref{cor:unique-alphabetSize} shows that deletions, in addition to reducing the information by eliminating symbols, reduce the information by essentially decreasing the information content of surviving symbols; suggesting that designating a part of each symbol to synchronization strings is not a waste of information.
A similar alphabet reduction argument is used in \cite{haeupler2018synchronization4,haeupler-list-dec-capacity2020} to derive strong upper-bounds on the zero-error list-decoding capacity of adversarial insertion-deletion channels.

\subsection{Near-Linear Time Codes}
In \cref{sec:technical-warm-up}, we presented a way to construct synchronization codes that approach the Singleton bound by taking a near-MDS error-correcting code and indexing its codewords with self-matching strings. 
In this section, we explain how the decoding complexity of such codes can be reduced to near-linear time.

The main idea is to replace the $\eps$-self matching string with one that satisfies a stronger pseudo-random property that allows for a near-linear time repositioning algorithm. We will thoroughly explain the construction of such a string and its repositioning algorithm in \cref{sec:edit-distance-approx-and-near-linear-repositioning}. We forward reference the properties of this string in the following theorem and defer the details to \cref{sec:edit-distance-approx-and-near-linear-repositioning}.

\begin{theorem}[\cref{lem:enhanced-sync-string-decoding} with $\eps_I=\frac{2\eps}{9}$, 
$\eps_s=\frac{\eps^2}{18}$, $K=\frac{6}{\eps}$, $\gamma=1$]\label{lem:enhanced-sync-string-summary}
For any $\eps >0$, there exist strings of any length $n$ over an alphabet of size $\exp\left(\frac{\log(1/\eps)}{\eps^3}\right)$ that, if used as an index in a synchronization channel with $\delta$ fraction of errors, enables a repositioning in $O_\eps(n\poly(\log n))$ time which guarantees no more than $n\eps$ incorrect guesses.
\end{theorem}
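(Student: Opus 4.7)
The plan is to replace the $\eps$-self-matching string from \cref{sec:technical-warm-up} by a richer pseudo-random object that simultaneously rules out long non-aligned identical subsequences \emph{and} admits a near-linear time alignment procedure against corrupted streams. The natural candidate is to combine a strengthened self-matching-type property with the edit-distance indexing string from \cite{haeupler2019near}: for an appropriate string $I$, concatenating symbol-wise yields $S \times I$ whose edit distance to any stream can be $(1+\eps_I)$-approximated in time $O_\eps(n \cdot \mathrm{poly}(\log n))$. Both components can be realized over alphabets whose sizes multiply to $\exp(\log(1/\eps)/\eps^3)$, matching the stated bound.

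First I would replace the iterative LCS-removal decoder of \cref{alg:global-decoder} by a single invocation of the near-linear edit-distance approximator, interpreted as an alignment of the received index stream to the sent index stream. Each surviving received symbol is assigned the index position to which the alignment matches it, or marked undetermined otherwise. Since the channel introduces at most $\delta n$ synchronization errors, the true edit distance between the two streams is at most $\delta n$, and the approximation therefore produces an alignment whose total cost is at most $(1+\eps_I)\delta n$. This directly bounds the number of unmatched or cross-matched positions up to a multiplicative slack, and eliminates the $1/\sqrt{\eps}$-round structure that forced the original decoder into quadratic time.

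Next I would bound the number of misassigned positions by a three-way argument analogous to the two-case analysis in the proof of \cref{lem:global-decoding}. Every incorrect guess must be charged either to (a) a genuine channel error, (b) a position where the approximate alignment deviates from the optimal one, bounded by the $\eps_I$-slack, or (c) a position whose assigned index differs from its true origin, which would force a long non-aligned identical subsequence inside the index string and is ruled out by the strengthened pseudo-random property down to density $\eps_s$. Plugging in $\eps_I = 2\eps/9$, $\eps_s = \eps^2/18$, $K = 6/\eps$, $\gamma = 1$ calibrates these three contributions so that their sum is at most $n\eps$.

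The main obstacle will be isolating the correct pseudo-random property of the index string, strong enough to forbid not merely exact long non-aligned common subsequences (as $\eps$-self-matching does) but the slightly relaxed near-matches that an approximate aligner may prefer over the true alignment, while still being satisfiable by strings over the claimed alphabet. This is where the generalized synchronization-string constructions of \cite{haeupler2017synchronization3} enter: their richer combinatorial guarantees are designed precisely to survive perturbations of the alignment that exact-LCS analysis is blind to. Once the property is formulated and its existence argued via a union bound of the same flavor as in \cref{sec:pseudo-random-property}, the remainder of the proof is bookkeeping over the three error terms above.
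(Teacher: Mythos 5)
Your choice of index string is the right one — the paper's \cref{lem:enhanced-sync-string-decoding} uses exactly the symbol-wise concatenation of an $\eps_s$-synchronization string (which supplies the self-matching property) with the edit-distance-approximating index of \cite{haeupler2019near}, and your alphabet accounting is correct. The gap is in the decoder: replacing the iterative LCS-removal of \cref{alg:global-decoder} by a \emph{single} invocation of the approximate aligner does not give the claimed $n\eps$ bound. A single (even exact) optimal alignment between the sent and received index streams can leave every inserted symbol, and hence up to $\Theta((\delta+\gamma)n)$ surviving symbols, unmatched: the aligner is free to match adversarially inserted copies in place of the genuine survivors. Those undetermined survivors are incorrect guesses (Type I in the proof of \cref{lem:global-decoding}), so your bound degrades to $O((1+\eps_I)(\delta+\gamma)n)$, which depends on the error rate and is far larger than $n\eps$. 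This is not a cosmetic loss: the downstream use in \cref{thm:unique-decodable-codes} needs the miss count to be $O(\eps n)$ \emph{independently of} $\delta$, since each incorrect guess costs up to four half-errors while a deletion costs one; an error-rate-dependent miss count is exactly what limits the online decoder of \cref{thm:online-reposition-misdecodings} to rate $1-3\delta-\eps^{O(1)}$.

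The parameter $K=6/\eps$ that you "plug in" at the end is precisely the number of rounds you discarded, which is a sign the round structure cannot be eliminated. The paper keeps the $K$-round peeling and only swaps the exact LCS in each round for the $(1+\eps_I)$-approximate one; the resulting guarantee is $n\bigl(\tfrac{1+\gamma}{K(1+\eps_I)}+\tfrac{\eps_I(1+\gamma/2)}{1+\eps_I}+K\eps_s\bigr)$, where the first term is the undetermined survivors after $K$ rounds (the counting argument "each round's common subsequence has size at least $k$, and the total cannot exceed $n'$" needs all $K$ rounds), the second is the approximation slack, and the third is $K$ rounds of at most $n\eps_s$ wrong matches each; the stated parameters make each term at most $n\eps/3$. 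One more small correction: your worry about needing a pseudo-random property strong enough to exclude "relaxed near-matches" is unnecessary. The approximate aligner still outputs a genuine monotone matching between identical symbols, so every wrong match inside its output is an exact non-aligned self-match, and the plain $\eps_s$-self-matching (here, synchronization) property already bounds these; the approximation error only affects how many symbols get matched at all, not the correctness analysis of the matches produced.
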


Using these strings in the code construction, the following can be achieved.

\begin{theorem}\label{thm:unique-decodable-codes}
For any $\eps>0$ and $\delta \in (0,1)$, and sufficiently large $n$, there exists an encoding map $E: \Sigma^k \rightarrow \Sigma^n$ and a decoding map $D: \Sigma^* \rightarrow \Sigma^k$, such that, if $\ED(E(m),x) \leq \delta n$ then $D(x) = m$. Further, $\frac{k}{n} > 1 - \delta - \eps$, $|\Sigma|=\exp\left(\eps^{-4}\log (1/\eps)\right)$, and $E$ and $D$ are explicit and can be computed in linear and near-linear time in terms of $n$ respectively. 
\end{theorem}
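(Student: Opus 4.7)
The plan is to follow the template of the proof of \cref{thm:main-1} verbatim, replacing only the two components whose running time is not already near-linear. Concretely, I would take the Guruswami--Indyk near-MDS code $C$ from \cref{thm:guruswami-indyk-near-MDS-codes} with relative distance $\delta_C=\delta+\tfrac{\eps}{3}$ and rate $1-\delta_C-\eps_C$ for $\eps_C=\tfrac{\eps}{3}$ as the outer code (it already has linear-time encoding and half-error decoding), and index its codewords symbol-by-symbol with the enhanced pseudo-random string $S$ supplied by \cref{lem:enhanced-sync-string-summary}, instantiated with parameter $\eps_I$ chosen small enough that the fraction of mispositioned symbols satisfies $12\eps_I\le \tfrac{\eps}{3}$. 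The resulting code $C'$ is the concatenation $C\times S$ over the product alphabet $\Sigma_C\times\Sigma_S$.

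Decoding proceeds in the same two stages as for \cref{thm:main-1}. First, run the near-linear-time repositioning of \cref{lem:enhanced-sync-string-summary} on the index portion of the received string to obtain a guess $\tilde m$ of the outer codeword with erasures and substitutions in place of insertions/deletions. By the accounting already done after \cref{thm:main-1}, each deletion costs one half-error, each insertion costs one half-error, and each of the at most $n\eps_I$ mispositioned symbols costs at most four, so $\tilde m$ differs from the true outer codeword in at most $\delta n + 12\eps_I n \le (\delta_C-\eps_C)n$ half-errors. Second, feed $\tilde m$ into the linear-time half-error decoder of $C$ from \cref{thm:guruswami-indyk-near-MDS-codes}, which recovers the message exactly because the half-error budget is within the code's correction radius $(1-r_C-\eps_C)n=(\delta_C-\eps_C)n$.

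The rate and alphabet computations parallel those in \cref{thm:main-1}. With the choices above, $|\Sigma_C|=2^{O(\eps^{-4}\log(1/\eps))}$ while $|\Sigma_S|=\exp(\eps_I^{-3}\log(1/\eps_I))=\exp(O(\eps^{-3}\log(1/\eps)))$, so the product alphabet has size $\exp(O(\eps^{-4}\log(1/\eps)))$, matching the stated bound, and the ratio $\log|\Sigma_S|/\log|\Sigma_C|=O(\eps)$ causes only an additional $\tfrac{\eps}{3}$ rate loss, giving total rate at least $1-\delta-\eps$ as in \eqref{eqn:near-MDS-rate}. Encoding is linear since both the outer code and the symbol-by-symbol indexing are linear. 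Decoding takes $O_\eps(n\,\mathrm{polylog}\,n)$ for the repositioning plus $O(n)$ for the outer decoder, i.e.\ near-linear overall.

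The only nontrivial step in this plan is the first one, namely the near-linear-time repositioning with a controlled number of mispositions promised in \cref{lem:enhanced-sync-string-summary}; everything else is bookkeeping on top of the reduction from synchronization errors to half-errors already established in \cref{sec:technical-warm-up}. The main obstacle, and the reason this result is deferred, is therefore constructing strings with the strengthened pseudo-random property and showing that an edit-distance-based alignment against them can be approximated in near-linear time rather than by the quadratic $\LCS$ computation used in \cref{alg:global-decoder}; that work is carried out in \cref{sec:edit-distance-approx-and-near-linear-repositioning}, and once it is in place the present theorem is essentially a corollary of \cref{thm:main-1}.
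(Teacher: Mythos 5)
Your proposal follows the paper's proof essentially verbatim: index a Guruswami--Indyk near-MDS code with the near-linear-time-repositionable string of \cref{lem:enhanced-sync-string-summary}, reuse the half-error accounting from \cref{thm:main-1}, and balance the two alphabet sizes so the rate loss from indexing is only $O(\eps)$. The one blemish is the algebra in the decoding step: the correction radius of $C$ is $(1-r_C-\eps_C)n=\delta_C n$, not $(\delta_C-\eps_C)n$, and it is this larger budget $\delta_C n=(\delta+\eps/3)n$ that absorbs the $\delta n+O(\eps_I)n$ half-errors, exactly as in the warm-up.
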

\begin{proof}[Proof Sketch]
%
To construct such codes with a given $\eps$, we use strings from \cref{lem:enhanced-sync-string-summary} with parameter $\frac{\eps}{4}$ as an index string. We then take code ${C}$ from \cite{guruswami2005linear} as a code with distance $\delta_{{C}} = \delta + \frac{\eps}{2}$ and rate $1-\delta_{{C}}-\frac{\eps}{4}$ over an alphabet of size $|\Sigma_{{C}}| \geq |\Sigma_S|^{4/\eps}$. Note that 
$|\Sigma_S|=\exp\left(\frac{\log (1/\eps)}{\eps^3}\right)$, therefore, the choice of $|\Sigma_{{C}}|$ is large enough to satisfy the requirements of \cite{guruswami2005linear}.
${C}$ is also encodable and decodable in linear time.

With this choice of ${C}$ and $S$, the same analysis as in \cref{sec:technical-warm-up} shows that the resulting synchronization code can be encoded in linear time, be decoded in $O_\eps(n\poly(\log n))$ time, corrects from any $\delta n$ insertions and deletions, achieves a rate of
$\frac{R_{{C}}}{1 + \frac{\log |\Sigma_S|}{\log |\Sigma_{{C}}|}}\ge \frac{1-\delta-3\eps/4}{1+\eps/4}\ge 1-\delta-\eps$, and is over an alphabet of size $\exp\left(\frac{\log (1/\eps)}{\eps^4}\right)$.
\end{proof}

\subsection{List Decoding: High Rate Codes}\label{sec:list-decodable-codes}

In this section, we review the results described in \cref{sec:main-result-list-dec} that, for every $0\leq \delta < 1$, every $0 \leq \gamma < \infty$ and every $\eps > 0$, gives list-decodable codes with rate $1 - \delta - \eps$, constant alphabet (so $q = O_{\delta,\gamma,\eps}(1)$), and sub-logarithmic list sizes. Furthermore, these codes are accompanied by efficient (polynomial time) decoding algorithms. We stress that the fraction of insertions can be arbitrarily large (more than 100\%), and the rate is independent of this parameter. Here is a formal statement of the result from \cite{haeupler2018synchronization4}.

\begin{theorem}\label{thm:list-decodable-code}
For every $0<\delta,\eps<1$ and $\gamma > 0$, there exist a family of list-decodable insertion-deletion codes that can protect against $\delta$-fraction of deletions and $\gamma$-fraction of insertions and achieves a rate of at least $1-\delta-\eps$ or more over an alphabet of size 
$\left(\frac{\gamma+1}{\eps^2}\right)^{O\left(\frac{\gamma+1}{\eps^3}\right)}=O_{\gamma, \eps}\left(1\right)$. These codes are list-decodable with lists of size $L_{\eps, \gamma}(n)= \exp\left(\exp\left(\exp\left(\log^* n\right)\right)\right)$, and have polynomial time encoding and decoding complexities.
\end{theorem}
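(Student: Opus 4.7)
I will mirror the concatenation strategy used in \cref{thm:main-1,thm:unique-decodable-codes} but replace each ingredient by its list-decoding analog. The outer code becomes a polynomial-time encodable and list-decodable Hamming code over a large (but constant) alphabet with rate at least $1-\delta-\eps/2$ that list-decodes up to a $\delta+O(\sqrt{\eps})$ fraction of half-errors; a folded Reed--Solomon-type construction that approaches list-decoding capacity with slowly-growing list size $L_{\mathrm{Ham}}(n)$ will serve. The $\eps$-self-matching string is replaced by a pseudo-random indexing string $S$ of length $n$ over an alphabet of size $((\gamma+1)/\eps^2)^{O((\gamma+1)/\eps^3)}$ whose repositioning property is strengthened to tolerate not only a $\delta$-fraction of deletions but also a $\gamma$-fraction of insertions; crucially $\gamma$ may exceed $1$, so the received word can be substantially longer than $S$.

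\textbf{List-alignment property of $S$.} The main new ingredient is a list-alignment property: whenever the indexed message $m\times S$ is subjected to $\delta n$ deletions and $\gamma n$ insertions, there is a list of at most $L_{\mathrm{align}}(n)$ candidate alignments of $S$ to the received index symbols such that one of them agrees with the true alignment in all but $n\eps$ positions. This generalizes the iterated longest-common-subsequence peeling of \cref{alg:global-decoder}: instead of returning a single alignment one returns several near-maximal ones, and the pseudo-randomness of $S$ caps how many genuinely distinct high-overlap alignments the adversary can induce. Existence of such $S$ over the stated alphabet follows from a probabilistic argument that takes a union bound over adversarial insertion patterns (there are at most $\binom{(1+\gamma)n}{\gamma n}$ of them) together with a counting argument on long identical non-aligned subsequences inside $S$; the factor $(\gamma+1)$ appearing in the alphabet-size exponent is exactly the cost of handling insertions. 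Given the list-alignment property, the decoder runs the list-repositioning algorithm to obtain $L_{\mathrm{align}}(n)$ candidate half-error strings of length $n$, each within $(\delta+O(\sqrt{\eps}))n$ half-errors of some codeword, feeds each into the outer Hamming list decoder, and outputs the union, giving final list size $L(n)=L_{\mathrm{align}}(n)\cdot L_{\mathrm{Ham}}(n)$.

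\textbf{Rate accounting and main obstacle.} The rate accounting is identical to that in the proof of \cref{thm:main-1}: the indexing overhead is a factor $1/(1+\log|\Sigma_S|/\log|\Sigma_{\mathrm{Ham}}|)$, and choosing $|\Sigma_{\mathrm{Ham}}|$ sufficiently large relative to $|\Sigma_S|$ costs at most an additional $\eps/2$ in rate, producing the claimed $1-\delta-\eps$; the final alphabet size is $|\Sigma_S|\cdot|\Sigma_{\mathrm{Ham}}| = O_{\delta,\gamma,\eps}(1)$. The hard part is squeezing $L_{\mathrm{align}}(n)$ down to $\exp(\exp(\exp(\log^* n)))$: a direct union-bound construction only yields polynomial or quasi-polynomial list sizes when $\gamma$ is large, so one is forced into a recursive construction that builds the length-$n$ indexing string from shorter list-alignment strings, with list sizes composing by a bounded multiplicative factor per level and the number of levels bounded by $O(\log^* n)$. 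Designing this hierarchy so that the list-alignment property survives composition and the recursive repositioning still runs in polynomial time is the main technical obstacle and is where the bulk of the effort in \cite{haeupler2018synchronization4} lies.
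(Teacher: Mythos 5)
Your architecture---produce a short list of \emph{global alignments}, decode each candidate with an outer list-decodable code, and take the union---is not the paper's, and it fails at exactly the step you flag as the main obstacle: the ``list-alignment property'' you need is false for any list size subexponential in $n$. Concretely, suppose $\gamma\geq 1$ and let the adversary split the codeword into $k=\Theta(n)$ constant-length blocks and insert, next to each block, a duplicate of that block's index symbols paired with garbage message symbols. Each of the $2^{k}$ ways of choosing ``true copy or planted copy'' per block is an equally plausible perfect embedding of $S$ into the received index string, and an alignment agrees with the truth on all but $n\eps$ positions only if it picks the true copy in all but $O(\eps k)$ blocks; a standard covering bound then forces any valid list of alignments to have size $2^{\Omega(k)}$. (For $\gamma<1$ the adversary duplicates a $\gamma$-fraction of the blocks and the same conclusion holds for small $\eps$.) No recursive construction of the index string can evade this, because the ambiguity is created by the adversary's insertions, not by self-matchings inside $S$. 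The paper sidesteps the issue by never committing to global alignments: it runs $K=O((1+\gamma)/\eps)$ rounds of LCS-peeling as in \cref{alg:global-decoder} and outputs, for each \emph{position}, a list of at most $K$ candidate symbols, of which all but an $n\bigl(\delta+\frac{1+\gamma}{K}+K\eps_s\bigr)=n(\delta+\eps/4)$ fraction contain the correct symbol; the outer code is then required to be \emph{list-recoverable} from input lists of size $l=O((\gamma+1)/\eps)$, not merely list-decodable. In the duplication attack above, both candidates for a position simply land in that position's list. Note also that your single-assignment-per-candidate reconstruction lets $\gamma n$ insertions contribute up to $\gamma n$ half-errors, which already rules out rate $1-\delta-\eps$ once $\gamma$ is comparable to $\eps$; per-position lists are what absorb the inserted symbols.

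Two further misattributions. The $\exp(\exp(\exp(\log^* n)))$ list size does not come from a hierarchical indexing string---the index in the paper is a plain $\eps_s$-self-matching string with $\eps_s=\frac{\eps^2}{64(1+\gamma)}$ over an alphabet of size $O(\eps_s^{-2})$---but is inherited directly from the Hemenway et al.\ list-recoverable codes of \cref{thm:PolyTimeListRecoverable}, whose output list lies in a subspace of dimension $\left(l/\epsilon\right)^{2^{\log^*(mn)}}$. Likewise the alphabet size $\left(\frac{\gamma+1}{\eps^2}\right)^{O\left(\frac{\gamma+1}{\eps^3}\right)}$ is the alphabet of that list-recoverable code instantiated with $l=\frac{2(\gamma+1)}{\eps}$ and $\epsilon=\frac{\eps}{4}$, not the outcome of a union bound over insertion patterns in the construction of $S$. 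Your rate accounting is the one component that carries over essentially unchanged.
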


The construction of these codes is similar to the ones from \cref{thm:unique-decodable-codes} except that the error-correcting code used in the construction is replaced with a high-rate list-recoverable code. A code ${C}$ given by the encoding function $\mathcal{E}:\Sigma^{nr}\rightarrow\Sigma^n$ is called to be $(\alpha, l, L)$-list recoverable if for any collection of $n$ sets $S_1, S_2, \ldots, S_n\subseteq\Sigma$ each of size $l$ or less, there are at most $L$ codewords for which more than $\alpha n$ elements appear in the list that corresponds to their position, i.e.,  $$\big|\left\{x \in {C}\mid \left|\left\{i \in [n] \mid x_i \in S_i\right\}\right| \geq \alpha n \right\}\big| \leq L.$$

The main idea is to use indexes and the repositioning algorithm from \cref{alg:global-decoder} to come up with a list of candidate symbols for each position of the original message and then feed these lists to the decoder of the list-recoverable code.
To prove \cref{thm:list-decodable-code}, the following family of list-recoverable codes from \cite{hemenway2017local} is utilized.

\begin{theorem}[Hemenway \emph{et al}.~{\cite[Theorem A.7]{hemenway2017local}}]\label{thm:PolyTimeListRecoverable}
Let $q$ be an even power of a prime, and choose $l, \epsilon > 0$, so that $q \geq \epsilon^{-2}$. Choose $\rho \in (0, 1)$. There is an $m_{min} = O(l \log_q(l/\epsilon)/\epsilon^2)$ so that the following holds for all $m \geq m_{min}$. For infinitely many $n$ (all $n$ of the form $q^{e/2}(\sqrt q - 1)$ for any integer $e$), there is a deterministic polynomial-time construction of an $F_q$-linear code $C:\mathbb{F}^{\rho n}_{q^m} \rightarrow \mathbb{F}^n_{q^m}$ of rate $\rho$ and relative distance $1-\rho - O(\epsilon)$ that is $(1 - \rho - \epsilon, l, L)$-list-recoverable in time $\poly(n, L)$, returning a list of codewords that are all contained in a subspace over $\mathbb{F}_q$ of dimension at most 
$\left(\frac{l}{\epsilon}\right)^{2^{\log^*(mn)}}$; implying that $L \leq q^{\left(l/\epsilon\right)^{2^{\log^*(mn)}}}$.
\end{theorem}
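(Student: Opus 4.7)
The plan is to construct the code by combining folded algebraic-geometric (AG) codes over the Garcia--Stichtenoth function field tower with an iterative subspace-design pruning procedure. The block-length constraint $n = q^{e/2}(\sqrt{q}-1)$ is exactly the number of $\mathbb{F}_q$-rational places on the $e$-th level of the Garcia--Stichtenoth tower, which supplies the natural set of evaluation points for the AG code. Working over the extension $\mathbb{F}_{q^m}$ lets us absorb the $O(g/n)$ genus defect inside the $\epsilon$ budget as soon as $m \geq m_{\min} = O(l \log_q(l/\epsilon)/\epsilon^2)$, while the hypothesis $q \geq \epsilon^{-2}$ guarantees the tower has a sufficient local alphabet to support the folding step.

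First I would construct the base folded AG code. Take an appropriate Riemann--Roch space of dimension $\rho n$ on the $e$-th level of the tower and evaluate at the rational places to get an $\mathbb{F}_{q^m}$-linear code of rate $\rho$ and relative distance at least $1-\rho-O(\epsilon)$. Folding in the style of Guruswami--Xing, using an automorphism of the tower to bundle consecutive coordinates, yields a higher-alphabet code of the same rate and distance whose algebraic structure supports linear-algebraic list decoding.

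Second I would establish list-recoverability with structured output. Following the Guruswami--Wang linear-algebraic framework (extended to folded AG codes by Guruswami--Xing), one sets up a multivariate interpolation problem over the function field whose nontrivial solutions vanish on any codeword that agrees with the input lists $S_1,\dots,S_n$ on $(\rho+\epsilon)n$ or more coordinates. A direct linear-algebraic argument then forces the set of such codewords into an affine $\mathbb{F}_q$-subspace $V_0 \subseteq \mathbb{F}_{q^m}^{\rho n}$ of dimension $d_0 = \poly(l/\epsilon, m)$, which can be computed in $\poly(n)$ time by solving a linear system over $\mathbb{F}_{q^m}$.

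Third, to shrink $\dim V_0$ all the way down to $(l/\epsilon)^{2^{\log^*(mn)}}$, I would precompose the encoding with a cascade of explicit subspace designs in the Guruswami--Kopparty style. Each level of the cascade intersects the current candidate subspace with a member of a subspace design nearly orthogonal to all low-dimensional flats, yielding a recursion of the rough form $d_{i+1} = (l/\epsilon)^{O(\log d_i)}$ that collapses to the stated bound after $\log^*(mn)$ iterations. The main obstacle is precisely this step: naively intersecting with subspace designs tends to degrade either the rate or the linear structure, so one must use a rate-preserving cascaded design (itself built from AG subfield subcodes over $\mathbb{F}_q$) and verify inductively that composition with the precoder preserves both rate $\rho$ and distance $1-\rho-O(\epsilon)$ at every level of the cascade. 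Combining this pruning with the interpolation-based decoder for folded AG codes and a $\poly(n,L)$-time enumeration of $V_0$ inside the final small subspace yields the claimed $\poly(n,L)$-time list-recovery algorithm with $L \leq q^{(l/\epsilon)^{2^{\log^*(mn)}}}$.
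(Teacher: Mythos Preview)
The survey paper does not prove this statement at all: it is quoted verbatim as Theorem~A.7 of Hemenway \emph{et al.}\ and used purely as a black box in the proof of \cref{thm:list-decodable-code}. So there is no ``paper's own proof'' to compare against; the authors simply cite the result and immediately plug it into the indexing-based construction.

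That said, your outline is a faithful high-level sketch of how the cited result is actually established in the Hemenway \emph{et al.}\ / Guruswami--Xing line of work: folded AG codes over the Garcia--Stichtenoth tower (which explains the $n=q^{e/2}(\sqrt{q}-1)$ constraint), linear-algebraic list recovery \`a la Guruswami--Wang yielding a low-dimensional $\mathbb{F}_q$-affine candidate space, and then the cascaded subspace-design pruning of Guruswami--Kopparty that produces the $(l/\epsilon)^{2^{\log^* (mn)}}$ dimension bound. For the purposes of this survey, however, none of that machinery is needed or reproduced; the theorem is invoked off-the-shelf.
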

\begin{proof}[Proof of \cref{thm:list-decodable-code}]
By setting parameters $\rho=1-\delta-\frac{\eps}{2}$, $l=\frac{2(\gamma+1)}{\eps}$, and $\epsilon=\frac{\eps}{4}$ in Theorem~\ref{thm:PolyTimeListRecoverable}, one can obtain a family of codes $\mathcal{C}$ that achieves rate $\rho=1-\delta-\frac{\eps}{2}$ and is $(\alpha, l, L)$-recoverable in polynomial time for $\alpha=1-\delta-\eps/4$ and some $L=\exp\left(\exp\left(\exp\left(\log^*n\right)\right)\right)$ (by treating $\gamma$ and $\eps$ as constants). 
Such a family of codes can be found over an alphabet $\Sigma_{\mathcal{C}}$ of size $q=\left(l/\epsilon\right)^{O(l /\epsilon^2)}= \left(\frac{\gamma+1}{\eps^2}\right)^{O\left(\frac{\gamma+1}{\eps^3}\right)} = O_{\gamma, \eps}(1)$ or infinitely many integer numbers larger than $q$.

We index the codewords of this code with an $\eps_s=\frac{\eps^2}{64(1+\gamma)}$ self-matching string $S$. We now show that these codes satisfy the list-decoding properties presented in the statement of the theorem.

The decoder starts with guessing the positions for the symbols it receives using a repositioning algorithm similar to \cref{alg:global-decoder} with two minor differences:
\begin{enumerate}
    \item Instead of reconstructing the original string with guessed positions and \texttt{?}s, the decoder compiles a list for each position containing all received symbols that have been guessed to be in that position.
    \item The algorithm repeats the procedure of calculating $\LCS$ and adding elements to the lists for a total of $K=\frac{8(1+\gamma)}{\eps}$ times.
\end{enumerate}
A similar analysis to the one for \cref{alg:global-decoder} shows that the count of the lists that do not contain the original symbol that corresponds to their position is no more than
$n\left(\delta + \frac{1+\gamma}{K} + K\eps_s\right)=n(\delta+\eps/4).$

Then, the decoding algorithm feeds these lists into the decoder of the list-recoverable code from $\mathcal{C}$ to obtain a list of size $L$ of potential original messages. Since the parameter $\alpha$ was chosen to be $1-\delta-\eps/4$, the output list is guaranteed to contain the original message.

The rate of the resulting family of codes is $\frac{1-\delta-\eps/2}{1+\log\left|\Sigma_S\right|/\log \left|\Sigma_{\mathcal{C}}\right|}$
which, by taking $|\Sigma_{\mathcal{C}}|$ large enough in terms of $\eps$, is larger than $1-\delta-\eps$. As $\mathcal{C}$ is encodable and decodable in polynomial time, the encoding and decoding complexities of the indexed code will be polynomial as well.
\end{proof}

We remark that the self-matching string in the construction of list-decodable synchronization codes from~\cref{thm:list-decodable-code} can be replaced with the near-linear time repositionable indexes of  \cref{lem:enhanced-sync-string-summary}. This would reduce the time complexity of the repositioning subroutine in the decoding algorithm to near-linear time. Therefore, it would allow one to improve the decoding complexity of these codes upon discovery of high-rate list-recoverable codes with faster decoders, potentially to near-linear time. A recent work of Kopparty \emph{et al}.~\cite{kopparty2019list} has broken this barrier and offers list-recoverable tensor codes with a deterministic $n^{1+o(1)}$ time decoding. Using such codes in a similar scheme would yield a near-linear time list-decodable family of codes with similar properties as of \cref{thm:list-decodable-code} albeit over alphabet sizes that grow in terms of the block length of the code.

\medskip

\subsection{List Decoding: Optimal Resilience via Concatenation}
In this section, we discuss the result presented in \cref{sec:main-result-resilience} that fully characterizes error resilience for list-decodable synchronization codes. More precisely, \cite{guruswami2019optimally} exactly identifies the maximal fraction of insertion and deletion errors tolerable by $q$-ary list-decodable codes with non-vanishing rate.

We start by describing the result for binary codes. Note that no positive-rate code can be list-decoded from a $\delta=\frac{1}{2}$ fraction of deletions as an adversary can simply delete all instances of the less frequent symbol. Similarly, no positive-rate code can be list-decoded from a fraction $\gamma=1$ of insertions since any string of length $n$ can be turned into $(01)^n$ with $n$ insertions. A simple time-sharing argument would show that an adversary that can apply any combination of $\delta$ fraction of deletions and $\gamma$-fraction of insertions that satisfy $\gamma + 2\delta = 1$ can make the list-decoding impossible. The following theorem from \cite{guruswami2019optimally} shows the existence of positive-rate list-decodable codes otherwise.

\begin{theorem}\label{thm:resilience-binary}
For any $\eps\in(0, 1)$ and sufficiently large $n$, there exists a constant-rate family of efficient binary codes that are $L$-list decodable from any $\delta n$ deletions and $\gamma n$ insertions in $\poly(n)$ time as long as $\gamma + 2\delta \le 1-\eps$ where $n$ denotes the block length of the code, $L=O_\eps(\exp(\exp(\exp(\log^*n))))$, and the code achieves a rate of $\exp\left(-\frac{1}{\eps^{10}}\log^2 \frac{1}{\eps}\right)$.
\end{theorem}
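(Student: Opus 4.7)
The plan is to construct the binary code by concatenating an outer list-decodable insdel code over a large constant alphabet (from \cref{thm:list-decodable-code}) with a binary inner code that list-decodes near the $2\delta+\gamma=1$ boundary. Fix target error fractions $(\delta,\gamma)$ with $2\delta+\gamma\le 1-\eps$. For the outer code, I take $C_{\rm out}\colon\Sigma^K\to\Sigma^N$ over an alphabet $\Sigma$ of size $q=O_\eps(1)$, with rate $R_{\rm out}=\Theta(\eps)$ tuned small enough to tolerate a $(1-\Omega(\eps))$-fraction of bad outer positions in an insdel list-recovery sense (this list-recovery is implicit in the construction of \cref{thm:list-decodable-code}, whose internal outer code is a list-recoverable Hemenway--et--al.\ code wrapped with a self-matching string). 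The list size of $C_{\rm out}$ is $L=\exp(\exp(\exp(\log^* N)))$. For the inner code I take $C_{\rm in}\colon\Sigma\to\{0,1\}^{n_0}$ of constant block length $n_0=O_\eps(1)$ and rate $R_{\rm in}$ that is $\ell$-list-decodable from every $(\delta_I,\gamma_I)$ with $2\delta_I+\gamma_I\le 1-\eps/2$; existence of such a family is guaranteed by the binary insdel codes of \cite{bukh2017improved} (or by a random-coding argument). The concatenated code $C=C_{\rm out}\circ C_{\rm in}$ has block length $n=Nn_0$ and rate $R_{\rm out}R_{\rm in}$, which after balancing parameters comes out to $\exp(-\eps^{-10}\log^2(1/\eps))$.

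Decoding has two phases. In the \emph{inner phase}, for each outer coordinate $i\in[N]$, the algorithm scans $y'$ over all candidate windows of length in $\big[(1-O(\delta+\gamma))n_0,\,(1+O(\delta+\gamma))n_0\big]$ starting within $O((\delta+\gamma)n)$ of the expected location $in_0$, runs the inner list-decoder on each window, and unions the outputs into a list $S_i\subseteq\Sigma$ of size at most $\poly(n)$. In the \emph{outer phase}, the algorithm feeds $(S_1,\dots,S_N)$ to the insdel list-recovery algorithm underlying $C_{\rm out}$ and outputs the resulting list of codewords. Both phases run in $\poly(n)$ time.

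Correctness rests on an averaging argument. Fix any optimal edit-distance alignment between the transmitted codeword and $y'$; this charges each of the $\le\delta n$ deletions to the inner block it was taken from, and each of the $\le\gamma n$ insertions to the inner block into which it was placed, yielding local fractions $(\delta_i,\gamma_i)$ with $\sum_i(2\delta_i+\gamma_i)\le(2\delta+\gamma)N\le(1-\eps)N$. Call position $i$ \emph{good} if $2\delta_i+\gamma_i\le 1-\eps/2$; Markov's inequality gives that the good fraction is at least $(\eps/2)/(1-\eps/2)=\Omega(\eps)$. For each good $i$, the sub-window of $y'$ aligning to the $i$-th inner block is within the inner code's list-decoding radius, so the true outer symbol appears in $S_i$. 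Since $R_{\rm out}$ was chosen so that $C_{\rm out}$ list-recovers even when only an $\Omega(\eps)$-fraction of the $S_i$'s contain the true symbol, the outer phase produces a list of size $L=\exp(\exp(\exp(\log^* n)))$ containing the transmitted codeword.

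The main obstacle is the alignment ambiguity intrinsic to insertion/deletion errors: the canonical block-partition used in the averaging argument is not directly accessible to the decoder, so one must brute-force over polynomially many candidate windows per outer coordinate, inflating each $S_i$ by a polynomial factor but preserving correctness. A secondary difficulty is parameter balancing: the inner code must simultaneously have the right list-decoding radius (forcing its rate toward zero as $\eps\to 0$) and enough rate that the overall product attains $\exp(-\eps^{-10}\log^2(1/\eps))$; the dominant contribution to this rate loss comes from the inner code operating near the sharp $2\delta+\gamma=1$ binary boundary. Making the Markov-type charging argument fully rigorous under the alignment ambiguity, and verifying that the window search does not destroy the outer list-recovery bound, are where the bulk of the technical work lies.
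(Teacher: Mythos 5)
Your concatenation skeleton (constant-rate, insertion-tolerant outer insdel code from \cref{thm:list-decodable-code}; binary inner code list-decodable near the $2\delta+\gamma=1$ boundary; window search plus a Markov-type charging argument over an optimal alignment) is exactly the framework the paper uses, and that part of your write-up is sound in outline. However, there is a genuine gap at the single point where the theorem is actually hard: the inner code. You invoke the codes of Bukh, Guruswami, and H{\aa}stad \cite{bukh2017improved} ``or a random-coding argument'' to supply a binary family that is $\ell$-list-decodable, with $\ell=O_\eps(1)$ and arbitrarily many codewords, from every $(\delta_I,\gamma_I)$ with $2\delta_I+\gamma_I\le 1-\eps/2$. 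Neither source delivers this. The \cite{bukh2017improved} codes are only known to be uniquely decodable from a $\sqrt{2}-1\approx 0.4142$ fraction of errors and list-decodable from roughly a $0.707$ fraction of insertions --- far from the $2\delta+\gamma<1$ boundary --- and no random-coding argument establishing binary list-decodability up to that boundary with constant list size was known; indeed, if one were available, the main result of \cite{guruswami2019optimally} would be nearly immediate. The paper instead takes the inner code to be the Bukh--Ma family $\{(0^r1^r)^{n/2r}\}$ and devotes the technically intricate core of the proof (\cref{thm:LowRateListDecBinaryCodes}) to showing these strings are list-decodable with list size $O(1/\eps^3)$ whenever $\gamma+2\delta<1-\eps$, via a correlation measure and a martingale-of-biases argument reminiscent of polar-code polarization. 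Without this ingredient (or a substitute of equal strength) your proof does not go through; the concatenation machinery you describe only ``provides efficiency for free'' once the existence of such an inner family is established.

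Two smaller points. First, your decoder feeds the lists $S_i$ to an ``insdel list-recovery'' procedure for the outer code that you describe as implicit; the paper's decoder instead flattens the union of all inner-window decodings into a single long string and runs the outer code's ordinary insdel \emph{list-decoder} on it, relying on the fact that the outer code of \cref{thm:list-decodable-code} tolerates an arbitrarily large insertion fraction $\gamma$ at a rate independent of $\gamma$ --- this is what absorbs the polynomially many spurious symbols from the window search without any new list-recovery claim. Second, once the inner code is Bukh--Ma its rate is sub-constant in its own block length, so the overall rate is not a product $R_{\rm out}R_{\rm in}$ of two ``balanced'' quantities but is dominated by fixing the inner block length as a function of $1/\eps$ large enough to accommodate the outer alphabet; this is where the $\exp(-\eps^{-10}\log^2(1/\eps))$ figure comes from.
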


This result is generalized for larger alphabets in \cite{guruswami2019optimally}. However, the feasibility region for larger alphabet sizes is more complex. We start with showing that list-decoding is impossible for several points $(\gamma, \delta)$ that lie on a quadratic curve. This implies a piece-wise linear outer-bound for the resilience region.

\begin{theorem}\label{thm:counterexamples}
For any alphabet size $q$ and any $i=1, 2, \cdots, q-1$, no positive-rate $q$-ary infinite family of insertion-deletion codes can list-decode from $\delta = \frac{q-i}{q}$ fraction of deletions and $\gamma = \frac{i(i-1)}{q}$ fraction of insertions.
\end{theorem}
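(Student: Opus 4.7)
The plan is to exhibit a deterministic adversarial strategy whose total output space has cardinality at most $\binom{q}{i}$, after which a pigeonhole argument immediately rules out any positive-rate list-decoder with sub-exponential list size. The approach is in the same spirit as the alphabet-reduction argument behind \cref{cor:unique-alphabetSize}, but now both the deletion and the insertion budgets are used together, which is what pins down the exact relation $\gamma=i(i-1)/q$.

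For each codeword $x\in\Sigma^n$, let $T=T(x)\subseteq\Sigma$ be the set of the $i$ most frequent symbols of $x$ (ties broken by any fixed rule), written as $t_1,\ldots,t_i$ in some canonical order. The adversary will always output
\[
y_T=(t_1 t_2 \cdots t_i)^{in/q},
\]
a string of length $i^2 n/q$. Since $y_T$ depends only on the unordered set $T$, the total number of possible outputs is at most $\binom{q}{i}$, independent of $n$.

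To verify that $y_T$ is a legal channel output from $x$, a standard averaging argument first shows that $\ell_0:=|\{k:x_k\in T\}|\ge in/q$. The adversary deletes every non-$T$ symbol ($n-\ell_0$ deletions) and then deletes $\ell_0-in/q$ arbitrary $T$-symbols, using exactly $\delta n=(q-i)n/q$ deletions and leaving a string $x'''\in T^{in/q}$. It remains to turn $x'''$ into $y_T$ using only insertions, which reduces to showing $x'''$ is a subsequence of $y_T$. I will establish this by a greedy matching: viewing $y_T$ as $in/q$ consecutive blocks of the pattern $(t_1\cdots t_i)$, each symbol $x'''[k]=t_j$ is matched to the next unused occurrence of $t_j$ in $y_T$, which lies either further along the current block or in the immediately following block, advancing the scanning pointer by at most $i$ positions. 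Hence matching all $in/q$ symbols of $x'''$ uses at most $i\cdot(in/q)=i^2 n/q=|y_T|$ positions, so the greedy match always succeeds. Inserting the remaining $|y_T|-|x'''|=i(i-1)n/q=\gamma n$ symbols then yields $y_T$ exactly within the insertion budget.

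Once the strategy is verified, the pigeonhole conclusion is immediate: if a code $C\subseteq\Sigma^n$ were list-decodable against this error pattern with list size $L(n)$, then each of the $\binom{q}{i}$ possible outputs could account for at most $L(n)$ codewords, so $|C|\le L(n)\binom{q}{i}$. A positive-rate family has $|C|=q^{\Omega(n)}$, which would force $L(n)$ to be exponential in $n$ and violate the sub-exponential list-size requirement. The most delicate step I anticipate is the greedy-matching lemma: its bound is tight at the worst case $x'''=t_i^{in/q}$, which saturates $|y_T|$ and explains why the insertion budget must be precisely $\gamma=i(i-1)/q$ for the strategy to close. The integrality assumption $q\mid n$ can be absorbed by restricting to an appropriate infinite sub-family of block lengths.
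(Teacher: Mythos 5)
Your proposal is correct and is essentially the paper's own argument: the paper likewise has the adversary delete all occurrences of the $q-i$ least frequent symbols (equivalently, keep the $i$ most frequent) and then insert symbols to reach the periodic string $(\sigma_1\cdots\sigma_i)^{in/q}$, so that the output depends only on the surviving symbol set and at most $O(1)$ information reaches the receiver. Your write-up merely fills in details the paper leaves implicit — the greedy subsequence embedding into the periodic string and the final pigeonhole count against the list size.
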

\begin{proof}
Take a codeword $x\in [q]^n$. With $\delta n = \frac{q-i}{q} \cdot n$, the adversary can delete the $q-i$ least frequent symbols to turn $x$ into $x'\in\Sigma_d^{n(1-\delta)}$ for some $\Sigma_d = \{\sigma_1, \cdots, \sigma_{i}\} \subseteq [q]$. Then, with $\gamma n = n(1-\delta)(i - 1)$ insertions, it can turn $x'$ into $[\sigma_1, \sigma_2, \cdots, \sigma_{i}]^{n(1-\delta)}$, i.e., $n(1-\delta)$ repetitions of the string $\sigma_1, \sigma_2, \cdots, \sigma_{i}$. Such an adversary only allows $O(1)$ amount of information to pass to the receiver. Hence, no such family of codes can yield a positive rate.
\end{proof}

It is shown in \cite{guruswami2019optimally} that this is indeed the error resilience region for list-decoding.

\begin{theorem}\label{thm:resilience-qAry}
For any positive integer $q \geq 2$, let $F_q$ be defined as the concave polygon over vertices
$\left(\frac{i(i-1)}{q}, \frac{q-i}{q}\right)$ for $i = 1, \cdots, q$ and $(0, 0)$. (An illustration for $q=5$ is presented in \cref{fig:actual-region}). $F_q$ does not include the border except the two segments 
$\left[(0, 0), (q-1, 0)\right)$ and $\left[(0, 0), \left(0, 1-1/q\right)\right)$. 
Then, for any $\eps > 0$ and sufficiently large $n$, there exists a family of $q$-ary codes that, as long as $(\gamma, \delta) \in (1-\eps)F_q$, are efficiently $L$-list decodable from any $\delta n$ deletions and $\gamma n$ insertions where $n$ denotes the block length of the code, $L=O(\exp(\exp(\exp(\log^*n))))$, and the code achieves a positive rate of $\exp\left(-\frac{1}{\eps^{10}}\log^2 \frac{1}{\eps}\right)$.
\end{theorem}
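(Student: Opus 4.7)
The plan has two sides: establishing $F_q$ as the correct outer boundary for the resilience region, and constructing codes that achieve any $(\gamma,\delta) \in (1-\eps)F_q$. For the outer boundary, \cref{thm:counterexamples} already supplies the extremal infeasibility points $\bigl(i(i-1)/q,\, (q-i)/q\bigr)$ for $i = 1,\ldots,q$. I would extend this to the complement of $F_q$ by a standard time-sharing argument: any convex combination of vertex-adversary strategies applied to disjoint portions of the codeword is itself a valid adversary, and pointwise domination preserves infeasibility since a harsher adversary cannot make coding easier. This yields infeasibility of the upward-rightward closure of the convex hull of the vertices together with the origin, which coincides with the complement of the interior of $F_q$ except along the two boundary segments explicitly excluded in the statement.

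For the constructive side, I would use a concatenation scheme combining the outer list-recoverable code of \cref{thm:PolyTimeListRecoverable} with a $q$-ary inner code of constant block length $n_0 = n_0(\eps, q)$ and small rate $R_0 = \exp(-\eps^{-10}\log^2(1/\eps))$. The inner code would be chosen by the probabilistic method (and derandomized by brute-force enumeration, since $n_0$ is constant) to be $L_0$-list-decodable from every insertion-deletion pattern $(\gamma',\delta') \in (1-\eps/2)F_q$, for some $L_0 = O_\eps(1)$. The outer positions would be indexed with an enhanced self-matching string from \cref{lem:enhanced-sync-string-summary}, so that at decoding time the corrupted stream is first realigned into approximate inner blocks using the indexing, each block is then brute-force list-decoded in $O_\eps(1)$ time, and finally the resulting $L_0$-sized lists are fed into the outer list-recoverable decoder to produce the final list of size $L = O(\exp(\exp(\exp(\log^* n))))$.

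The main obstacle I expect is a global-to-local error distribution argument. Since $\gamma$ can be arbitrarily large while $(\gamma,\delta)$ is only constrained globally to lie in $(1-\eps)F_q$, an adversary has wide latitude in concentrating insertions and deletions on a few inner blocks. The key lemma must show that no matter how the global budget is split across the inner blocks, at most an $\eps/2$-fraction of blocks can receive a local pattern outside $(1-\eps/2)F_q$; I would establish this by an averaging argument that exploits the piecewise-linear concavity of $\partial F_q$ and the fact that each vertex is the unique maximizer of a corresponding linear functional on $F_q$. Combined with the small misalignment guarantee of the indexing, the $L_0$ bound on each inner list, and the tolerance of the outer list-recoverable decoder to an $\eps$-fraction of corrupted lists, this ensures the transmitted codeword appears in the final output list. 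The binary case \cref{thm:resilience-binary} is then the $q=2$ specialization.
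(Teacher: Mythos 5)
There is a genuine gap, and it sits exactly where the theorem is hardest: the inner code. You propose to obtain a constant-length $q$-ary inner code that is $O_\eps(1)$-list-decodable from every pattern in $(1-\eps/2)F_q$ by the probabilistic method and then derandomize by enumeration. But no routine first-moment or volume argument yields such codes near the boundary of $F_q$ (e.g., $2\delta+\gamma$ approaching $1$ for $q=2$) with list size depending only on $\eps$; establishing that \emph{any} family with sufficiently many codewords has this property is the main technical content of \cite{guruswami2019optimally}. The paper's proof instead takes the inner codes to be the Bukh--Ma codes $(0^r1^r)^{n/2r}$ (and their $q$-ary analogues) and proves \cref{thm:LowRateListDecBinaryCodes} via a new correlation measure together with a martingale, polarization-style argument: nested sub-sampling of substrings shows that the variance of the zero-one bias strictly increases with each period the received string is non-trivially correlated with, and since the bias is bounded, only $O(1/\eps^3)$ codewords can be $\eps$-correlated with any single received string. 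Your proposal assumes this core ingredient away. Note also that the quantifier order matters: $\eps$ fixes the insertion fraction the outer code must tolerate, hence the outer alphabet size, hence the required \emph{number} of inner codewords; the inner family must therefore admit arbitrarily many codewords while keeping the list size a function of $\eps$ alone, which is precisely the property the paper highlights and which a generic random-code argument does not deliver.

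There are secondary issues as well. Your decoding pipeline first realigns the received stream into inner blocks using an index attached to the outer positions, but after concatenation that index is encoded \emph{inside} the inner codewords, so it cannot be read until the inner blocks have already been located --- a circularity. The paper's decoder avoids this by list-decoding every window of a sliding interval with the inner code, concatenating all resulting symbols into one long string, and feeding that string to an outer insdel code (the one from \cref{thm:list-decodable-code}) chosen precisely because it tolerates an enormous fraction of insertions; the spurious symbols from overlapping windows and wrong list entries are simply absorbed as insertions, with no realignment step needed. Your global-to-local averaging lemma is asserted rather than proved, and the concavity of $\partial F_q$ cuts against, not for, a naive convexity or averaging argument. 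Finally, the impossibility and time-sharing discussion, while reasonable, is not part of what the stated theorem claims, which is only the achievability direction.
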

\begin{figure}[]
  \IEEEOnly{\setlength\abovecaptionskip{0pt}}
  \centering
  \includegraphics[width=\linewidth]{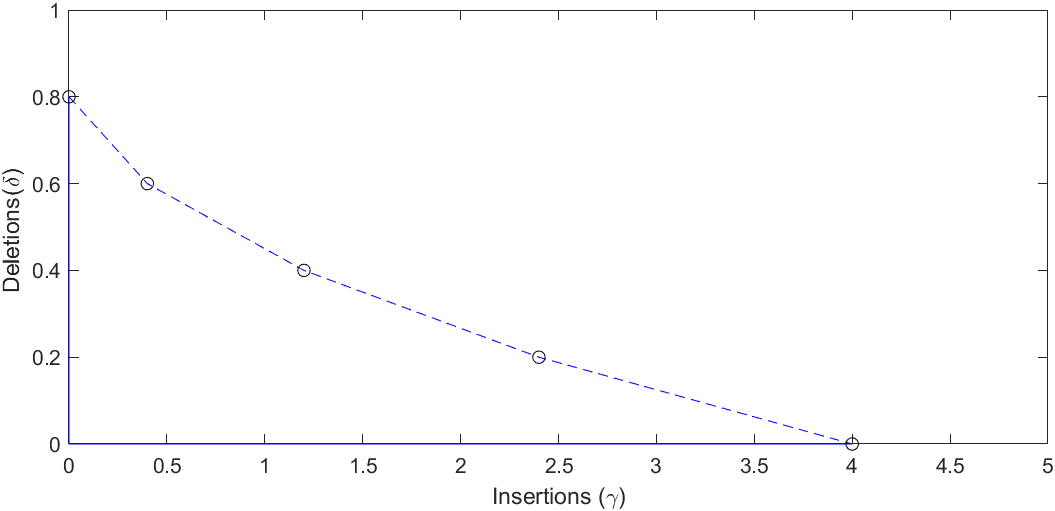}\\
  \caption{Feasibility region for $q=5$.}\label{fig:actual-region}
\end{figure}

We give a high-level description of the steps taken in the proof of \cref{thm:resilience-binary}. The proof of \cref{thm:resilience-qAry} follows the same blueprint but is more technically involved. For a formal proof of both theorems, we refer the reader to \cite{guruswami2019optimally}.

\cref{thm:resilience-binary} is achieved via two main ingredients. The first is a simple new concatenation scheme for list-decodable synchronization codes which can be used to boost the rate of insdel codes. The second component is a technically intricate proof of the list-decoding properties of the Bukh-Ma codes~\cite{bukh2014longest} which have good distance properties but a small sub-constant rate.

\medskip
\subsubsection*{(I) Concatenating List-Decodable Synchronization Codes}
The first ingredient is a simple but powerful framework for constructing list-decodable insertion-deletion codes via code concatenation. Recall that code concatenation comprises the encoding of an \emph{outer} code $C_{\rm out}$ with an \emph{inner} code $C_{\rm in}$ whose size equals the alphabet size of $C_{\rm out}$.

In this approach, the outer code $C_{\rm out}$ is chosen to be a list-decodable insdel code $C_{\rm out}$ over an alphabet whose size is some large function of $1/\eps$, has a constant rate, and is capable of tolerating a huge number of insertions. 
Such a code is introduced in~\cite{haeupler2018synchronization4} and presented in \cref{thm:list-decodable-code}.

The inner code $C_{\rm in}$ is chosen to be a list-decodable insdel code over the binary alphabet (or desired alphabet of size $q$ for \cref{thm:resilience-qAry}), which has non-trivial list decoding properties for the desired fractions $\gamma$ and $\delta$ of insertions and deletions. Most notably, the concatenation framework requires the inner code to be chosen from a family of good list-decodable insdel codes \emph{with an arbitrarily large number of codewords}, and a list-size bounded by some fixed function of $1/\eps$. The codes of Bukh and Ma~\cite{bukh2014longest} are shown to satisfy these properties and are used in \cite{guruswami2019optimally} as the inner code.

We show that even if $C_{\rm in}$ has an essentially arbitrarily bad sub-constant rate and is not efficient, the resulting $q$-ary insdel code does have constant rate, and can also be efficiently list-decoded from the same fraction of insertions and deletions as $C_{\rm in}$. For the problem considered in this paper, this framework essentially provides efficiency of codes for free. 

The encoding is straightforwardly done by the standard concatenation procedure. The decoding procedure on the other hand, is considerably simpler than similar schemes introduced in earlier works~\cite{schulman1999asymptotically,guruswami2016efficiently,guruswami2017deletion,bukh2017improved}. The decoding is done by (i) list-decoding a sliding substring of the received string using the inner code $C_{\rm in}$, (ii) creating a single string from the symbols in these lists, and (iii) using the list-decoding algorithm of the outer code on this string (viewed as a version of the outer codeword with some number of deletions and insertions). 

The main driving force behind why this simplistic sounding approach actually works is a judicious choice of the outer code $C_{\rm out}$. Specifically, these codes can tolerate a very large number of insertions. This means that the many extra symbols coming from the list-decodings of the inner code $C_{\rm in}$ and the choice of the (overlapping) sliding intervals does not disrupt the decoding of the outer code. Further, as mentioned above, the list size of the inner code only depends on $\eps$ and is independent of the size of the code. This is a crucial property for this concatenation scheme as the following order is used to choose the parameters of $C_{\rm in}$ and $C_{\rm out}$. Having the parameter $\eps$, the fraction of insertions that the outer code needs to protect against is determined. This would dictate the alphabet size of the outer code and subsequently, the block length or the size of the inner code.

\medskip

\subsubsection*{(II) Analyzing the List-Decoding Properties of Bukh-Ma Codes}
For the inner code in the concatenation scheme described above, we use a simple family of codes introduced by Bukh and Ma~\cite{bukh2014longest}, which consist of strings $(0^r\ 1^r)^{\frac{n}{2r}}$ that oscillate between $0$s and $1$s with different frequencies. (Below we will refer to $r$ as the \emph{period}, and $1/r$ should be thought of as the \emph{frequency} of alternation.) It is shown in \cite{guruswami2019optimally} that these codes satisfy the following properties.
\begin{theorem}\label{thm:LowRateListDecBinaryCodes}
For any $\eps>0$ and sufficiently large $n$, let $C_{n, \eps}$ be the following Bukh-Ma code:
$${C}_{n, \eps} = \left\{\left(0^r1^r\right)^{\frac{n}{2r}}\Big| r=\left(\frac{1}{\eps^4}\right)^k, k< \log_{1/\eps^4} n\right\}.$$
For any $\delta,\gamma \geq 0$ where $\gamma+2\delta < 1-\eps$, it holds that ${C}_{n, \eps}$ is list decodable from any $\delta n$ deletions and $\gamma n$ insertions with a list size of $O\left(\frac{1}{\eps^3}\right)$.
\end{theorem}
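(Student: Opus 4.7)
The plan is to bound, for any received string $y$, the number of codewords $c_r \in C_{n,\eps}$ that can produce $y$ under at most $\delta n$ deletions and $\gamma n$ insertions. I would proceed by combining a matching-overlap inequality on $y$ with a Bukh--Ma-type structural upper bound on the longest common subsequence of two oscillating codewords at very different scales, and then upgrading this to a multi-way counting argument to recover the sharp $\gamma+2\delta<1-\eps$ threshold.

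First, for every $c_r$ in the list, the $(1-\delta)n$ surviving symbols of $c_r$ appear in $y$ in order, giving $\text{LCS}(c_r, y)\geq (1-\delta)n$ and $|y|\leq (1-\delta+\gamma)n$. Inclusion--exclusion on two witness matchings then yields, for any two periods $r_1<r_2$ in the list,
\[\text{LCS}(c_{r_1},c_{r_2}) \;\geq\; \text{LCS}(c_{r_1},y)+\text{LCS}(c_{r_2},y)-|y| \;\geq\; (1-\delta-\gamma)\,n.\]
Second, when $r_2/r_1\geq 1/\eps^4$, the vastly different oscillation frequencies of $c_{r_1}$ and $c_{r_2}$ force a Bukh--Ma structural bound of the form $\text{LCS}(c_{r_1},c_{r_2})\leq \tfrac{n}{2}+O(\sqrt{r_1/r_2})\,n\leq \tfrac{n}{2}+O(\eps^2)\,n$; intuitively, any long common subsequence must be near-balanced in $0$s and $1$s when viewed at either scale, and a block amortization within each $r_2$-run controls the excess above $n/2$. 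This is the structural ingredient one would need to extract from the Bukh--Ma distance analysis in \cite{bukh2014longest}.

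Combining the two bounds directly gives a contradiction only when $\delta+\gamma < \tfrac{1}{2}-O(\eps^2)$, which is strictly weaker than the hypothesis $\gamma+2\delta<1-\eps$. The main obstacle is therefore the regime $\delta+\gamma\geq \tfrac{1}{2}$ (for example, $\delta$ small and $\gamma$ close to $1-\eps$), where a purely pairwise argument fails. To close this gap I would upgrade the analysis to a multi-way covering argument: interpret the $L$ witness matchings as each selecting $(1-\delta)n$ positions of $y$, double-count position/matching incidences, and apply Jensen's inequality to extract a stronger pairwise overlap bound that grows with $L$ and whose positivity requires exactly $\gamma+2\delta<1-\eps$. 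The asymmetric coefficient on $\delta$ reflects the different roles of insertions (which inflate $|y|$ but preserve $c_r$) and deletions (which remove $c_r$-symbols without changing $|y|$): a heavily overlapping position costs each matching a full deletion, so the deletion budget is effectively charged twice in the double-count. Balancing this sharper lower bound of order $\Omega(\eps\,n)$ against the Bukh--Ma loss $\sqrt{r_i/r_j}\leq \eps^{2(j-i)}$, summed over the $\binom{L}{2}$ pairs of geometrically spaced periods, bounds $L$ by $O(\eps^{-3})$. Executing this last balancing cleanly, while tracking the asymmetric insertion/deletion accounting, is the technical heart of the proof.
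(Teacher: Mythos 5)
Your pairwise step is sound: two witness matchings of length at least $(1-\delta)n$ into the received string $y$ must share at least $2(1-\delta)n-|y|$ positions of $y$, and composing them gives $\LCS(c_{r_1},c_{r_2})\geq(1-\delta-\gamma)n$ (note $|y|$ can be as large as $(1+\gamma)n$, but using that each matching also has length at least $|y|-\gamma n$ the bound survives); combined with a Bukh--Ma bound of the form $n/2+O(\eps^2)n$ this indeed only covers $\delta+\gamma<1/2$. The genuine gap is in the proposed upgrade. If you actually carry out the incidence double-count you describe --- $L$ subsets of the positions of $y$, each of size at least $\max\bigl((1-\delta)n,\,|y|-\gamma n\bigr)$, with Jensen applied to $\sum_p\binom{d_p}{2}$ --- the best pairwise overlap you can extract as $L\to\infty$ is $(1-\delta)^2n^2/|y|$, which is minimized at $|y|=(1-\delta+\gamma)n$ and equals $\frac{(1-\delta)^2}{1-\delta+\gamma}\,n$. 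Requiring this to exceed $n/2$ gives $\gamma<(1-\delta)(1-2\delta)$, a strictly smaller region than $\gamma+2\delta<1$ for every $\delta>0$: for instance $(\delta,\gamma)=(0.45,0.09)$ satisfies $2\delta+\gamma=0.99<1$ but violates $\gamma<0.55\cdot 0.1=0.055$. So the multi-way counting does not produce the asymmetric $2\delta+\gamma$ threshold; the ``deletion budget charged twice'' heuristic has no counterpart in the inequality you can actually prove, and no balancing against the $\sqrt{r_i/r_j}$ losses can repair a first-order deficit in the feasible region. The $O(\eps^{-3})$ list size is likewise not derivable from this accounting.

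The proof in \cite{guruswami2019optimally}, which this survey sketches, takes an entirely different route that builds the asymmetry in from the start: it defines a correlation measure between the received string $v$ and a period $r$ directly in terms of the minimal value of $2D+I$ over ways of converting $v$ to the codeword of period $r$, and shows that no string can be $\eps$-correlated with more than $O(\eps^{-3})$ periods. The mechanism is a martingale argument rather than an LCS overlap count: one recursively samples nested random substrings $v\supseteq v_1\supseteq v_2\supseteq\cdots$ with $r_i\gg|v_i|\gg r_{i+1}$, and shows that correlation with period $r_i$ forces a quantitative increase in the variance of the zero--one bias at step $i$ while preserving its expectation; since the bias is bounded in magnitude by $1$, only $O(\eps^{-3})$ such increments are possible. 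To salvage your approach you would need to replace the symmetric position-overlap statistic by one that charges each codeword according to $2D+I$ --- which is essentially what the correlation measure accomplishes. As written, the key step of your proposal fails.
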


In order to prove \cref{thm:LowRateListDecBinaryCodes}, \cite{guruswami2019optimally} first introduces a new correlation measure which expresses how close a string is to any given frequency (or Bukh-Ma codeword). 
Using this measure, we want to show that it is impossible to have a single string $v$ which is more than $\eps$-correlated with more than $\Theta_\eps(1)$ frequencies. 
Loosely speaking, the parameter $\eps$ in $\eps$-correlation indicates how close the term $2D+I$ is to $1$ where $D$ denotes the number of deletions and $I$ denotes the number of insertions required to convert the string $v$ into some Bukh-Ma codeword when picking the set of insertions and deletions that minimizes $2D+I$.

The proof technique utilized by \cite{guruswami2019optimally} is somewhat reminiscent of the one used to establish the polarization of the martingale of entropies in the analysis of polar codes~\cite{arikan2008channel,blasiok2018polar}. 
In more detail, \cite{guruswami2019optimally} recursively sub-samples smaller and smaller nested substrings of $v$, and analyzes the expectation and variance of the bias between the fraction of $0$'s and $1$'s in these substrings. More precisely, it orders the run lengths $r_1,r_2,\ldots$ that are $\eps$-correlated with $v$ in decreasing order and first samples a substring $v_1$ with $r_1 \gg |v_1| \gg r_2$ from $v$. While the expected zero-one bias in $v_1$ is the same as in $v$, \cite{guruswami2019optimally} shows that the variance of this bias is a strictly increasing function in the correlation with $\left(0^{r_1} 1^{r_1}\right)^{\frac{n}{2{r_1}}}$. Intuitively, $v$ cannot be too uniform on a scale of length $|v_1|$ if it is correlated with $r_1$. 

In other words, if $v$ is $\eps$-correlated with $r_1$, the sampled substring $v_1$ will land in a part of $v$ which is either similar to one of the long stretches of zeros in $v$ or in a part which is similar to a long stretch of ones in $v$, resulting in some positive variance in the bias of $v_1$. Furthermore, because the scales $r_2, r_3, \ldots$ are so much smaller than $v_1$, this sub-sampling of $v_1$ preserves the correlation with these scales intact, at least in expectation.

Next, a substring $v_2$ with $r_2 \gg |v_2| \gg r_3$ is sampled within $v_1$. Again, the bias in $v_2$ stays the same as the one in $v_1$ in expectation but the sub-sampling introduces even more variance given that $v_1$ is still non-trivially correlated with the string with period $r_2$. The evolution of the bias of the strings $v_1, v_2, \ldots$ produced by this nested sampling procedure can now be seen as a martingale with the same expectation but an ever increasing variance. Given that the bias is bounded in magnitude by 1, the increase in variance cannot continue indefinitely. This limits the number of frequencies a string $v$ can be non-trivially correlated with and, subsequently, implies the list-decodability property of the code.

\section{Synchronization Strings}\label{sec:sync-strings}
In this section, we discuss synchronization strings introduced in \cite{haeupler2017synchronization} and review their combinatorial properties and applications. We also overview extensions and enhancements made to synchronization strings from \cite{cheng2018synchronization,haeupler2017synchronization2,haeupler2017synchronization3,haeupler2018synchronization4,haeupler2019near}.

\begin{definition}[$\eps$-synchronization strings]
String $S \in \Sigma^n$ is an $\eps$-synchronization string if for every $1 \leq i < j < k \leq n + 1$ we have that $\ED\left(S[i, j),S[j, k)\right) > (1-\eps) (k-i)$. 
\end{definition}

In this definition, $\ED$ represents the edit distance function and $S[x, y)$ denotes a substring of $S$ starting from position $x$ and ending at position $y-1$. We use similar notations $S[x, y]$, $S(x, y]$, and $S(x, y)$ to denote substrings of $S$ where $(,)$ and $[,]$ denote the exclusion and inclusion of the starting/end point of the interval respectively.

In simpler terms, the $\eps$-synchronization property is a pseudo-random property that requires all neighboring substrings of the string to be far apart under the edit distance metric. It is shown in \cite{haeupler2017synchronization} that $\eps$-synchronization is not only a strictly stronger property than the $\eps$-self-matching property but also a hereditary extension of it. More precisely, if all substrings of a string satisfy the $\frac{\eps}{2}$-self matching string property, then the string itself is an $\eps$-synchronization string.

\subsection{Existence}\label{sec:sync-strings-existence}
It is shown in \cite{haeupler2017synchronization} that, similar to self matching strings, arbitrarily long $\eps$-synchronization strings exist over alphabets whose size is independent of the string length. More precisely, \cite{haeupler2017synchronization} shows the existence of arbitrarily long strings over an alphabet of size $O(\eps^{-2})$ that satisfy the $\eps$-synchronization property for pairs of neighboring substrings of total length $\frac{1}{\eps^2}$ or more by utilizing the Lov\'{a}sz's local lemma to show that the probability of such an event for a random string is non-zero. Indexing such a string with a string formed by repetitions of $1, 2, \cdots, \eps^{-2}$ ensures the $\eps$-synchronization property over smaller substrings and gives an $\eps$-synchronization string over an alphabet of $O(\eps^{-4})$ size. With a non-uniform sample space, \cite{cheng2018synchronization} utilizes the Lov\'{a}sz's local lemma in the same manner to reduce the alphabet size to $O(\eps^{-2})$ and gives the following.

\begin{theorem}\label{thm:existence}
For any $\eps\in (0,1)$, there exists an alphabet $\Sigma$ of size $O(\eps^{-2})$ so that for any $n \geq 1$, there exists an $\eps$-synchronization string of length $n$ over $\Sigma$.
\end{theorem}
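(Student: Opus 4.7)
My plan is to apply the asymmetric Lov\'asz local lemma (LLL) to a suitably chosen random string $S\in\Sigma^n$. For every triple $1\le i<j<k\le n+1$, let $A_{i,j,k}$ be the event $\ED(S[i,j), S[j,k))\le (1-\eps)(k-i)$; by definition $S$ is an $\eps$-synchronization string precisely when no $A_{i,j,k}$ occurs. A positive-probability existence argument therefore reduces to bounding each $\Pr[A_{i,j,k}]$ and the dependency structure of these events, and arranging things so that the LLL conditions are satisfied while keeping $|\Sigma|=O(\eps^{-2})$.

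I would first separate the bad events by scale, calling a triple \emph{long} when $k-i\ge L$ for a threshold $L=\Theta(\eps^{-2})$, and \emph{short} otherwise. For a uniformly random string over an alphabet of size $q$, a long event $A_{i,j,k}$ forces $S[i,j)$ and $S[j,k)$ to share a non-aligned common subsequence of length at least $\eps(k-i)$; a union bound over the $\binom{k-i}{\eps(k-i)}^2$ possible matchings, each occurring with probability at most $q^{-\eps(k-i)}$, yields $\Pr[A_{i,j,k}]\le \bigl(e^2/(q\eps^2)\bigr)^{\eps(k-i)}$, which is $\exp(-\Omega(\eps(k-i)))$ as soon as $q\ge C\eps^{-2}$. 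Each $A_{i,j,k}$ depends only on the coordinates in $[i,k)$, so it shares variables only with events whose ranges overlap $[i,k)$, and the asymmetric LLL bookkeeping — weighting a neighbor of scale $r$ by a term that decays geometrically in $r$ — goes through cleanly for the long events.

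The short events ($k-i=O(\eps^{-2})$) are the main obstacle, because the bound above degrades to a constant on that scale. The prior $O(\eps^{-4})$ construction resolves this by tensoring the random string with a deterministic counter $1,2,\ldots,\eps^{-2}$, which guarantees the short-scale edit-distance property at the cost of multiplying the alphabet by $\eps^{-2}$. To save this factor and reach $O(\eps^{-2})$, I would follow \cite{cheng2018synchronization} and replace the uniform product measure on $\Sigma^n$ with a non-uniform sample space on the same alphabet in which short-range repetitions are suppressed by construction — for example, sample $S$ uniformly from the set of strings whose every window of length $\Theta(\eps^{-1})$ consists of distinct symbols. This directly kills short-range self-matchings, so every short $A_{i,j,k}$ either has probability zero or is exponentially small on the relevant scale, while each symbol still lives in an alphabet of size $O(\eps^{-2})$.

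The technical crux, and where I expect most of the work to be, is re-verifying the LLL hypotheses under the non-product distribution. The per-event probabilities for long triples must be recomputed, since conditioning on the no-short-repeat structure introduces global correlations between positions; the aim is to show that the conditional probability of any long alignment differs from the uniform-product bound by only a constant factor, which can be absorbed into the LLL slack. One must also redefine the dependency graph so that the relevant events depend on bounded-size local patches of the random string, and verify that the geometric decay in the number of neighbors at each scale still holds. Once these checks are in place, the asymmetric LLL yields, for every $n$, an $\eps$-synchronization string of length $n$ over an alphabet of size $O(\eps^{-2})$, as claimed.
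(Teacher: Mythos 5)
Your plan matches the paper's own (sketched) argument for this theorem: the survey likewise handles pairs of adjacent intervals of total length $\Omega(\eps^{-2})$ by a counting bound plus the Lov\'asz local lemma over a random string on an $O(\eps^{-2})$-size alphabet, and attributes the reduction from the $O(\eps^{-4})$ counter-indexing construction to $O(\eps^{-2})$ exactly to the non-uniform sample space of Cheng et al.\ that suppresses short-range repetitions. You correctly identify where the remaining technical work lies (re-verifying the LLL hypotheses under the non-product measure), which is precisely what the cited works carry out.
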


\subsubsection*{Extremal Properties}
We would like to add a brief remark regarding extremal questions that are raised by the definition of the synchronization string property. One interesting question is what is the minimal function of $\eps$ as alphabet size for which \cref{thm:existence} holds. It has been shown in \cite{cheng2018synchronization} that any such alphabet has to be of size $\Omega(\eps^{-3/2})$. This leaves us with the open question of where the minimal alphabet size lies between $\Omega(\eps^{-3/2})$ and $O(\eps^{-2})$.

A similar question can be asked for non-specific values of $\eps$, i.e., what is the smallest alphabet size over which arbitrarily long $\eps$-synchronization strings exist for any $\eps < 1$. It is easy to observe that any binary string of length 4 or more contains two identical neighboring substrings. Also, it has been shown that arbitrarily long $\frac{11}{12}$-synchronization strings exist over an alphabet of size four~\cite{cheng2018synchronization}. This leaves the open question of whether arbitrarily long synchronization strings exist over a ternary alphabet or not.

\subsection{Online Decoding for Synchronization Strings}\label{sec:sync-strings-and-online-decoding}
In this chapter, we introduce an online repositioning algorithm for synchronization strings. In the same spirit as \cref{sec:technical-warm-up}, we show that synchronization strings can be used to guess the original position of symbols undergoing insertion-deletion errors via indexing. 
However, for synchronization strings, the repositioning can be done in an online fashion, i.e., the position of each symbol is guessed upon its arrival and without waiting for the rest of the communication to take place. This enables a delay-free simulation of a channel with Hamming-type errors over any given insertion-deletion channel with adequately large alphabet size. We will discuss this further in \cref{sec:other-applications}.

To present the online repositioning algorithm, we introduce the notion of relative suffix distance inspired by a similar notion from \cite{braverman2015coding}.
\begin{definition}[Relative Suffix Distance]
For any $S, S' \in \Sigma^*$, their relative suffix distance ($\RSD$) is defined as follows:
$$\RSD(S,S') = \max_{k > 0} \frac{\ED\Big(S\big(|S|-k,|S|\big],S'\big(|S'|-k,|S'|\big]\Big)}{2k}$$
\end{definition}
It is shown in \cite{haeupler2017synchronization} that $\RSD$ is a metric that takes a value within $[0, 1]$. The interesting property of $\RSD$ that comes in handy when devising an online repositioning algorithm is that the prefixes of a synchronization string are far apart under the $\RSD$ metric.
\begin{proposition}\label{lem:RSD-prefixes}
Let $S$ be an $\eps$-synchronization string. For any $i\neq j$, $\RSD(S[1, i], S[1, j]) > 1-\eps$.
\end{proposition}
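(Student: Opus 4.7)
Without loss of generality assume $i < j$, and set $k := j - i$. The plan is to show that the term at this particular value of $k$ in the maximum defining $\RSD(S[1,i], S[1,j])$ already exceeds $1 - \eps$. The key structural observation is that, for this choice of $k$, the length-$k$ suffix of $S[1, j]$ is precisely $S[i+1, j]$, while the corresponding length-$k$ suffix of $S[1, i]$ (which degenerates to all of $S[1, i]$ when $k > i$) ends at position $i$. Thus the two compared suffixes are adjacent, non-overlapping substrings of $S$ meeting exactly at position $i+1$.

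Setting $a := \max(1, i - k + 1)$, the $\eps$-synchronization property of $S$ applied at the triple $(a, i+1, j+1)$ immediately yields
\[
\ED(S[a, i], S[i+1, j]) \;>\; (1 - \eps)(j + 1 - a).
\]
When $j \leq 2i$ (so $k \leq i$), we have $a = i - k + 1$ and $j + 1 - a = 2k$; both suffixes have length exactly $k$, and the bound directly gives $\ED/(2k) > 1 - \eps$, hence $\RSD > 1 - \eps$ from this $k$-th term alone. When $j > 2i$ (so $k > i$), we have $a = 1$ and $j + 1 - a = j = i + k$, which matches the combined length of the two (now unequally sized) suffixes; the synchronization bound $\ED > (1-\eps)(i+k)$ again yields the desired inequality.

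The one technicality I expect to emphasize is the subcase $j > 2i$, in which the nominal length-$k$ suffix of the shorter prefix $S[1, i]$ must be understood as all of $S[1, i]$ rather than a true length-$k$ window; under the natural reading of the $\RSD$ denominator as the combined length of the two compared suffixes (which coincides with $2k$ whenever both have length $k$), the two cases unify. Beyond this bookkeeping, the entire proof is a single direct application of the $\eps$-synchronization property to the concatenation $S[a, i] \cdot S[i+1, j] = S[a, j]$, and I do not foresee further obstacles.
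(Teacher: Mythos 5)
Your main case ($j \le 2i$) is exactly the standard argument: with $k = j-i \le i$, the two length-$k$ suffixes $S[i-k+1,i]$ and $S[i+1,j]$ are adjacent substrings of total length $2k$, so the $\eps$-synchronization property gives $\ED > (1-\eps)\cdot 2k$ and the $k$-th term of the maximum already exceeds $1-\eps$. (The survey itself states this proposition without proof, citing \cite{haeupler2017synchronization}, so this is the only comparison available.)

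The case $j > 2i$, however, contains a genuine gap. There the synchronization property gives $\ED(S[1,i], S[i+1,j]) > (1-\eps)(i+k)$, but the definition of $\RSD$ divides by $2k$, and $(1-\eps)(i+k)/(2k) < 1-\eps$ because $i < k$. This cannot be repaired by sharpening the estimate: one always has $\ED(S[1,i], S[i+1,j]) \le i+k$, so the term at $k = j-i$ is at most $(i+k)/(2k)$, which falls below $1-\eps$ as soon as $i < (1-2\eps)k$. Your proposed remedy---reading the denominator as the combined length of the two compared suffixes---silently replaces $\RSD$ by a different function, so you prove a statement about a modified distance, not the one defined here. The standard way to close this case while keeping the denominator $2k$ is to interpret the ``length-$k$ suffix'' of the shorter prefix as $S[1,i]$ padded at the front with $k-i$ fresh symbols occurring nowhere else (equivalently, to charge the length deficit $k-i$ to the numerator). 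Then $\ED \ge (k-i) + \ED(S[1,i],S[i+1,j]) > (k-i) + (1-\eps)(k+i) = 2k - \eps(k+i) \ge 2k(1-\eps)$, and the same choice $k=j-i$ witnesses the bound in both cases. Some such convention is genuinely needed: for $i=1$ and a large $j$ with $S[1]=S[j]$ (which must occur for some $j$ since the alphabet is finite, and which the synchronization property does not forbid for $j$ large), the truncated reading makes every term of the maximum at most $3/4$, so the proposition would fail for small $\eps$.
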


Note that an online repositioning algorithm is essentially one that decides which prefix of the message string is sent upon arrival of each symbol at the receiver side. Therefore, the online repositioning algorithm only needs to decide which prefix of the synchronization string is the most consistent to the index portion of the received string up until the arrival of each symbol. To this end, \cref{lem:RSD-prefixes} suggests the natural repositioning strategy of finding the closest prefix of the utilized synchronization string to the index part of the received string under relative suffix distance and declaring the length of that prefix as the position of that symbol.

The guarantees that this decoding strategy provides is discussed in details in \cite{haeupler2017synchronization}. However, we remark that the suffix distance between a string $s$ and a noisy version of it, $\tilde{s}$, that is altered by insertions and deletions is particularly sensitive to how dense the fraction of error occurrences is in small suffixes of $\tilde{s}$. This implies that occurrences of insertions and deletions can only disrupt the correctness of this repositioning strategy for some of the following symbols and the effect would fade away as communication goes on. By formalizing these observations and employing a similar yet more complicated distance function, \cite{haeupler2017synchronization} gives the following.

\begin{theorem}\label{thm:online-reposition-misdecodings}
There exists an online repositioning algorithm for a communication of length $n$ over a channel with up to $n\delta$ synchronization errors that, assuming that the message is indexed by an $\eps$-synchronization string, guesses the position of each received symbol in $O(n^4)$ time and incorrectly guesses the positions of no more than $\frac{n\delta}{1-\eps}$ received symbols.
\end{theorem}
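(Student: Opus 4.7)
The plan is to analyze the natural greedy decoder sketched in the paragraph before \cref{thm:online-reposition-misdecodings}: upon receiving the $t$-th channel symbol, let $\tilde S_t$ be the string formed by its index portions $s'_1, s'_2, \ldots, s'_t$; compute $\RSD(S[1,k], \tilde S_t)$ for every prefix length $k \in [n]$, and declare the position of the $t$-th received symbol to be the minimizer $k^\ast_t$. A single $\RSD$ evaluation reduces to $O(n)$ edit-distance computations on nested suffixes (or a single DP that is monotone in $k$), which takes $O(n^2)$ time; doing this for $n$ candidate prefixes at each of $n$ time steps yields the $O(n^4)$ bound stated in the theorem.

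For correctness I would invoke \cref{lem:RSD-prefixes} together with the fact (proved in \cite{haeupler2017synchronization}) that $\RSD$ is a metric. Let $i_t$ denote the true prefix length at time $t$, i.e., the number of sender-side symbols whose channel images have arrived by step $t$. For any wrong candidate $j \neq i_t$, the triangle inequality gives
\[
\RSD(S[1,j],\tilde S_t) \;\ge\; \RSD(S[1,i_t],S[1,j]) - \RSD(S[1,i_t],\tilde S_t) \;>\; (1-\eps) - \RSD(S[1,i_t],\tilde S_t),
\]
so whenever $\RSD(S[1,i_t],\tilde S_t) < \tfrac{1-\eps}{2}$ the greedy decoder must pick $k^\ast_t = i_t$. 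Thus an error at time $t$ forces the existence of a suffix length $k_t$ witnessing
\[
\ED\!\big(S[1,i_t]\,(i_t-k_t,i_t],\ \tilde S_t\,(t-k_t,t]\big) \;\ge\; k_t(1-\eps),
\]
i.e., the last $k_t$ positions of the channel are inhabited by an $\eps$-heavy burst of synchronization errors.

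The heart of the argument is then a charging lemma: the number of time steps $t$ that can each produce such a witness suffix is at most $\tfrac{n\delta}{1-\eps}$. The intuition is that each actual adversarial insertion or deletion can fall inside the witness window $(t-k_t, t]$ of many bad $t$'s, but every bad $t$ demands that a $(1-\eps)$-fraction of the symbols in its window be ``paid for'' by real errors; a fractional covering / averaging argument against the global budget of $n\delta$ errors then caps the number of bad $t$'s by $\tfrac{n\delta}{1-\eps}$. The main obstacle, and the reason I would expect to replace $\RSD$ by the slightly refined (monotonized) distance alluded to in the paragraph after \cref{lem:RSD-prefixes}, is that under pure $\RSD$ the mapping between received-stream positions and true prefix lengths is itself perturbed by the very errors we are trying to charge, so a naive counting double-counts errors that lie on both sides of an insertion. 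Using the refined distance restores a clean monotone accounting in which every error is charged at most once to $1-\eps$ units of ``badness'', giving the claimed bound and completing the proof.
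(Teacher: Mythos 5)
Your proposal follows essentially the same route as the survey's own argument: decode each arriving symbol to the prefix of $S$ minimizing the relative suffix distance, use \cref{lem:RSD-prefixes} (together with the metric property of $\RSD$) to show that a misdecoding forces some suffix of the received index stream to carry an error density close to $1-\eps$, and then bound the number of such rounds by $\frac{n\delta}{1-\eps}$ via a witness-interval charging argument. The one step you leave informal --- the charging lemma, and the factor-of-two slack that the plain $\RSD$ triangle inequality introduces --- is precisely the part the survey also defers to \cite{haeupler2017synchronization}, where it is handled by the refined suffix-error-density distance you correctly anticipate.
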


Note that, as opposed to the repositioning algorithm in \cref{sec:technical-warm-up}, the number of incorrect guesses does not tend to zero by taking smaller values for $\eps$. In fact, if one constructs synchronization codes as in \cref{sec:technical-warm-up} with $\eps$-synchronization strings and uses this repositioning algorithm instead of \cref{alg:global-decoder}, the rate achieved is $1-3\delta-\eps^{O(1)}$.

\subsection{Construction: Long-Distance Synchronization Strings}
To construct synchronization strings, \cite{haeupler2017synchronization3} utilizes the algorithmic  Lov\'{a}sz local lemma of Chandrasekaran \emph{et al}. \cite{chandrasekaran2013deterministic} with a similar random space to the one used in \cref{sec:sync-strings-existence} and obtains an efficient construction of such strings over an alphabet of size $O(\eps^{-4})$. In this section, we review the steps taken in \cite{haeupler2017synchronization3} to obtain a linear-time explicit construction for synchronization strings. In order to do so, we start with presenting the \emph{long-distance synchronization string} property that generalizes the requirement of large edit distance to non-adjacent substrings that are at least logarithmically long in terms of the length of the string.

\begin{definition}[$c$-long-distance $\eps$-synchronization string]\label{def:long-distance}
String $S \in \Sigma^n$ is a $c$-long-distance $\eps$-synchronization string if for every pair of substrings $S[i, j)$ and $S[i', j')$ that are either adjacent or of total length $c\log n$ or more, $ED\left(S[i,j),S[i',j')\right) > (1-\eps) l$ where $l=j-i+j'-i'$.
\end{definition}
We now describe construction algorithms for (long-distance) synchronization strings.

\medskip

\subsubsection{Boosting Step I: Linear Time Construction}\label{sec:linear-time-construction}
\cite{haeupler2017synchronization3} provides a simple boosting step which allows a polynomial speed-up to any synchronization string construction at the cost of increasing the alphabet size by proposing a construction of an $O(\eps)$-synchronization string of length $O_\eps(n^2)$ having an $\eps$-synchronization string of length $n$. 

\begin{lemma}\label{lem:simplepolyboosting}
Fix an even $n \in \mathbb{N}$ and $\gamma > 0$ such that $\gamma n \in \mathbb{N}$. Suppose $S \in \Sigma^n$ is an $\eps$-synchronization string. The string $S' \in \Sigma'^{\gamma n^2}$ with $\Sigma' = \Sigma^3$ and 
\begin{equation}
S'[i] = \left(S[i \bmod n], S[(i+n/2) \bmod n], S\left[\left\lceil {\frac{i}{\gamma n}}\right\rceil\right]\right)\label{eqn:boosting-1}
\end{equation}
 is an $(\eps + 6\gamma)$-synchronization string of length $\gamma n^2$. 
\end{lemma}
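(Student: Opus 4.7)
The plan is to verify the $(\eps+6\gamma)$-synchronization condition directly. For arbitrary $1 \le a < b < c \le \gamma n^2 + 1$, writing $L = c - a$, I must show $\ED(S'[a,b),\, S'[b,c)) > (1-\eps-6\gamma)L$. The strategy exploits the three coordinates of $S'$ at two different length scales. The first and second coordinates are periodic with period $n$, and their wrap-around positions (multiples of $n$ versus odd multiples of $n/2$) are exactly $n/2$ apart; consequently, any window of length at most $n/2$ contains at most one wrap-around across the two of them combined. The third coordinate is piecewise-constant on blocks of length $\gamma n$, whose successive values trace out $S$ itself in the form of a ``stretched'' copy. I split on $L$.

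\textbf{Case A} ($L \le n/2$). By the staggered-wrap observation, at least one of coord~1, coord~2 has no wrap-around inside $[a,c)$. Its projection maps $S'[a,b)$ and $S'[b,c)$ onto two adjacent substrings of $S$ of total length $L$. Since edit distance cannot increase under a coordinate projection, the $\eps$-synchronization of $S$ yields $\ED(S'[a,b),S'[b,c)) \ge \ED(\text{projections}) > (1-\eps)L > (1-\eps-6\gamma)L$, which is in fact stronger than needed.

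\textbf{Case B} ($L > n/2$). Here I use coord~3. Let $\hat A, \hat B$ denote the coord~3 projections of $S'[a,b), S'[b,c)$, and set $\bar A = S[a^*, b^*_-]$, $\bar B = S[b^*_+, c^*]$, where $a^* = \lceil a/(\gamma n)\rceil$, $b^*_- = \lceil(b-1)/(\gamma n)\rceil$, $b^*_+ = \lceil b/(\gamma n)\rceil$, and $c^* = \lceil(c-1)/(\gamma n)\rceil$. These are adjacent substrings of $S$ when $b \equiv 1 \pmod{\gamma n}$, and overlap in a single symbol otherwise. The central claim is that any common subsequence of $\hat A, \hat B$ has size at most $\gamma n \cdot \LCS(\bar A, \bar B) + O(\gamma n)$. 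This follows from an LP-flavoured argument: after normalizing all capacities by $\gamma n$ (each source block has $\gamma n$ positions, each target block has $\gamma n$ positions, and each pair $(p,q)$ can carry at most $\gamma n$ matches), the maximum monotone (non-strict) mass matching between $\hat A$ and $\hat B$ reduces to the strictly-increasing matching problem for $\bar A, \bar B$, which is exactly $\LCS(\bar A, \bar B)$. Applying the $\eps$-synchronization of $S$ to the (near-) adjacent $\bar A, \bar B$ gives $\LCS(\bar A, \bar B) \le \tfrac{\eps}{2}(|\bar A|+|\bar B|) + O(1) \le \tfrac{\eps L}{2\gamma n} + O(1)$, and combining with $L > n/2 \Rightarrow \gamma n \le 2\gamma L$ yields $\LCS(A,B) \le \LCS(\hat A, \hat B) \le \tfrac{\eps}{2}L + O(\gamma L)$. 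Hence $\ED(A,B) = L - 2\LCS(A,B) \ge (1 - \eps - 6\gamma)L$ once the hidden constants are tracked.

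The main obstacle will be the boundary bookkeeping in Case~B that must conspire to give precisely $6\gamma$. Three distinct sources of slack need to be combined: (i) the first and last runs of $\hat A$ and $\hat B$ may be truncated, affecting the capacities of the boundary source and target blocks; (ii) when $b$ is not a multiple of $\gamma n$, $\bar A$ and $\bar B$ overlap in one position, contributing an extra $+1$ to $\LCS(\bar A, \bar B)$; and (iii) $|\bar A| + |\bar B|$ exceeds $L/(\gamma n)$ by an additive constant of at most $2$. Each of these contributes an $O(\gamma n) = O(\gamma L)$ loss, and their sum must be kept $\le 3\gamma L$ so that doubling (via $\ED = L - 2\LCS$) produces the clean $6\gamma$ promised in the statement.
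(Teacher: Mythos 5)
Your proposal follows exactly the strategy the paper sketches for this lemma: the two staggered copies of $S$ in the first two coordinates handle adjacent intervals of total length at most $n/2$ (since their wrap points are $n/2$ apart, at most one coordinate wraps, and edit distance only decreases under projection), while the stretched copy of $S$ in the third coordinate handles longer intervals via a run-length matching argument. The paper gives only a two-sentence sketch and does not track the constant $6\gamma$ either; the bookkeeping you flag is indeed the only delicate point, and your central claim can be sharpened to the exact inequality $\LCS(\hat A,\hat B)\le \gamma n\cdot\LCS(\bar A,\bar B)$ by covering the monotone chain of matched block-pairs with strict antichains (each such antichain lies in a single row or column of the block grid and thus carries mass at most $\gamma n$, and the number of antichains needed equals the longest strictly increasing subchain, which is a common subsequence of $\bar A$ and $\bar B$), which helps keep the boundary losses within the $3\gamma L$ budget.
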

\begin{proof}[Proof Sketch]
$S'$ is formed by the symbol-wise concatenation of three strings as presented in \cref{eqn:boosting-1}. The first two elements form repetitions of $S$ which guarantee the synchronization property over small intervals and the third element that guarantees the synchronization property over larger intervals.
\end{proof}

Employing this boosting technique for an adequately large number of times can turn the polynomial-time construction of synchronization strings obtained by the algorithmic Lov\'{a}sz local lemma of \cite{chandrasekaran2013deterministic} into a linear time construction at the cost of a larger alphabet that is still of $\eps^{-O(1)}$ size.

\medskip

\subsubsection{Boosting Step II: Explicit Linear-Time Long-Distance Construction}\label{sec:boosting-2}
We now describe a second boosting step introduced in \cite{haeupler2017synchronization3} that takes the linear-time construction from the previous section and turns it into a linear-time construction for long-distance synchronization strings that is also highly-explicit, i.e., for any index $i$, it can compute the substring $[i, i+\log n]$ in $O(\log n)$ time.

To describe the construction, we first point out a connection between long-distance synchronization strings and synchronization codes. Note that if one splits a $c$-long-distance $\eps$-synchronization string into substrings of length $c\log n$, the long-distance synchronization property will require that any pair of resulting substrings to have an edit distance of at least $2(1-\eps)c\log n$, i.e., form an insertion-deletion code of relative distance $1-\eps$. 

Similarly, given a synchronization code $C$ of distance $1-\eps$, rate $r>0$ and block length $N$, one can construct a string of length $n=\exp(Nr)$ by appending the codewords of $C$ together which satisfies the $c$-long-distance $\eps'$-synchronization property for pairs of substrings of total length $\Omega(\log n)$ where $\eps'=O(\eps)$ and $c = O_{\eps}(1)$. This claim is proved by a simple combinatorial argument using the distance property of the code $C$. We refer the reader to \cite{haeupler2017synchronization3} for a formal proof.

The second boosting step uses this observation and makes a long-distance synchronization string by symbol-wise concatenation of the string described above with a string that guarantees the $\eps$-synchronization property for neighboring intervals of total length $O(\log n)$. More formally, having the code $C$ of block length $N$ the construction is as follows.
\begin{equation}
S[i]=\Bigg(C\left(\left\lfloor \frac{i}{N}\right\rfloor\right) \left[i \left( \bmod\, N\right)\right], T[i]\Bigg),\label{eq:boosting-2}
\end{equation}
where $T$ is the symbol-wise concatenation of two shifted repetitions of some synchronization string $S'$ of length $O(\log n)$, i.e., 
$T[i] = \left(S'[i \bmod l], S'[(i+L/2) \bmod l]\right)$ for $l=|S'|$.
\begin{figure}
\centering
\includegraphics[width=\linewidth]{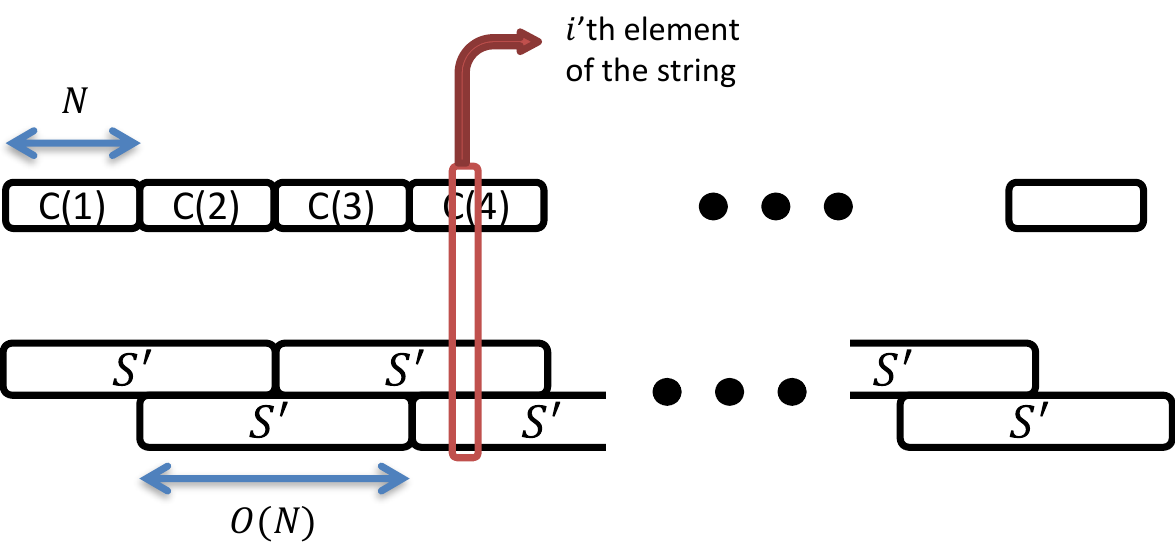}
\IEEEOnly{\setlength\abovecaptionskip{0pt}}
\caption{Pictorial representation of the construction of a long-distance $\eps$-synchronization string of length $n$.}
\label{fig:boosting-2}
\end{figure}
A pictorial description of the construction is available at \cref{fig:boosting-2}. Given that the codewords of the code $C$ take care of the long-distance synchronization property for longer pairs of intervals and repetitions of $S'$ provide the synchronization string guarantee for short neighboring intervals, this construction yields a long-distance synchronization string. In order to show the linear-time and highly-explicit construction, the following two ingredients are necessary:
\begin{itemize}
    \item A linear time construction for synchronization string $S$. This is provided by boosting step I.
    \item Linear time construction for code $C$. To obtain this, a family of high-rate synchronization codes with linear-time construction is necessary. To obtain such a family of codes, \cite{haeupler2017synchronization3} takes the near-MDS code of Guruswami and Indyk~\cite{guruswami2005linear} and concatenates it with an inner code to reduce its alphabet to $\eps^{O(1)}$. Note that the size of the inner code is $O_\eps(1)$. Therefore, the encoding time of the resulting family of codes remains linear and its rate is still positive. \cite{haeupler2017synchronization3} then indexes the codewords of this code with an $\eps$-self matching string as in \cref{sec:technical-warm-up} to obtain the necessary synchronization code for this construction.
\end{itemize}

The details of this construction are available in \cite{haeupler2017synchronization3}. We summarize the guarantees of the construction in the following theorem. 
\begin{theorem}\label{thm:explicit-construction}
There is a deterministic algorithm that, for any constant $0< \eps  < 1$ and $n \in \mathbb{N}$, computes a $c=\eps^{-O(1)}$-long-distance $\eps$-synchronization string $S \in \Sigma^n$ where $|\Sigma|=\eps^{-O(1)}$. This construction runs in linear time and, moreover, any substring $S[i, i+\log n]$ can be computed in $O_\eps(\log n)$ time.
\end{theorem}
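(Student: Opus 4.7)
The plan is to instantiate the two-layer construction from \eqref{eq:boosting-2}: take a high-rate insertion-deletion code $C$ with block length $N=\Theta_\eps(\log n)$ and relative edit distance $1-\eps'$ (for some $\eps'=\Theta(\eps)$), concatenate its codewords back-to-back to form the ``coarse'' layer, and symbol-wise superimpose a second ``fine'' layer $T$ built from two shifted repetitions of a short $\eps$-synchronization string $S'$ of length $\Theta_\eps(\log n)$. The claim is then that this $S$ is a $c$-long-distance $\eps$-synchronization string with $c=\eps^{-O(1)}$: any two non-adjacent intervals of total length $\ge c\log n$ each span at least a constant number of full codeword blocks, so by a pigeonhole/alignment argument their edit distance inherits the $(1-\eps')$ distance of $C$, while any adjacent pair or short pair is handled because the $T$-coordinate alone is a (shifted) repetition of an $\eps$-synchronization string, which forces large edit distance on neighboring windows of length $\Theta_\eps(\log n)$ or less. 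This is exactly the hybrid argument sketched after \eqref{eq:boosting-2}, and I would formalize it by a case split on whether the two intervals fall inside a single ``period'' of $S'$ (fine layer wins) or straddle several blocks of $C$ (coarse layer wins).

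To realize the coarse layer in linear time, I would start from the near-MDS GF$(2)$-linear code of Guruswami--Indyk (\cref{thm:guruswami-indyk-near-MDS-codes}), linear-time encodable over an alphabet of size $2^{\mathrm{poly}(1/\eps)}$, and convert it into a linear-time encodable insertion-deletion code of block length $N$ by (i) indexing its codewords with an $\eps$-self-matching string as in \cref{sec:technical-warm-up}, which gives a family of linear-time encodable synchronization codes of positive rate and relative edit distance $1-\eps'$, and (ii) concatenating with a constant-size inner code of alphabet $\eps^{-O(1)}$ to drive the alphabet down to $\eps^{-O(1)}$. Because the inner code has size $O_\eps(1)$ it can be precomputed in $O_\eps(1)$ time, so encoding a single codeword of $C$ takes $O_\eps(N)=O_\eps(\log n)$ time, which is what I need both for the linear total running time and for the highly-explicit guarantee.

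For the fine layer, I would first invoke the algorithmic Lov\'asz Local Lemma of Chandrasekaran et al.\ (as used in \cite{haeupler2017synchronization3}) on the biased sample space of \cref{sec:sync-strings-existence} to produce, in polynomial time, an $\eps$-synchronization string of moderate length over an alphabet of size $\eps^{-O(1)}$, and then apply Boosting Step~I (\cref{lem:simplepolyboosting}) a bounded number of times to blow up the length to $\Theta_\eps(\log n)$ while keeping the alphabet $\eps^{-O(1)}$ and the parameter within $\Theta(\eps)$. Since each Boosting-I step squares the length, $O(\log\log n)$ rounds suffice, and the resulting $S'$ can be stored once in $O_\eps(\log n)$ space; any of its symbols can then be looked up in $O(1)$ time.

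Combining the two layers via \eqref{eq:boosting-2} yields $S$ of length $n$ over $\Sigma=\Sigma_C\times\Sigma_T$ of size $\eps^{-O(1)}$, and total construction time $n\cdot O_\eps(1)$ because each position of $S$ requires one symbol from a codeword of $C$ (amortized $O_\eps(1)$ per position over a block of $C$) and one table lookup in $S'$. Moreover, computing any window $S[i,i+\log n]$ requires encoding at most two consecutive codewords of $C$ (total time $O_\eps(\log n)$) plus $O(\log n)$ lookups in $S'$, giving the highly-explicit bound. The main obstacle I anticipate is making the long-distance case airtight: one has to rule out bad alignments in which a short non-adjacent pair of intervals whose total length is just above $c\log n$ partially overlaps at most one codeword boundary and therefore cannot directly use the distance of $C$. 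Choosing $c$ large enough (on the order of $1/\eps^{O(1)}$ times the codeword length measured in $\log n$ units) so that every qualifying pair must strictly contain a full codeword of $C$ in one of the two intervals is what forces the $c=\eps^{-O(1)}$ dependence and closes the argument; the adjacent-pair and short-pair cases are then a direct consequence of the $\eps$-synchronization property of $S'$ together with the two-shift trick in the definition of $T$.
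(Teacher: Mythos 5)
Your proposal follows essentially the same route as the paper: the two-layer construction of \eqref{eq:boosting-2} with the coarse layer built from back-to-back codewords of a linear-time insertion-deletion code (Guruswami--Indyk near-MDS codes, alphabet-reduced by concatenation with a constant-size inner code and indexed by a self-matching string) and the fine layer built from two shifted repetitions of a short synchronization string, with the case split between long non-adjacent pairs (handled by the code distance) and short or adjacent pairs (handled by the $\eps$-synchronization property of $S'$ and the half-period shift). The paper itself only sketches this and defers the combinatorial verification to the original reference, so at the level of detail available here your plan matches.

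One concrete slip: you cannot apply Boosting Step I ``$O(\log\log n)$ rounds.'' Each application of \cref{lem:simplepolyboosting} replaces $\Sigma$ by $\Sigma^3$ and degrades the parameter by $6\gamma$, so a super-constant number of rounds destroys both the $|\Sigma|=\eps^{-O(1)}$ bound and the $\eps$-parameter. The intended (and sufficient) use is a \emph{constant} number of rounds: since the LLL-based construction runs in time polynomial in the length of the string it produces, $k=O(1)$ squarings already turn a $\poly(\ell)$-time construction into an $O(\ell)$-time one, and for the fine layer --- which only needs length $\Theta_\eps(\log n)$ --- you could even skip boosting entirely, since $\poly(\log n)$ construction time is negligible against the overall linear budget. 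With that correction your argument closes in the same way as the paper's.
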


\subsection{Infinite Synchronization Strings}
An infinite $\eps$-synchronization string is naturally defined as an infinite string, in which, any two neighboring intervals $[i ,j)$ and $[j, k)$ have an edit distance of at least $(1-\eps)(k-i)$.
Existence of infinite $\eps$-synchronization strings can be proved via a simple topological argument.
Fix any $\eps \in (0, 1)$. According to Theorem~\ref{thm:existence} there exist an alphabet $\Sigma$ of size $O(1/\eps^2)$ such that there exists at least one $\eps$-synchronization strings over $\Sigma$ for every length $n \in \mathbb{N}$. We will define an infinite synchronization string $S = s_1 \cdot s_2 \cdot s_3 \cdots$ with $s_i \in \Sigma$ for any $i \in \mathbb{N}$ inductively. We fix an ordering on $\Sigma$ and define $s_1 \in \Sigma$ to be the first symbol in this ordering such that an infinite number of these strings start with $s_1$. Given that there is an infinite number of $\eps$-synchronization strings over $\Sigma$, such an $s_1$ exists. Furthermore, the subset of $\eps$-synchronization strings over $\Sigma$ which start with $s_1$ is infinite by definition, allowing us to define $s_2 \in \Sigma$ to be the lexicographically first symbol in $\Sigma$ such there exists an infinite number of $\eps$-synchronization strings over $\Sigma$ starting with $s_1 \cdot s_2$. In the same manner, we inductively define the whole string. Since each prefix of this string satisfies the $\eps$-synchronization property, all pairs of adjacent intervals satisfy the $\eps$-synchronization property and this whole string is indeed an infinite $\eps$-synchronization string.

The construction from \cref{thm:explicit-construction} can be generalized for infinite synchronization strings as follows.
\begin{theorem}\label{thm:infinite-construction}
For all $0 < \eps < 1$, there exists an infinite $\eps$-synchronization string $S$ over a $\poly(\eps^{-1})$-sized alphabet so that any prefix of it can be computed in linear time. Further, for any $i$, $S[i, i+\log i]$ can be computed in $O(\log i)$ time.
\end{theorem}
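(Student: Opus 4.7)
My plan is to adapt the linear-time, highly-explicit construction from \cref{thm:explicit-construction} to the infinite setting by running it in a doubling fashion, so that the result is prefix-consistent and every prefix agrees with the output of the underlying finite construction.

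First, I would fix a decomposition of $\mathbb{N}$ into doubling blocks $B_k = [2^k, 2^{k+1})$ for $k \ge k_0$. Within each block $B_k$, I would invoke the construction of \cref{thm:explicit-construction} with parameter $\eps/4$ and target length $n_k = 2^k$, obtaining a $c$-long-distance $(\eps/4)$-synchronization string $\widetilde{S}_k$ of length $n_k$ over a common alphabet $\Sigma_0$ of size $\poly(\eps^{-1})$. These blocks would supply the first coordinate of the infinite string $S$, with the first coordinate at position $i \in B_k$ being $\widetilde{S}_k[i - 2^k]$.

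To guarantee the $\eps$-synchronization property across block boundaries, I would layer in a short-distance component in the spirit of Boosting Step II (\cref{eq:boosting-2}). Let $S'$ be an infinite $(\eps/4)$-synchronization string, whose existence was established via the compactness argument immediately preceding the theorem (and a prefix-consistent version can also be obtained by iterating Boosting Step I with properly nested parameters). Then at position $i \in B_k$, define
\[
S[i] \;=\; \Bigl(\widetilde{S}_k[i - 2^k],\; S'[i \bmod l_k],\; S'\bigl[(i + l_k/2) \bmod l_k\bigr]\Bigr),
\]
where $l_k = \Theta(k)$ is a period chosen constant inside block $B_k$ and doubled across blocks. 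The two shifted copies of $S'$ will handle the synchronization property for adjacent substrings of total length up to $\Theta(\log i)$, while the long-distance property of $\widetilde{S}_k$, together with the fact that distinct blocks use distinct codewords of the same outer insertion-deletion code family, will handle pairs of substrings of larger total length, including those that straddle block boundaries.

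The complexity guarantees would then follow by inheritance. Building the prefix of length $N$ requires constructing $\widetilde{S}_k$ for every $k \le \lceil \log N \rceil$, which costs $O\bigl(\sum_{k \le \log N} 2^k\bigr) = O(N)$ by the linear-time guarantee of \cref{thm:explicit-construction}, plus a length-$N$ prefix of $S'$, also computable in linear time. For local access, $S[i, i+\log i]$ touches at most two consecutive blocks, and inside each block the highly-explicit access promised by \cref{thm:explicit-construction} delivers the relevant $O(\log i)$ symbols in $O(\log i)$ time; the $S'$-coordinates are locally computable within the same budget. The main obstacle I anticipate is the cross-boundary case of the synchronization property: intervals whose union crosses one or more doubling boundaries. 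For small total length this is absorbed by the $S'$-layer, but for total length larger than $l_k$ one must argue that the portion falling inside any single $B_k$ still enjoys the long-distance property, so the codeword structure provides edit distance at least $(1-\eps/4)$ times that portion's length, and that the constant-factor loss from splitting across $O(1)$ blocks is absorbed by the slack between $\eps/4$ and the target $\eps$. Tracking this slack while keeping the alphabet at $\poly(\eps^{-1})$ and preserving the local-access time bound is where the bulk of the bookkeeping lies.
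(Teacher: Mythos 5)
Your decomposition into doubling blocks with independent invocations of \cref{thm:explicit-construction} per block leaves the central difficulty unresolved, and the resolution you sketch does not go through. Consider two adjacent intervals $[i,j)$ and $[j,k)$ with $j=2^m$, the left interval a suffix of $B_{m-1}$ and the right a prefix of $B_m$, each of length $\Theta(2^m)$. On the first coordinate these are substrings of two \emph{different} strings $\widetilde{S}_{m-1}$ and $\widetilde{S}_m$, and the long-distance property of each string says nothing about the edit distance \emph{between} them; since the construction is deterministic, nothing prevents the two runs from producing long identical stretches (each independent run would, for instance, enumerate codewords of the outer code starting from index $0$). The $S'$-layer cannot rescue this either: with period $l_k=\Theta(k)=\Theta(\log 2^k)$, two adjacent intervals of length $\gg l_k$ are near-identical periodic strings on those coordinates. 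Your appeal to ``distinct blocks use distinct codewords'' would require a single global, coordinated enumeration of codewords across all blocks --- i.e., one unified construction rather than independent per-block invocations --- which is precisely the machinery your proposal does not set up. Note also that with doubling blocks the straddling portion of such a pair is a \emph{constant} fraction of its total length, so it cannot be absorbed into the $\eps/4$-to-$\eps$ slack.

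The paper avoids cross-string comparisons entirely. It forms two coordinates $U=(S_k,S_{k^3},S_{k^5},\dots)$ and $V=(S_{k^2},S_{k^4},S_{k^6},\dots)$, each a concatenation of finite $\Theta(\eps)$-synchronization strings whose lengths grow by a polynomial factor at each step, with the two sequences of block boundaries staggered against each other. Because each new block is polynomially longer than everything preceding it, every block boundary of $U$ lies within a negligible initial fraction of the current block of $V$ (and vice versa), so for \emph{any} pair of adjacent intervals at least one of the two coordinates contains all but an $O(\eps)$ fraction of the pair inside a single finite synchronization string, and the finite property applies directly. The complexity claims then follow by instantiating the finite strings via \cref{thm:explicit-construction}, as you correctly anticipated for your own blocks. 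If you want to keep your architecture, you would need to replace the independent per-block runs with either this staggered two-layer scheme or a single global code-concatenation indexed consistently over all of $\mathbb{N}$.
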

The proof of \cref{thm:infinite-construction} utilizes a construction for infinite synchronization strings obtained by concatenation of finite synchronization strings of exponentially increasing length. More precisely, let $S_i$
 denote a $\Theta(\eps)$-synchronization string of length $i$. Further, let $U$ and $V$ be as follows:
 $$U = (S_k, S_{k^3}, S_{k^5}, \dots), \qquad V = (S_{k^2}, S_{k^4}, S_{k^6}, \dots)$$
Then \cite{haeupler2017synchronization3} shows that the symbols-wise concatenation of these strings, i.e., string $T$ where $T[i] = (U[i], V[i])$ is an infinite synchronization string. A pictorial representation of the construction of $T$ is available in \cref{fig:infinite-construction}. The proof of \cref{thm:infinite-construction} is derived by simply using the above-mentioned construction and utilizing \cref{thm:explicit-construction} to construct the finite strings used to form $U$ and $V$.
\begin{figure}
\centering
\includegraphics[width=\linewidth]{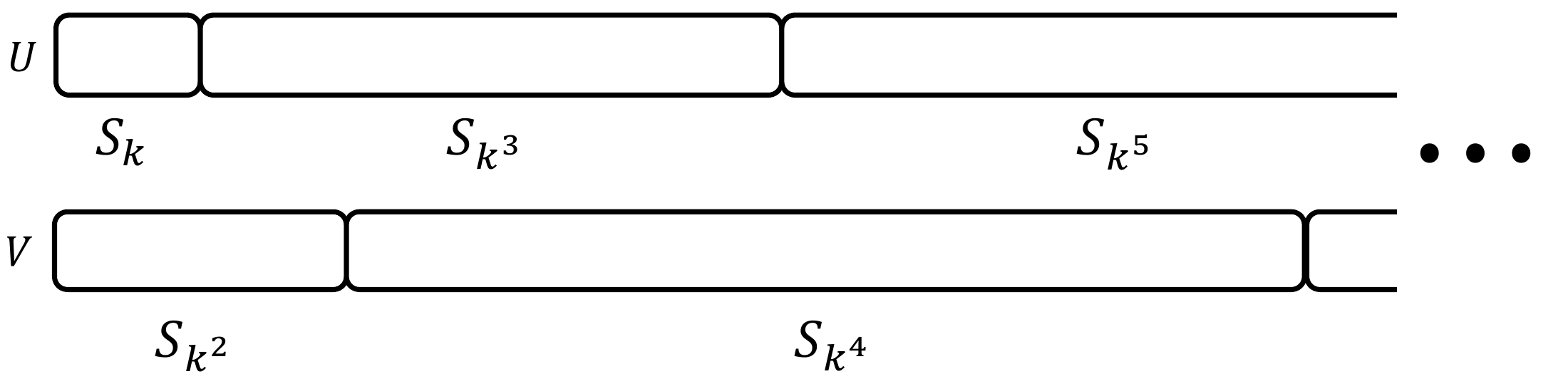}
  \IEEEOnly{\setlength\abovecaptionskip{0pt}}
\caption{Construction of infinite synchronization string $T$}
\label{fig:infinite-construction}
\end{figure}

\subsection{Local Decoding for Long-Distance Synchronization Strings}\label{sec:local-decoding}
In this section, with a slight modification to the construction from \eqref{eq:boosting-2} for long-distance synchronization strings, we give an index sequence which facilitates local repositioning. A \emph{local} repositioning algorithm is one that guesses the position of a received symbol using only the knowledge of a small $O(\log n)$-sized neighborhood of the surrounding received symbols, as opposed to all received symbols which is what \cref{alg:global-decoder} does. 

Consider the following indexing sequence that is obtained from the construction of \eqref{eq:boosting-2} by concatenating an extra term that essentially works as a circular index counter for insertion-deletion code blocks.
\begin{equation}
R[i]=\Bigg(C\left(\left\lfloor \frac{i}{N}\right\rfloor\right) \left[i \left( \bmod\, N\right)\right], T[i], \left\lfloor \frac{i}{N}\right\rfloor \left(\bmod\, \frac{8}{\eps^3}\right)\Bigg)\label{eq:local-decoding-construction}
\end{equation}
We claim that indexing the symbols of a communication over a synchronization channel with this string allows a repositioning algorithm which is both local and online. 
\begin{theorem}\label{thm:local-repositioning-for-long-distance}
For a communication over a synchronization channel that is indexed by string $R$ specified in \eqref{eq:local-decoding-construction}, there exists an online and local repositioning algorithm that guesses the position of each received symbol using only the symbol itself and $O_\eps(\log n)$ symbols preceding it in $O_\eps(\log^3 n)$ time. Also, among all symbols that are not deleted by the adversary, the position of no more than $\frac{n\delta}{1-\eps}$ will be
incorrectly guessed.
\end{theorem}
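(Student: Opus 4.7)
The plan is to split the decoding into two stages driven by the two non-counter coordinates of $R$. Fix a window length $w = \Theta_\eps(\log n)$ that is large enough to contain one full period $l = |S'|$ of $T$ and one full codeword of $C$, but small enough that every length-$w$ window spans far fewer than $8/\eps^3$ consecutive blocks of length $N$. In the first stage, upon receiving each new symbol, the decoder runs the online $\RSD$-based algorithm of \cref{thm:online-reposition-misdecodings} on the $T$-projections of the last $w$ received symbols to guess the position of the new symbol modulo $l$. This works because, by the construction in \cref{sec:boosting-2}, $T$ is the symbol-wise pairing of two shifted repetitions of an $\eps$-synchronization string $S'$, so every length-$w$ substring of $T$ inherits the synchronization property required by \cref{thm:online-reposition-misdecodings}.

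In the second stage, the decoder maintains a running absolute block index $B$. Since the counter $\lfloor i/N \rfloor \bmod 8/\eps^3$ takes pairwise distinct values on the blocks touched by any given window, it uniquely identifies each symbol's block \emph{within the window}; the decoder advances $B$ by the minimal nonnegative circular difference whenever the new symbol's counter differs from the previous one, and then reports its absolute position as $BN$ plus the intra-block offset derived from stage one. Initialization of $B$, as well as occasional recalibration when local information looks inconsistent, is handled by invoking the decoder of $C$ from \cref{thm:guruswami-indyk-near-MDS-codes} on the full codeword visible in the window; since $C$ has relative edit distance $1-\eps$, this succeeds whenever fewer than $\Theta(\eps)\cdot N$ symbols of that codeword are corrupted.

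The total number of misdecoded positions is at most $\frac{n\delta}{1-\eps}$ by \cref{thm:online-reposition-misdecodings} applied to the $T$-coordinate. Block-index errors can only occur inside windows where the error fraction exceeds $\Theta(\eps)$, which already force at least $\Theta(\eps \log n)$ stage-one misdecodings in the same window and are thus absorbed into this budget. The per-symbol runtime is $O_\eps(\log^3 n)$: evaluating the $\RSD$ over the length-$w$ window calls for $O(w)$ candidate suffix lengths, each requiring one edit-distance computation of cost $O(w^2)$, which dominates the $O_\eps(\log n)$ cost of counter bookkeeping and the occasional codeword-level decoding of $C$.

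The main obstacle I expect is verifying that restricting the online $\RSD$ decoder of \cref{thm:online-reposition-misdecodings} to a length-$w$ history does not inflate its misdecoding count. The proof of \cref{thm:online-reposition-misdecodings} is driven by the mutual $\RSD$ of prefixes of a synchronization string (\cref{lem:RSD-prefixes}), and one must check that the worst suffix witness around any misdecoded symbol can be taken of length $O_\eps(\log n)$: longer suffix witnesses would certify an adversarial error density above $1-\eps$ on the corresponding suffix, which a standard charging argument pays for using the $\eps$-synchronization property of $T$ at length scale $w$. Granted this locality of the $\RSD$ witness, the rest of the argument is a routine composition of the two stages.
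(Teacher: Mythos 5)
Your overall architecture (a length-$O_\eps(\log n)$ window, the counter plus the code $C$ for coarse localization, the periodic string $T$ for fine localization) matches the paper's, but the paper runs these stages in the opposite order --- vicinity first via $C$ and the counter, then the offset within the vicinity via $T$ --- and this ordering matters. Your first stage applies the online $\RSD$ decoder of \cref{thm:online-reposition-misdecodings} to the $T$-projection, but $T$ is periodic with period $l$, so it is emphatically \emph{not} an $\eps$-synchronization string at the scale of your window: prefixes whose lengths differ by a multiple of $l$ are indistinguishable under $\RSD$, and \cref{lem:RSD-prefixes} fails. The repetitions of $S'$ only certify the synchronization property for neighboring intervals of total length below the period, which is exactly why the paper can only deploy $T$ \emph{after} the vicinity has been pinned down. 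More seriously, your proposal never invokes the long-distance synchronization property of the underlying string, which is the paper's second step and the crux of vicinity selection: decoding $C$ inside a window can produce several plausible candidate codewords (the adversary may insert a faithful copy of a codeword from a distant part of the string), and only the long-distance property lets the decoder reject the impostors. Your stateful running block index $B$ with ``occasional recalibration'' does not substitute for this --- a burst of deletions spanning a full counter cycle of $8N/\eps^3$ symbols silently desynchronizes $B$ while keeping all local information perfectly consistent.

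The quantitative accounting also does not reach the stated bound. You charge an entire window of misdecodings to the $\Theta(\eps)$-fraction of errors that caused it, which yields roughly $1/\eps$ misdecodings per error, i.e.\ a bound of order $n\delta/\eps$ rather than $\frac{n\delta}{1-\eps}$. The paper's analysis instead uses the notion of \emph{suffix error density}: the algorithm is shown to decode correctly whenever the maximum error density over all suffixes of the transcript so far is below $1-O(\eps)$, and a counting argument shows at most $\frac{n\delta}{1-\eps}$ received symbols can ever see suffix error density exceeding $1-\eps$. Your closing remark correctly identifies that some locality-of-witness argument is needed, but the resolution is not to localize the $\RSD$ witness of \cref{thm:online-reposition-misdecodings}; it is to replace that global analysis with the suffix-density one and to let the long-distance property do the work that global prefix comparison did in the non-local setting.
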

We remark that, similar to \cref{thm:online-reposition-misdecodings}, the number of incorrect guesses does not tend to zero by taking smaller values for $\eps$ and synchronization codes constructed using such an index string would achieve a rate of $1-3\delta-\eps^{O(1)}$.

\begin{proof}[Proof Sketch of \cref{thm:local-repositioning-for-long-distance}]
The details of the decoding algorithm and the proof of its properties are presented in \cite{haeupler2017synchronization3}. Here, we only give a high-level description of the algorithm and an informal justification of its correctness.

The analysis uses a notion called \emph{suffix error density}, that is the maximum density of errors that have occurred over all suffixes of the communication up to that point. The algorithm guarantees to make a correct guess as long as the suffix error density is less than $1-O(\eps)$.

The decoding algorithm begins with using the first and the third elements of \eqref{eq:local-decoding-construction} (that are the codeword of $\mathcal{C}$ and the circular counter) to find the vicinity of the location; more precisely, which codeword of $C$ the symbol belongs to or the quantity $\left\lfloor \frac{i}{N}\right\rfloor$. In order to do so, the algorithm looks at the $O\left(\frac{N}{\eps^2}\right)=O\left(\frac{\log n}{\eps^2}\right)$ last received symbols. If the value of the suffix error density is small, at least one of the previous $\frac{1}{\eps}$ codewords of $C$ appear inside this window with less than $1-O(\eps)$ synchronization errors. Let the value of the counter for the received symbol be $c$. With this observation, the repositioning algorithm runs the decoder of $C$ over the subsequences of symbols with counter values $c, c-1, c-2, \cdots, c-\frac{1}{\eps}$ in this window and comes up with a list of candidate vicinities for the position of the symbol. A second step uses the long-distance property of the string to choose one of these vicinities. Then, the exact position of the symbol within that vicinity is recovered using the repetitions of the small synchronization string that forms string $T$. (see \eqref{eq:local-decoding-construction})
\end{proof}

\subsection{Edit Distance Indexing and Near-linear Time Repositioning}\label{sec:edit-distance-approx-and-near-linear-repositioning}
As the final step in this section, we introduce a pseudo-random property for strings that, indexed to any given string, can facilitate edit distance computation. We then use such strings to enhance the construction of self-matching strings to obtain near-linear time repositioning algorithms.

\subsubsection{Edit Distance Approximation via Indexing}\label{sec:edit-ditance-indexes}
In this section, we introduce an indexing scheme which can be used to approximate edit distance in near-linear time if one of the strings is indexed by an \emph{edit-distance-approximating} string $I$. In particular, for every length $n$ and every $\eps >0$, one can, in near-linear time, construct a string $I \in \Sigma'^n$ with $|\Sigma'| = O_{\eps}(1)$, such that, indexing any string $S \in \Sigma^n$ with $I$ results in the string $S\times I \in \Sigma''^n$ where $\Sigma'' = \Sigma \times \Sigma'$ and there is an algorithm that approximates the edit distance between $S\times I$ and any other string within a $1+\eps$ factor and in $O(n\cdot
\poly(\log n))$ time.

The construction of the index string $I$ resembles the construction proposed for long-distance synchronization strings in \cref{sec:boosting-2}. Namely, the index string $I$ is simply constructed by writing, back-to-back, the codewords of a synchronization code that is list-decodable from high rates of synchronization errors, or more specifically, from any $1-\eps'$ fraction of insertions and $1-\eps'$ fraction of deletions for some $\eps' = \Theta(\eps)$. As we mentioned in \cref{sec:list-decodable-codes}, it is shown in \cite{haeupler2018synchronization4} that there exist families of codes that can be efficiently $L$-list-decoded from any $1-\eps'$ fraction of insertions and $1-\eps'$ fraction of deletions, achieve a rate of $\eps'/2$, and are defined over an alphabet of size $\exp(\eps'^{-3}\log \frac{1}{\eps'}) = O_{\eps'}(1)$. Here, $L$ is some sub-polynomial function of the block length of the code.

Note that the properties of the code directly implies that the index string $I$ is defined over an alphabet of size $O_\eps(1)$ and can be computed in near-linear time. Let us take a member of the above-mentioned family of codes like $C$ with block length $N$ and denote its block length and its decoding function respectively with $N$ and $\Decode_C(\cdot)$. Since the code has a positive rate, the length of the string formed by appending the codewords of $C$ together would be $n = \exp(N)$ and therefore, the construction time for index string $I$ is $\frac{n}{N}\cdot \poly(N) = O(n\cdot\polylog(n))$. 

We now describe the algorithm that approximates the edit distance of $S$ indexed with $I$ ($S\times I$) to any given string $S'$.
The algorithm starts by splitting the string $S'$ into blocks of length $N$ in the same spirit as $S\times I$. We denote the $i$th such block by $S'(i)$ and the $i$th block of $S\times I$ by $[S\times I](i)$. Note that the blocks of $S\times I$ are substrings of $S$ indexed by the codewords of an insertion-deletion code with high distance ($[S\times I](i)= S[N(i-1), Ni-1]\times C(i)$). 
Now, consider the set of insertions and deletions that correspond to the edit distance between $S\times I$ and $S'$ or n arbitrary one of them if there are more than one. 
One might expect that any block of $S$ that is not significantly altered by such insertions and deletions, (i) appears in a set of consecutive blocks in $S'$ and (ii) has a small edit distance to at least one of those blocks.

Following this intuition, our proposed algorithm works thusly: For any block of $S'$ like $S'(i)$, the algorithm uses the list-decoder of $\mathcal{C}$ to find all (up to $L$) blocks of $S$ that can be turned into $S'(i)$ by $N(1-\eps)$ deletions and $N(1-\eps)$ insertions only considering the index portion of the alphabet and ignoring the content portion of it. In other words, let $S'(i)=C'_i\times S'_0[N(i-1), Ni-1]$. We denote the set of such blocks by $\Decode_{\mathcal{C}}(C'_i)$.
Then, the algorithm constructs a bipartite graph $G$ with $|S|$ and $|S'|$ vertices on each side (representing symbols of $S$ and $S'$) as follows: a symbol in $S'(i)$ is connected to all identical symbols in the blocks that appear in $\Decode_{\mathcal{C}}(C'_i)$ or any block that is in their 
$w=O\left(\frac{1}{\eps}\right)$
 neighborhood, i.e., is up to 
$O\left(\frac{1}{\eps}\right)$
  blocks away from at least one of the members of $\Decode_{\mathcal{C}}(C'_i)$.

Note that any non-crossing matching in $G$ corresponds to some common subsequence between $S$ and $S'$ because $G$'s edges only connect identical symbols. In the next step, the algorithm finds the largest non-crossing matching in $G$, $\mathcal{M}_{ALG}$, and outputs the corresponding set of insertions and deletions as the output. Finally, an algorithm proposed by Hunt and Szymanski~\cite{hunt1977fast} is used to compute the largest non-crossing matching of $G$ with $n$ vertices and $r$ edges in $O\left((n+r)\log\log n\right)$. 
A formal description is available in~\cref{alg:EditDistanceApprox}.
As the number of edges of $G$ cannot exceed $\frac{n}{\log n}\cdot \log^2 n = n\log n$ and code $C$ is efficiently list-decodable, the run time for this algorithm is $O(n\cdot\polylog (n))$.

\begin{algorithm}
\caption{$(1+O(\eps'))$-Approximation for Edit Distance}\label{alg:EditDistanceApprox}
\begin{algorithmic}[1]
\REQUIRE{$S\times I, S', N, \Decode_{\mathcal{C}}(\cdot)$}

\STATE Make empty bipartite graph $G(|S|, |S'|)$
\STATE $w=\frac{1}{\eps'}$

\FOR{{\bf each} $S'(i)=C'_i\times S'_0[N(i-1), Ni-1]$}
\STATE{$List \leftarrow \Decode_{\mathcal{C}}(C'_i)$}
\FOR{{\bf each} $j \in List$}
\FOR{$k\in\left[j-w, j+w\right]$}
\STATE \label{step:adding-edges-to-G}Connect pairs of vertices in $G$ that correspond to identical symbols in $S(k)$ and $S'(i)$.
\ENDFOR
\ENDFOR
\ENDFOR

\STATE \label{step:largest-non-crossing}$\mathcal{M}_{ALG} \leftarrow$ Largest non-crossing matching in $G$ (\cite{hunt1977fast})

\ENSURE $\mathcal{M}_{ALG}$
\end{algorithmic}
\end{algorithm}

The detailed proof of the approximation guarantee is available in \cite{haeupler2019near}. We provide a general proof sketch here. 

Note that if graph $G$ from \cref{alg:EditDistanceApprox} contains the matching that corresponds to the $\LCS$ between $S\times I$ and $S'$, then the algorithm will find the longest common subsequence in Line \ref{step:largest-non-crossing} and compute the exact edit distance. To show that \cref{alg:EditDistanceApprox} finds a $1+O(\eps')$ approximation of the edit distance, \cite{haeupler2019near} associates any edge from the $\LCS$ missing in $G$ to $O(1/\eps')$ insertions or deletions from the optimal edit distance solution. 

Consider the matching that corresponds to the $\LCS$. If some block of $S'$ like $S'(i')$ is connected to more than $1+\frac{1}{\eps'}$ blocks in $S$, the unmatched vertices among those blocks account for $n\times \frac{1}{\eps'}$ deletions in the optimal edit distance solution. Therefore, even if none of the edges of $\LCS$ that have an endpoint in such blocks appear in $G$, the size of the edit distance would increase by a factor of $1+O(\eps')$. This is why the parameter $w$ is chosen as $\frac{1}{\eps'}$ in \cref{alg:EditDistanceApprox}.

Further, if some block of $S'$ is only connected to one block of $S$ and has no more than $\frac{N}{\eps}$ edges to it, $N - \frac{N}{\eps}$ of its symbols are insertions in the optimal edit distance solution. Therefore, the absence of its edges from $G$ in \cref{alg:EditDistanceApprox} may only increase the size of the edit distance solution by a factor of $1+O(\eps')$.

In \cite{haeupler2019near}, the authors show that all $\LCS$ edges that are absent from $G$ fall under these two categories and, therefore, the outcome of \cref{alg:EditDistanceApprox} is an $1+O(\eps')=1+\eps$ approximation.

\medskip
\subsubsection{Near-linear Time Repositioning}
Note that the repositioning algorithm for strings indexed with $\eps$-synchronization strings that was presented in \cref{alg:global-decoder} consists of multiple rounds of edit distance computation between the synchronization string used and a distorted version of it. 
To reduce the run time of the repositioning algorithm, one can use the edit-distance approximating indexes from \cref{sec:edit-ditance-indexes} and index $\eps$-synchronizations strings with them. Then, use edit distance approximations instead of exact computations in \cref{alg:global-decoder}. We formally summarize this in the following.

\begin{theorem}[Theorem 7.1 of \cite{haeupler2019near}]\label{lem:enhanced-sync-string-decoding}
Let $S$ be a string of length $n$ that consists of the symbol-wise concatenation of an $\eps_{s}$-synchronization string and an edit distance indexing sequence from \cref{sec:edit-ditance-indexes} with parameter $\eps_I$. Assume that a stream of messages indexed by $S$ goes through a channel that might impose up to $\delta \cdot n$ deletions and $\gamma\cdot n$ symbol insertions for some $0\le\delta<1$ and $0\le\gamma$. For any positive integer $K$, there exists a repositioning algorithm that runs in $O(Kn\cdot\polylog(n))$ time, guarantees up to 
$n\left(
\frac{1+\gamma}{K(1+\eps_I)}+ \frac{\eps_I(1+\gamma/2)}{1+\eps_I} +K\eps_s
\right)$
incorrect guesses and does not decode more than $K$ received symbols to any number in $[1, n]$.
\end{theorem}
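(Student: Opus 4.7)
My plan is to run a $K$-round variant of \cref{alg:global-decoder} in which the exact $\LCS$ computation is replaced by the near-linear-time $(1+\eps_I)$-approximation of edit distance from \cref{alg:EditDistanceApprox}, which applies here because the second component of $S$ is an edit-distance-approximating index of the kind built in \cref{sec:edit-ditance-indexes}. In each round $i$, I would compute the non-crossing matching output by \cref{alg:EditDistanceApprox} between $S$ and the remaining received stream $L_i$, use every matched pair $(S[i_0],L_i[j])$ to guess position $i_0$ for the received symbol $L_i[j]$, and then delete the matched symbols to form $L_{i+1}$. Each round runs in $O(n\cdot\poly(\log n))$ time, yielding the claimed $O(Kn\cdot\poly(\log n))$ total; since each round assigns at most one symbol to any given position in $[1,n]$, the ``at most $K$ symbols per position'' guarantee is immediate.

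For the error bound I would, as in the proof of \cref{lem:global-decoding}, split incorrect position guesses on original (non-inserted) symbols into Type~I (never matched) and Type~II (matched but to the wrong position). Let $k$ be the number of Type~I symbols remaining after all $K$ rounds; these symbols appear in $L_i$ in their correct original order for every $i$, so the true $\LCS$ satisfies $L^\ast_i \ge k$ in every round. Translating the $(1+\eps_I)$-approximation guarantee on edit distance into a lower bound on the matching size $M_i$ returned in round $i$ gives
\[
M_i \;\ge\; (1+\eps_I)\,L^\ast_i \;-\; \tfrac{\eps_I}{2}\bigl(|S|+|L_i|\bigr)\;\ge\;(1+\eps_I)\,k \;-\; \tfrac{\eps_I}{2}\,n(2+\gamma).
\]
Summing over $i=1,\dots,K$ and using $\sum_i M_i \le |L_1| \le n(1+\gamma)$ then rearranges to the Type~I bound
\[
k \;\le\; n\left(\frac{1+\gamma}{K(1+\eps_I)} \;+\; \frac{\eps_I(1+\gamma/2)}{1+\eps_I}\right).
\]
For Type~II, each wrong match in a given round corresponds to a pair of distinct positions $i\neq i'$ in $S$ with $S[i]=S[i']$, hence in particular to a matching pair of identical symbols at positions $i$ and $i'$ of the $\eps_s$-synchronization portion of $S$. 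Because the matching from \cref{alg:EditDistanceApprox} is non-crossing, these wrong pairs form a non-aligned self-matching in the $\eps_s$-synchronization string; using that $\eps_s$-synchronization implies $\eps_s$-self-matching, this is of size at most $\eps_s n$ per round, hence at most $K\eps_s n$ in total. Adding the two contributions reproduces the theorem's bound.

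The main obstacle I anticipate is the tight accounting in the Type~I step: the naive translation $M_i \ge L^\ast_i - \tfrac{\eps_I}{2}D^\ast_i$ of the approximation guarantee only yields a first term of the form $\tfrac{1+\gamma}{K}$ and misses the $\tfrac{1}{1+\eps_I}$ improvement. Recovering the claimed $\tfrac{1+\gamma}{K(1+\eps_I)}$ requires the sharper rearrangement $M_i \ge (1+\eps_I)L^\ast_i - \tfrac{\eps_I}{2}(|S|+|L_i|)$ used above, followed by the uniform upper bound $|S|+|L_i| \le n(2+\gamma)$ to control the additive slack. A secondary technical point is to verify that the matching output by \cref{alg:EditDistanceApprox}, viewed in the product alphabet of $S$, is a genuine common subsequence (non-crossing and between equal symbols), so that both the budget $\sum_i M_i \le |L_1|$ and the Type~II self-matching argument go through without modification.
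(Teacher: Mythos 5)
Your proposal is correct and takes essentially the same route the paper indicates: run the multi-round repositioning of \cref{alg:global-decoder} with the exact $\LCS$ replaced by the $(1+\eps_I)$-approximate matching from \cref{alg:EditDistanceApprox}, bound the never-matched symbols by summing the per-round matching sizes against the budget $|L_1|\le n(1+\gamma)$, and bound the wrongly-matched symbols per round by $n\eps_s$ via the self-matching property of the synchronization component. Your sharper rearrangement $M_i \ge (1+\eps_I)L^\ast_i - \tfrac{\eps_I}{2}(|S|+|L_i|)$ is exactly what is needed to recover the $\tfrac{1+\gamma}{K(1+\eps_I)}$ and $\tfrac{\eps_I(1+\gamma/2)}{1+\eps_I}$ terms, so the quantitative bound comes out precisely as stated.
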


\section{Further Applications of Synchronization Strings}\label{sec:other-applications}

\subsection{Codes for Block Transpositions and Replications}
We showed in \cref{sec:local-decoding} that using long-distance synchronization strings in the indexing-based synchronization code construction allows local repositioning, i.e., the decoder will be able to guess the original position of each symbol by only looking at a logarithmically long neighborhood of the received symbol. In this section, we show that this property enables the code to protect from block transposition and block duplication errors as well.

Block transposition errors allow for arbitrarily long substrings of the message to be moved to another position in the message string. Similarly, block duplication errors are ones that pick a substring of the message and copy it between two symbols of the communication.

We will present codes that can achieve a rate of $1-\delta-\eps$ and correct from some $O(\delta)$ fraction of synchronization errors, a $O(\delta/\log n)$ fraction of block errors, or a combination of them. A similar result for insertions, deletions, and block transpositions was shown by Schulman and Zuckerman \cite{schulman1999asymptotically} where they provided the first constant-distance and constant-rate synchronization code correcting from insertions, deletions, and block errors. They also show that the $O(\delta / \log n)$ resilience against block errors is optimal up to constants. 

\begin{theorem}
For any $0<r<1$ and sufficiently small $\eps$, there exists a code with rate $r$ that corrects $n\delta_{insdel}$ synchronization errors and $n\delta_{block}$ block transpositions or replications as long as $6\delta_{insdel} + (c\log n) \delta_{block} < 1-r-\eps$ for some $c=O(1)$. The code is over an alphabet of size $O_\eps(1)$ and has $O(n)$ encoding and $O(N\log^3 n)$ decoding complexities where $N$ is the length of the received message.
\end{theorem}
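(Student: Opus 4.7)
The plan is to combine the indexing-based construction from \cref{sec:technical-warm-up} with the long-distance synchronization strings of \cref{sec:local-decoding} whose local repositioning only inspects an $O(\log n)$ neighborhood of each received symbol. Take a near-MDS code $C$ from \cref{thm:guruswami-indyk-near-MDS-codes} with rate $r + \frac{\eps}{2}$ and relative distance $1 - r - \frac{\eps}{2}$ over a sufficiently large alphabet, and index its codewords with the long-distance synchronization string $R$ from \eqref{eq:local-decoding-construction} with parameter $\eps' = \Theta(\eps)$. By choosing $C$'s alphabet large enough, the indexing overhead costs at most an additive $\eps/2$ in the rate, yielding a code of rate $r$ over an alphabet of size $O_\eps(1)$. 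Encoding is linear in $n$ by \cref{thm:guruswami-indyk-near-MDS-codes} and \cref{thm:explicit-construction}.

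On the decoding side, apply the local repositioning algorithm from \cref{thm:local-repositioning-for-long-distance} to every received symbol, reconstruct the outer codeword by placing each symbol at its guessed position (using \texttt{?} when multiple or no symbols land at a position), and then run the decoder of $C$ on the resulting half-error-corrupted string. The analysis from \cref{sec:technical-warm-up} relating synchronization errors to half-errors carries over: a deletion contributes one half-error, an insertion contributes one half-error, and every misdecoded position contributes at most four half-errors. Hence a budget of $n\delta_{insdel}$ insdel errors translates, via \cref{thm:local-repositioning-for-long-distance}, into at most roughly $6 n\delta_{insdel}/(1-\eps')$ half-errors, which is at most $6n\delta_{insdel}$ up to lower-order terms.

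The crucial new step is the accounting for block transposition and block duplication errors. I would argue that, because the local decoder in \cref{thm:local-repositioning-for-long-distance} only consults an $O_\eps(\log n)$ window surrounding each received symbol, any received symbol whose window lies entirely within a contiguous unaltered piece of transmitted stream is repositioned exactly as the pure insertion/deletion analysis would predict. A single block transposition creates only $O(1)$ new boundaries between contiguous unaltered pieces, and only symbols lying within an $O(\log n)$ window of such a boundary can be misrepositioned beyond what the insdel error budget already accounts for. A block duplication is handled the same way, with the further observation that the duplicated copy can be interpreted as a block of ``insertions'' bounded by two boundary windows. Therefore $n\delta_{block}$ block errors introduce at most $c n\delta_{block} \log n$ additional misdecodings for a constant $c = O_\eps(1)$, contributing at most $(c \log n)\, n\delta_{block}$ half-errors.

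Combining, the string fed to the outer decoder differs from the correct codeword of $C$ by at most $(6\delta_{insdel} + (c\log n)\delta_{block})\, n$ half-errors, which is strictly less than the relative distance $1 - r - \eps/2$ of $C$ by the hypothesis, so unique decoding succeeds. The encoding time is $O(n)$ and the decoding time is dominated by running the local repositioner on each of the $N$ received symbols and then invoking the linear-time decoder of $C$, for a total of $O(N \log^3 n)$. The main obstacle I anticipate is making the ``each block error spoils only an $O(\log n)$ window'' claim rigorous: one must track how the suffix error density used in the proof of \cref{thm:local-repositioning-for-long-distance} responds to a block edit, and verify that outside a logarithmic neighborhood of each boundary the long-distance synchronization property still identifies the correct codeword block of the inner code $\mathcal{C}$ used in \eqref{eq:local-decoding-construction}; this is where the long-distance property (rather than just the ordinary synchronization property) is essential, since the moved block could be anywhere in the message.
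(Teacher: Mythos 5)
Your proposal follows essentially the same route as the paper's proof sketch: index the Guruswami--Indyk near-MDS code with the long-distance synchronization string of \eqref{eq:local-decoding-construction}, use the local repositioning algorithm of \cref{thm:local-repositioning-for-long-distance}, and argue that each block transposition or duplication can only corrupt the position guesses of symbols within an $O(\log n)$ window of its $O(1)$ boundaries, giving the $(c\log n)\delta_{block}$ term, with the insdel half-error accounting carried over from \cref{sec:technical-warm-up}. The paper's own argument is also only a sketch at the same level of detail, so your construction, decoding procedure, and error accounting match it.
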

\begin{proof}[Proof Sketch]
Similar to \cref{sec:technical-warm-up}, this code is constructed by indexing near-MDS codes of Guruswami and Indyk~\cite{guruswami2005linear} with a pseudo-random string, particularly, long-distance synchronization strings. The decoding procedure also follows the same steps as \cref{sec:technical-warm-up}. Namely, the decoder uses the repositioning algorithm presented in~\cref{thm:local-repositioning-for-long-distance} to guess the actual position of the symbols and then runs the decoder of the Guruswami-Indyk code over the reconstructed string.

Note that the repositioning guarantee from~\cref{thm:local-repositioning-for-long-distance} implies that with the choice of some small $\eps$ parameter for the long-distance synchronization string, the repositioning algorithm correctly guesses the position of all but $O(n\delta_{insdel})$ symbols where $n$ is the length of the communication if only insertions and deletions are allowed.

Additionally, the local quality of the repositioning algorithm implies that any symbol at the receiver that does not have any synchronization errors or block error borders in its $O(\log n)$ neighborhood, is correctly repositioned by the local repositioning algorithm. Therefore, with $n\delta_{block}$ block errors, no more than $n\delta_{block}\log n$ repositioning guesses would be incorrect. This implies an $O(n/\log n)$ block error resilience.
Combining the two remarks above gives that the code can correct $n\delta_{insdel}$ synchronization errors and $n\delta_{block}$ block transpositions or replications as long as $6\delta_{insdel} + (c\log n) \delta_{block} < 1-r-\eps$ for some constant $c$.

The encoding and decoding complexities simply follow the properties of the Guruswami-Indyk codes, linear time constructions of long-distance synchronization strings from~\cref{thm:explicit-construction} and time complexity of the repositioning algorithm from~\cref{thm:local-repositioning-for-long-distance}.
\end{proof}

\subsection{Channel Simulation}\label{sec:channel-simulations}
The construction of codes based on indexing presented in this paper suggests that indexing with pseudo-random strings can reduce synchronization errors to more benign Hamming-type errors (substitutions and erasures). In this section, we present results from~\cite{haeupler2017synchronization2,haeupler2017synchronization3} which shows that this is indeed true. 

More precisely, having a channel afflicted by synchronization errors, one can put two simulation agents on the two ends of the channel who can simulate a channel with Hamming-type errors over the given channel. In other words, the sender/receiver sends/receives symbols to/from their corresponding agent and the simulation guarantees that the channel would seem like a channel with Hamming-type errors to the parties.

Note that the indexing scheme from \cref{sec:technical-warm-up} almost achieves this goal by reducing synchronization errors to half-errors through indexing. However, this procedure requires all symbols to be communicated before the repositioning procedure of \cref{alg:global-decoder} can start running and, therefore, introduces a delay. A true channel simulation would not add such delay. More precisely, a round of error-free communication in a simulated channel is one that communicates the $i$th symbol sent by the sender as the $i$th symbol to the receiver once it arrives at the other side and prior to the $i+1$st symbol being sent by the sender.

This subtle requirement can be satisfied through using synchronization strings as the indexing sequence and utilizing the online repositioning algorithm introduced in \cref{sec:sync-strings-and-online-decoding}. 

Before presenting the channel simulations, we remark an interesting negative result of \cite{haeupler2017synchronization2} stating that, as opposed to codes, when it comes to channel simulations, no channel simulator can reduce $\delta$ fraction of synchronization errors to $\delta+\eps$ half-errors for arbitrarily small $\eps$.

\begin{theorem}\label{thm:OnewaySimulLowerBound}
Assume that $n$ uses of a synchronization channel over an arbitrarily large alphabet $\Sigma$ with a $\delta$ fraction of insertions and deletions are given. There is no deterministic simulation of a half-error channel over any alphabet $\Sigma_{sim}$ where the simulated channel guarantees more than $n\left(1-4\delta/3\right)$ uncorrupted transmitted symbols. If the simulation is randomized, the expected number of uncorrupted transmitted symbols is at most $n(1-7\delta/6)$.
\end{theorem}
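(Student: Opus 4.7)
The plan is to prove this lower bound by a two-input indistinguishability argument. Fix any purported deterministic simulator $(E,D)$, where $E:\Sigma_{sim}^n\to\Sigma^n$ is the encoding strategy and $D$ is the decoding strategy. Determinism is the lever: for every received channel string $z$ the decoder produces a fixed output $D(z)$. Thus, if the adversary can find two inputs $x\neq y\in\Sigma_{sim}^n$ together with two edit scripts, each of size at most $\delta n$, that transform $E(x)$ and $E(y)$ into the same received string $z$, then a single string $D(z)$ must be blamed for errors relative to both $x$ and $y$. By the triangle inequality for Hamming distance, the number of mismatches $\mathrm{Ham}(x,D(z))+\mathrm{Ham}(y,D(z))\ge \mathrm{Ham}(x,y)$, so one of the two inputs incurs at least $\mathrm{Ham}(x,y)/2$ corrupted positions.

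\textbf{Paragraph 2: Finding a confusable pair with large Hamming distance.} The heart of the proof is to show that for every deterministic encoder $E$, the adversary can exhibit such a confusable pair $(x,y)$ with $\mathrm{Ham}(x,y)\ge \tfrac{8\delta n}{3}$. Note that the existence of the pair itself is constrained only by $\ED(E(x),E(y))\le 2\delta n$, which follows from the triangle inequality on edit distance once $z$ is a common midpoint. The quantitative lower bound $\tfrac{8\delta n}{3}$ on $\mathrm{Ham}(x,y)$ is the combinatorial content: I would prove it by exploiting the intrinsic \emph{shifting} behavior of synchronization errors, which allows the adversary to combine a block of deletions with a block of insertions to not only drop symbols but also re-align an entirely different prefix/suffix. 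Concretely, for a well-chosen partition of the budget $\delta n = d+i$, the adversary can always pair up $x$ with a cyclically-shifted or block-exchanged $y$ whose encoding agrees with $E(x)$ on one side and mirrors it on the other; optimizing over the $d/i$ split against the best encoder gives the ratio $4/3$.

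\textbf{Paragraph 3: Randomized case by Yao's principle.} For randomized simulators, I would reduce to the deterministic case via Yao's minimax principle. Pick a distribution $\mu$ over inputs and a distribution $\nu$ over attacks; any randomized simulator against a worst-case adversary does at best as well (in expectation) as the best deterministic simulator against $(\mu,\nu)$. I would design $\mu$ as uniform over a cleverly chosen family (such as all cyclic shifts of a fixed generic string, or pairs differing in specific block-exchange patterns) and $\nu$ as the mixture of attacks that achieves the two-input confusion described above. The slightly weaker bound $\tfrac{7\delta n}{6}$ accounts for the fact that, over a random pair from the support of $\mu$, the adversary can no longer force exact collision but only force the decoder into a distribution whose average error rate is $\tfrac{1}{2}\cdot\tfrac{7\delta}{3}=\tfrac{7\delta}{6}$, reflecting that randomness lets the simulator hedge away a constant factor of the deterministic penalty.

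\textbf{Paragraph 4: Main obstacle.} The easy part is the framework: the two-input argument and the Yao reduction are both standard, and getting any bound of the form $\Omega(\delta)$ errors beyond the trivial $\delta n/2$ from substitution-equivalent attacks is routine. The delicate part — and the one that will occupy most of the proof — is establishing the tight ratio $4/3$ (and $7/6$ in the randomized case). This requires a matching lower bound on the maximal $\mathrm{Ham}(x,y)/\ED(E(x),E(y))$ achievable against any encoder, which I would approach by case-analyzing how the adversary's $\delta n$-edit budget can be spent to simultaneously (i) maximize the shift-induced Hamming blowup between the two inputs and (ii) keep the two encodings within edit distance $2\delta n$ of a common received word. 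Getting the constant right almost certainly requires a careful convexity/optimization argument over the split between deletions and insertions, and it is here that I would expect to invoke a structural property of $E$ (such as an averaging argument over positions) rather than just the edit-distance triangle inequality.
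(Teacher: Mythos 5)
First, a note on scope: the survey itself does not prove \cref{thm:OnewaySimulLowerBound}; it only states the result and defers to the cited work, so your proposal is being judged against the known argument rather than against a proof appearing in this text.

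Your approach has a fatal structural gap: it models the simulation as an \emph{offline} code, with an encoder $E:\Sigma_{sim}^n\to\Sigma^n$ and a decoder $D$ that sees the entire received string before producing any output. But the whole content of \cref{thm:OnewaySimulLowerBound} is that channel \emph{simulations} are strictly weaker than codes, precisely because the receiver-side agent must release the simulated symbols online and in order --- the $i$th simulated symbol must be delivered before the $(i+1)$st actual symbol is sent (see the discussion preceding the theorem in \cref{sec:channel-simulations}). If you drop that constraint, the statement you are trying to prove is false: the indexing scheme of \cref{sec:technical-warm-up} together with \cref{lem:global-decoding} turns $n\delta$ insertions and deletions into roughly $n\delta$ corrupted positions, i.e., it guarantees about $n(1-\delta)>n(1-4\delta/3)$ uncorrupted symbols. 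Hence no purely offline confusability argument can yield the constant $4/3$. Your own framework in fact caps out below the target: a confusable pair must satisfy $\ED(E(x),E(y))\le 2\delta n$, and for a well-designed $E$ (again, the indexing construction) any such pair has Hamming distance at most about $2\delta n$, forcing at most $\delta n$ corruptions via your triangle-inequality step --- not $\tfrac{4}{3}\delta n$. The claim $\mathrm{Ham}(x,y)\ge\tfrac{8}{3}\delta n$ in your Paragraph~2 is therefore not merely unproved but unprovable against every encoder, and the "shifting behavior" sketch there cannot rescue it in the offline model.

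The actual argument must exploit the online requirement: the adversary deletes (or inserts) a block whose length the receiver-side agent cannot determine from its causal view of the stream, so at each moment the agent does not know how far the transmission has shifted; committing to an output position too early or too late corrupts a stretch of simulated symbols either way, and balancing the deletion versus insertion budgets against the agent's best commitment rule is what produces the ratio $4/3$. The randomized bound $7/6$ then comes from averaging this dilemma over an explicit hard distribution of shift lengths; your invocation of Yao's principle is the right general shape, but the hard distribution and the computation yielding $7/6$ are precisely the content you have not supplied --- asserting that randomness "hedges away a constant factor" is not a derivation.
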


We now present the channel simulations that can be achieved via indexing with synchronization strings. Simulations are presented for channels with large constant alphabets, binary alphabets, one-way communication or interactive communication.

\begin{theorem}\label{thm:nearLinearChannelSimulationsForInsdel}[Channel Simulations]
\-\begin{enumerate}
\item[(a)] Suppose that $n$ rounds of a one-way/interactive insertion-deletion channel over an alphabet $\Sigma$ with a $\delta$ fraction of insertions and deletions are given. Using a long-distance $\eps$-synchronization string over alphabet $\Sigma_{syn}$, it is possible to simulate $n\left(1-O_\eps(\delta)\right)$ rounds of a one-way/interactive substitution channel over $\Sigma_{sim}$ with at most $O_\eps\left(n\delta\right)$ symbols corrupted so long as $|\Sigma_{sim}| \times |\Sigma_{syn}| \le |\Sigma|$. 
\item[(b)] Suppose that $n$ rounds of a binary one-way/interactive insertion-deletion channel with a $\delta$ fraction of insertions and deletions are given. It is possible to simulate 
$n(1-\Theta( \sqrt{\delta\log(1/\delta)}))$
 rounds of a binary one-way/interactive substitution channel 
 with $\Theta(\sqrt{\delta\log(1/\delta)})$ fraction of substitution errors between two parties over the given channel.
\end{enumerate}
All of the simulations mentioned above take $O(1)$ time per symbol for the sending/starting party of one-way/interactive communications. Further, on the other side, the simulation spends $O(\log^3 n)$ time upon arrival of each symbol and only looks up $O(\log n)$ recently received symbols. Overall, these simulations take a $O(n\log^3 n)$ time and $O(\log n)$ space to run. These simulations can be performed even if parties are not aware of the communication length.
\end{theorem}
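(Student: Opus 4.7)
The plan is to handle part~(a) directly by combining the sender-side indexing scheme from \cref{sec:technical-warm-up} with the local online repositioning algorithm of \cref{thm:local-repositioning-for-long-distance}, and then to reduce part~(b) to part~(a) by inserting a binary synchronization sub-code on each channel use.

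For part~(a), the sender will use the infinite long-distance $\eps$-synchronization string $R$ from \cref{thm:infinite-construction}, augmented with the circular-counter coordinate of~\eqref{eq:local-decoding-construction} so that local repositioning is enabled. On the $i$-th use of the real channel, the sender streams the pair $(m_i, R[i]) \in \Sigma_{sim} \times \Sigma_{syn}$; because $R$ is produced in amortized $O(1)$ time per symbol (and in fact $R[i, i + \log i]$ is computable in $O(\log i)$), the sender incurs $O(1)$ work per round and does not need to know the communication length in advance. At the other end, the receiver feeds each arriving symbol into the local online repositioner of \cref{thm:local-repositioning-for-long-distance}, which commits to an estimated original position $\hat{j}$ using only the $O(\log n)$ most recent arrivals and $O(\log^3 n)$ time. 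The simulator emits the content portion of the symbol as round $\hat{j}$ of the simulated substitution channel; any simulated round whose slot has not been assigned by the time the receiver deterministically concludes that no later arrival can target it is filled with a fixed default symbol, and any double-assignment is resolved in favor of the first arrival.

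The error accounting is the online analogue of the one behind \cref{thm:main-1}: \cref{thm:local-repositioning-for-long-distance} guarantees that, among the $\geq n(1-\delta)$ non-deleted arrivals, at most $n\delta/(1-\eps) = O_\eps(n\delta)$ positions are guessed incorrectly, and each true insertion, deletion, or misguess can corrupt at most a constant number of simulated rounds (either by overwriting another slot or by forcing a default). Hence at least $n(1 - O_\eps(\delta))$ simulated rounds are delivered uncorrupted with at most $O_\eps(n\delta)$ substitutions, proving~(a) under the alphabet constraint $|\Sigma_{sim}| \cdot |\Sigma_{syn}| \leq |\Sigma|$. The interactive variant follows verbatim because both endpoints run the identical symmetric protocol and the repositioner is local and online. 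For part~(b), the per-round alphabet is fixed at two, so we first encode each $(\Sigma_{sim} \times \Sigma_{syn})$-symbol as a block of length $\ell = O(1)$ bits using the binary synchronization code of \cite{haeupler2017synchronization2}, which achieves rate $1 - O(\sqrt{\delta \log(1/\delta)})$ and corrects a $\Theta(\delta)$ fraction of insertion-deletion errors per block. A standard averaging argument shows that all but an $O(\sqrt{\delta \log(1/\delta)})$ fraction of blocks are decoded correctly by the inner code; feeding the resulting stream into the part~(a) simulator, with the residual block-level error rate replacing $\delta$, yields the claimed binary simulation, and the $O(1)$-sized inner blocks preserve the $O(\log^3 n)$ per-symbol time and the $O(\log n)$ lookback.

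The main obstacle is verifying that the local, online commits of the repositioner cannot cascade. A naive per-burst analysis might inflate misguesses by the $O(\log n)$ window size, but the suffix-error-density framework underlying \cref{thm:local-repositioning-for-long-distance} already charges each miss against a distinct real insertion or deletion; redoing this accounting in the streaming model, so that an early erroneous commit cannot be retroactively repaired once more arrivals appear, is the delicate piece of work. A secondary subtlety in~(b) is arguing that the residual failures of the inner binary decoders still look like a bounded worst-case insertion-deletion pattern from the outer simulator's perspective, which is why the block boundaries need to be identifiable by a marker so that inner-decoding failures translate into clean block-level erasures rather than unconstrained realignments.
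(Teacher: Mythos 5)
Your part~(a) follows the paper's route: index with a (highly explicit, infinite) long-distance synchronization string on the sender side and run the local online repositioner of \cref{thm:local-repositioning-for-long-distance} on the receiver side, charging the $O_\eps(n\delta)$ misguesses to corrupted simulated rounds. One caveat: your mechanism for keeping the simulated channel in order -- holding a slot open ``until the receiver deterministically concludes that no later arrival can target it'' -- introduces exactly the kind of delay a channel simulation is not allowed to have (in the interactive/alternating setting the receiver must emit simulated round $i$ before the sender's round $i+1$ is transmitted). The paper instead uses a \emph{lazy revealing} strategy: when a guessed position jumps ahead of the current simulated round counter, the simulator advances by emitting dummy symbols immediately, and the analysis bounds how many rounds this wastes. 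Your version needs to be replaced by something of this flavor rather than by waiting.

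Part~(b) has a genuine gap, in two respects. First, the argument is circular: the binary insertion-deletion code of rate $1-\Theta(\sqrt{\delta\log(1/\delta)})$ from \cite{haeupler2017synchronization2} (\cref{thm:binary-insdel-sync-str}) is itself \emph{derived from} the binary channel simulation of part~(b) by composing the simulation with a binary Hamming-error code, so you cannot invoke it as an inner ingredient. Second, the parameters do not work: if every simulated symbol is carried by a block of $\ell=O(1)$ bits, the synchronization overhead is a constant fraction of the channel, so you simulate only $\Theta(n)$ rounds with a rate bounded away from $1$, not $n(1-\Theta(\sqrt{\delta\log(1/\delta)}))$ rounds. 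The quantitative heart of part~(b), which is missing from your proposal, is the choice of block length: each block consists of a fixed header of $O(\log(1/\delta))$ bits (so that block boundaries survive adversarial realignment), a binary encoding of one long-distance synchronization symbol, and then $r=\sqrt{\log(1/\delta)/\delta}$ raw payload bits. This makes the overhead fraction $O(\log(1/\delta))/r=\Theta(\sqrt{\delta\log(1/\delta)})$, and since each of the $n\delta$ insertions/deletions can corrupt only $O(1)$ blocks, i.e.\ $O(r)$ payload bits, the corrupted fraction is $O(n\delta r)=O\bigl(n\sqrt{\delta\log(1/\delta)}\bigr)$ as claimed. Without balancing the block length against the header length in this way, neither the rate nor the error bound of part~(b) is attainable.
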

\begin{proof}[Proof Sketch]
We highlight the main ideas behind each of these simulations in the following.
\begin{enumerate}
    \item[(a)] In this simulation, the simulating agents do the indexing as done in the case of coding. Meaning that on the sender side, the simulation simply indexes the messages of the sender with symbols of a long-distance synchronization string and on the receiving end, the receiver-side simulating agent runs the online repositioning algorithm from \cref{sec:local-decoding} to identify the position of the symbols it receives and relays them to the receiver.
    
    Note that the online repositioning algorithm for long-distance synchronization strings allows the simulator on the receiving end to guess the positions of the received symbols as they arrive. However, we stress that the simulated channel has to behave as an actual channel and therefore cannot reveal the symbols to the receiver out of order. For instance, if the repositioning algorithm incorrectly identifies the first symbol as the tenth symbol and reveals it as the tenth symbol to the receiver, it cannot reveal the second to ninth symbols to the receiver afterwards even if the repositioning for those symbols is done correctly. 
    
    To ensure in-order revealing of symbols, the simulation uses a lazy revealing strategy to avoid over-reacting to incorrect guesses by the repositioning algorithm. More precisely, if the guessed position of a symbol is far beyond where the communication length is at that moment, the receiver-side simulator moves the communication forward by outputing two dummy symbols to the reciever. For the analysis of this strategy, we refer the reader to \cite{haeupler2017synchronization2}.
    
    \item[(b)] The simulation for channels with a binary alphabet is very similar except that the indexing is not possible due to the size of the alphabet. To overcome this, the simulation splits the communication into several blocks. In each block, the sender-side simulator first sends a fixed header of size $O(\log \frac{1}{\delta})$ (indicating the start of a new block), then sends a binary encoding of a symbol of the long-distance synchronization string, and then ends the block by relaying the messages of the sender for $r=\sqrt{\frac{\log 1/\delta}{\delta}}$ rounds.
\end{enumerate}
Time and space guarantees of these simulations are inferred from highly-explicit constructions of infinite long-distance synchronization strings (\cref{thm:infinite-construction}) and their local repositioning algorithms (\cref{thm:local-repositioning-for-long-distance}).
Similar simulations can be performed in interactive communication channels by taking the steps mentioned above in one direction of the communication.
\end{proof}

\subsection{Interactive Communication for Synchronization Errors}
The channel simulations via indexing presented in \cref{sec:channel-simulations} can be used to obtain interactive coding schemes for synchronization errors. Interactive communication between two parties is one in which any round of the communication consists of a message transmission from one party to the other one. Each party is assumed to hold a private information denoted by $X$ and $Y$ and the goal is for both parties to compute some function $f(X, Y)$. Any strategy for computing $f(X, Y)$ is called a protocol. 

A coding scheme for interactive communication is one that takes any protocol that computes some function $f$ in noiseless communication and converts it into a protocol that computes $f$ over a noisy channel. 
The rate of an interactive coding scheme is defined as the minimal ratio of the length of the protocol in the absence of noise over the length of the protocol in the presence of noise over all functions $f$.

The channel model used in the results of this section is the commonly used model of Braverman \emph{et al}.~\cite{braverman2015coding} that considers an alternating protocol, i.e., protocols in which parties take alternating turns in sending and receiving symbols.

Using channel simulations, \cite{haeupler2017synchronization2,haeupler2017synchronization3} provide coding schemes for interactive communication over channels suffering from synchronization errors by simply simulating a half-error channel over the given synchronization channel and applying interactive protocols for channels with symbol substitution errors over the simulated channel.
Using simulations for channels with large alphabets along with the interactive protocol of Haeupler and Ghaffari~\cite{ghaffari2014optimal}, \cite{haeupler2017synchronization3} gives the following. 

\begin{theorem}
For a sufficiently small $\delta$ and $n$-round alternating protocol $\Pi$, there is a randomized coding scheme simulating $\Pi$ in the presence of $\delta$ fraction of synchronization errors with constant rate (i.e., in $O(n)$ rounds) and in near-linear time. This coding scheme works with probability $1-2^{\Theta(n)}$. 
\end{theorem}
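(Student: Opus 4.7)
The plan is to compose two ingredients: the large-alphabet interactive channel simulation from \cref{thm:nearLinearChannelSimulationsForInsdel}(a) and a known constant-rate interactive coding scheme for substitution errors, namely the scheme of Haeupler and Ghaffari~\cite{ghaffari2014optimal}. The high level reduction is: reduce the given synchronization-error channel to a substitution-error channel by running the channel simulation on each of the two directions of the interactive communication, then run the substitution-error interactive coding scheme on top of the simulated channel to simulate the noiseless alternating protocol $\Pi$.

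\textbf{Step 1 (parameter setup).} Fix a small enough $\eps>0$ (to be chosen as a sufficiently small constant depending on $\delta$) and let $\Sigma$ be an alphabet large enough to accommodate both the simulation alphabet $\Sigma_{sim}$ used by the Haeupler--Ghaffari scheme and a long-distance $\eps$-synchronization string alphabet $\Sigma_{syn}$, i.e., $|\Sigma_{sim}|\cdot |\Sigma_{syn}|\leq |\Sigma|$. Both factors are $O_\eps(1)=O_\delta(1)$, so $\Sigma$ is of constant size.

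\textbf{Step 2 (simulation of a substitution channel).} Invoke \cref{thm:nearLinearChannelSimulationsForInsdel}(a) in its interactive form on the given $n$-round alternating insertion-deletion channel with error fraction $\delta$. This produces, with deterministic guarantees, a simulation of $n(1-O_\eps(\delta))=\Theta(n)$ rounds of an alternating substitution channel over $\Sigma_{sim}$ in which at most $O_\eps(n\delta)$ symbols are corrupted. In particular, by choosing $\eps$ to be a sufficiently small constant we can ensure that the induced substitution-error fraction $\delta'=O_\eps(\delta)$ on the simulated channel is below the resilience threshold of the Haeupler--Ghaffari coding scheme. The simulation uses $O(1)$ time per symbol on the sending end and $O(\log^3 n)$ time plus $O(\log n)$ lookback per received symbol, totalling $O(n\log^3 n)$ time and $O(\log n)$ space.

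\textbf{Step 3 (interactive coding on the simulated channel).} On top of the simulated alternating substitution channel, run the Haeupler--Ghaffari randomized interactive coding scheme to simulate the noiseless protocol $\Pi$. For sufficiently small $\delta'$ their scheme achieves constant rate and succeeds with probability $1-2^{-\Theta(n)}$, using total computation that is polynomial (and in fact near-linear) in $n$. Composing this with the simulation of Step~2, the resulting protocol simulates $\Pi$ in $O(n)$ rounds over the insertion-deletion channel, in near-linear total time, and with success probability $1-2^{-\Theta(n)}$.

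\textbf{Main obstacle.} The substantive issue is making sure that the channel simulation interacts cleanly with an alternating interactive coding scheme: the simulation must preserve the alternating turn structure and must not introduce large delays or permit the synchronization-repositioning routine to reveal symbols out of order in a way that breaks the assumptions of the upper-layer scheme. This is handled by the lazy in-order revealing strategy already built into \cref{thm:nearLinearChannelSimulationsForInsdel} (and its guarantee that only $O_\eps(n\delta)$ simulated symbols are corrupted), so no new machinery is needed beyond carefully matching the error tolerance of Haeupler--Ghaffari against the $O_\eps(\delta)$ substitution rate produced by the simulation; the rest of the argument is bookkeeping on rates, runtimes, and failure probabilities.
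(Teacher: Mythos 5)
Your proposal is correct and follows exactly the route the paper takes: simulate a substitution (half-error) channel over the given insertion-deletion channel via the interactive large-alphabet simulation of \cref{thm:nearLinearChannelSimulationsForInsdel}(a), then run the Haeupler--Ghaffari constant-rate interactive coding scheme on top of the simulated channel, choosing $\eps$ so that the induced $O_\eps(\delta)$ substitution rate is below that scheme's threshold. The paper gives only this one-sentence composition (deferring details to \cite{haeupler2017synchronization3}), so your more explicit bookkeeping of rates, runtimes, and failure probability is a faithful elaboration of the same argument.
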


Similarly, using binary alphabet simulations and the interactive protocol of Haeupler~\cite{haeupler2014interactive:FOCS}, \cite{haeupler2017synchronization2} gives the following.
\begin{theorem}\label{thm:InterFullyAdv}
For sufficiently small $\delta$, there is an efficient interactive coding scheme for fully adversarial binary synchronization channels which is robust against $\delta$ fraction of edit-corruptions, achieves a communication rate of 
$1 - \Theta({\sqrt{\delta\log(1/\delta)}})$, and works with probability $1 - 2^{-\Theta(n\delta)}$.
\end{theorem}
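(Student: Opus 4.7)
The plan is to cascade two ingredients that are already developed earlier in the paper: first, use the binary alphabet channel simulation of \cref{thm:nearLinearChannelSimulationsForInsdel}(b) to reduce the given adversarial binary synchronization channel to a simulated binary alternating substitution channel with a small fraction of bit-flips; second, run an existing high-rate interactive coding scheme for binary substitution channels, namely Haeupler's scheme~\cite{haeupler2014interactive:FOCS}, on top of this simulated channel. Concretely, each party runs a simulator at its end of the synchronization channel (handling the indexing and online repositioning via long-distance synchronization strings internally), and above that simulator interface the parties simply execute Haeupler's substitution coding scheme as if they were on a clean binary alternating channel with a constant fraction of bit-flips.

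The layered analysis then proceeds in three steps. First, I would apply \cref{thm:nearLinearChannelSimulationsForInsdel}(b) to the underlying channel, obtaining a simulated alternating binary substitution channel with $n(1-\Theta(\sqrt{\delta\log(1/\delta)}))$ usable rounds, a substitution-error fraction of $\Theta(\sqrt{\delta\log(1/\delta)})$, end-to-end running time $O(n\log^{3} n)$, and $O(\log n)$ working space (using the infinite long-distance synchronization string of \cref{thm:infinite-construction} and the online local repositioner of \cref{thm:local-repositioning-for-long-distance}). Second, on this simulated substitution channel I would invoke Haeupler's interactive coding scheme for binary alternating channels, which robustly simulates any noiseless interactive protocol against a fixed constant fraction $\eps$ of adversarial substitutions with success probability $1-2^{-\Theta(n\eps)}$ and rate approaching the interactive capacity. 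Third, I would multiply the two rates; since the simulation already pays a $1-\Theta(\sqrt{\delta\log(1/\delta)})$ factor and the inner scheme needs only to absorb an $\eps=\Theta(\sqrt{\delta\log(1/\delta)})$ substitution rate, the composed rate is $1-\Theta(\sqrt{\delta\log(1/\delta)})$ as claimed.

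Efficiency and the success probability follow cleanly. The simulator runs in near-linear time because the infinite long-distance synchronization string is highly explicit and its repositioning algorithm is polylogarithmic-per-symbol; Haeupler's interactive scheme is itself polynomial time, yielding overall efficiency. The failure probability has two contributions: the simulator can fail only when the insertion-deletion pattern defeats the repositioning, which happens with probability at most $2^{-\Omega(n\delta)}$, while Haeupler's scheme fails with probability $2^{-\Theta(n\eps)}$ for $\eps=\Theta(\sqrt{\delta\log(1/\delta)})=\Omega(\delta)$, so both contributions are of the form $2^{-\Theta(n\delta)}$.

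The main obstacle I expect is the rate-composition step. A naive composition would accumulate both sub-rate losses and, depending on the exact guarantees pulled from Haeupler's scheme, could produce a bound of the form $1-\Theta(\delta^{1/4}\polylog(1/\delta))$ instead of the stronger $1-\Theta(\sqrt{\delta\log(1/\delta)})$. Matching the claimed rate requires a careful joint choice of the parameters of the binary simulator (header length, the choice of $r=\sqrt{\log(1/\delta)/\delta}$, and the encoding of the embedded synchronization string symbol) together with the redundancy schedule of the inner interactive code, so that the two $\sqrt{\delta\log(1/\delta)}$-sized losses merge into a single term rather than stacking. This quantitative balancing is the technical heart of the argument; the rest of the proof is a routine assembly of previously established components.
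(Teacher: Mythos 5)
Your proposal follows exactly the route the survey takes: reduce the adversarial binary insertion-deletion channel to a simulated alternating binary substitution channel via \cref{thm:nearLinearChannelSimulationsForInsdel}(b) (built on \cref{thm:infinite-construction} and \cref{thm:local-repositioning-for-long-distance}) and run Haeupler's interactive coding scheme~\cite{haeupler2014interactive:FOCS} on top of it, which is precisely how the survey attributes the result to \cite{haeupler2017synchronization2}. The rate-composition subtlety you flag at the end is indeed the one genuinely nontrivial point, and the survey itself offers no more detail than your outline, deferring that quantitative balancing to the original paper.
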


\subsection{Binary Synchronization Codes}
A similar approach is taken to design binary synchronization error-correcting codes in \cite{haeupler2017synchronization2} by simulating a half-error channel over the given synchornization channel and then using a binary error-correcting code on top of it.
\begin{theorem}\label{thm:binary-insdel-sync-str}
For any sufficiently small $\delta$, there is a binary synchronization code with rate $1-\Theta\left(\sqrt{\delta\log\frac{1}{\delta}}\right)$ which is decodable from $\delta$ fraction of insertions and deletions. 
\end{theorem}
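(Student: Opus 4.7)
The plan is to combine the binary channel simulation of \cref{thm:nearLinearChannelSimulationsForInsdel}(b) with a standard binary Hamming-type error-correcting code. This mirrors the two-layer indexing philosophy used throughout the paper: first convert the synchronization channel into a substitution channel via a simulator, then apply a classical binary code over the simulated channel.

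Concretely, given a binary synchronization channel of block length $N$ that may inflict up to $\delta N$ insertions and deletions, the encoder and decoder first deploy the simulator of \cref{thm:nearLinearChannelSimulationsForInsdel}(b) to obtain $n = N(1-\Theta(\sqrt{\delta\log(1/\delta)}))$ rounds of a binary substitution channel whose substitution rate is $p = \Theta(\sqrt{\delta\log(1/\delta)})$. On top of this simulated channel one then uses a binary error-correcting code $\mathcal{C}$ of block length $n$, relative minimum distance strictly greater than $2p$, and rate $r_{\mathcal{C}} = 1 - \Theta(\sqrt{\delta\log(1/\delta)})$, instantiated via a Gilbert--Varshamov construction or an efficient explicit binary code approaching this rate. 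Composing the two layers yields an end-to-end binary synchronization code of rate $r_{\mathcal{C}} \cdot (n/N) = (1-\Theta(\sqrt{\delta\log(1/\delta)}))^2 = 1-\Theta(\sqrt{\delta\log(1/\delta)})$. Correctness is immediate: any $\delta N$-bounded adversarial insertion/deletion pattern is mapped by the simulator to at most $pn$ bit flips on the simulated channel, which $\mathcal{C}$ corrects by choice of distance.

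The main obstacle I expect is reconciling the constants between the two layers. A naive Gilbert--Varshamov appeal at substitution rate $p = \sqrt{\delta\log(1/\delta)}$ only gives rate $1 - H(2p) \sim 1 - \sqrt{\delta}\log^{3/2}(1/\delta)$, which is a $\sqrt{\log(1/\delta)}$ factor worse than the claimed target (consistent with the broader bound $1-\sqrt{\delta}\log^{O(1)}(1/\delta)$ noted earlier in the survey). To close this gap, the internal parameters of the simulator---notably the block length $r \sim \sqrt{\log(1/\delta)/\delta}$ used to batch bits---must be tuned so that the simulated substitution rate is a sufficiently small sub-fraction of the round loss, or, alternatively, so that corrupted blocks are flagged as erasures rather than substitutions and the inner code only needs distance $> p$. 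Once that balancing is carried out, the correctness argument reduces to invoking the guarantees of \cref{thm:nearLinearChannelSimulationsForInsdel}(b) and of $\mathcal{C}$, and the rate calculation then collapses into the product of the simulator's efficiency and the inner code's rate.
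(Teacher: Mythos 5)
Your approach is the same as the paper's: the survey proves this theorem in one sentence by composing the binary channel simulation of \cref{thm:nearLinearChannelSimulationsForInsdel}(b) with a binary Hamming-type code, deferring all details to \cite{haeupler2017synchronization2}. The obstacle you flag at the end is real and you are right not to wave it away: if you treat the simulated channel as an adversarial binary substitution channel with error fraction $p = \Theta(\sqrt{\delta\log(1/\delta)})$ and appeal to Gilbert--Varshamov, you only get rate $1-H(2p) = 1-\Theta(\sqrt{\delta}\log^{3/2}(1/\delta))$, which misses the claimed bound by a $\sqrt{\log(1/\delta)}$ factor. The correct resolution is essentially the second option you mention, made precise at the block level: the substitutions produced by the simulator are not arbitrary but are confined to the $O(\sqrt{\delta\log(1/\delta)})$-fraction of blocks (of length $r \approx \sqrt{\log(1/\delta)/\delta}$) that are touched by a synchronization error or a misidentified header. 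One therefore views each block's payload as a single symbol over an alphabet of size $2^r$ and protects it with a near-MDS code over that alphabet (e.g.\ \cref{thm:guruswami-indyk-near-MDS-codes}, concatenated down to binary); over a large alphabet the rate penalty for correcting an $O(\sqrt{\delta\log(1/\delta)})$ fraction of symbol errors is linear in that fraction rather than governed by the binary entropy function, which removes the extra $\sqrt{\log(1/\delta)}$ factor. The header overhead per block, $O(\log(1/\delta))/r = O(\sqrt{\delta\log(1/\delta)})$, and the round loss of the simulator are of the same order, so the product of the three rate factors is $1-\Theta(\sqrt{\delta\log(1/\delta)})$ as claimed. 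With that substitution for your ``binary code of distance $>2p$'' step, the argument goes through.
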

It is shown in \cite{levenshtein1966binary} that the optimal rate for binary synchronization codes with distance $\delta$ is $1-O\left(\delta\log\frac{1}{\delta}\right)$. Recent works by Cheng \emph{et al}.~\cite{cheng2018deterministic} and Haeupler~\cite{haeupler2018optimal} have simultaneously improved over the codes from \cref{thm:binary-insdel-sync-str} by introducing efficient binary codes with rate $1-O(\delta\log^2\frac{1}{\delta})$ via providing deterministic document exchange protocols.

\subsection{Document Exchange}
Document exchange is a problem in which a server and a client hold two versions of the same string, say $F$ and $F'$ respectively, where $F'$ is an outdated version that is different from $F$ by up to $k$ insertions or deletions. The goal is for the server to compute a small summary and send it to the client so the client can update its string to $F$.

There is a close connection between deterministic document exchange protocols and systematic synchronization codes. Having a systematic synchronization code, one can construct a document exchange protocol with using the non-systematic part of the code as the summary. On the other hand, having a document exchange protocol, one can construct  a systematic code by taking the summary of the document exchange protocol, encoding it using a synchronization code, and using the encoded summary as the non-systematic part of the code.

For document exchange with $k$ errors, $\Omega(k \log \frac{n}{k})$ bits of information is necessary as the summary. Orlitsky~\cite{orlitsky1991interactive} showed in 1991 that protocols with this amount of redundancy exist, however, fell short of providing efficient ones. 
In 2005, Irmak, Mihaylov and Suel~\cite{irmak2005improved} provided an efficient document exchange protocol with $O(k \log \frac{n}{k} \log n)$ redundancy. Since then, there have been several works on randomized document exchange protocols (mostly for $k$ sublinear in $n$) by \cite{dodis2008fuzzy,jowhari2012efficient,chakraborty2016streaming,belazzougui2016edit}.
Recent works of Cheng \emph{et al}.~\cite{cheng2018deterministic} and Haeupler~\cite{haeupler2018optimal} provide deterministic document exchange protocols with redundancy $O(k\log^2\frac{n}{k})$.

Haeupler~\cite{haeupler2018optimal} first provides a randomized document exchange protocol with redundancy $O(\delta\log\frac{1}{\delta})$ through a modification and careful analysis of the protocol of Irmak \emph{et al}.~\cite{irmak2005improved}. Then, it derandomizes the protocol using a derandomizarion technique reminiscent of one used in \cite{haeupler2017synchronization3} of synchronization strings. The techniques of Cheng \emph{et al}.~\cite{cheng2018deterministic} also make use of notions called $\eps$-self-matching hash functions and $\eps$-synchronization hash functions which are sequences of hash functions whose outputs satisfy properties resembling the corresponding string properties introduced in~\cite{haeupler2017synchronization}.

\subsection{Linear Insertion-Deletion Codes}
A recent work of Cheng \emph{et al.}~\cite{cheng2020efficient} studies qualities like linearity and affinity in the context of synchronization coding. They propose an efficient synchronization string-based transformation that can convert any asymptotically good linear error-correcting code into an asymptotically good insertion-deletion code. Using this transformation along with well-known linear error-correcting codes, such as Hamming codes, results in explicit constructions for linear insertion-deletion codes.

The indexing scheme introduced in \cref{sec:main-results} inherently leads to codes that are non-linear as it specifies a fixed value to a portion of each symbol. To circumvent this, \cite{cheng2020efficient} uses pseudo-random strings to design linear insertion-deletion codes in the following manner: To encode a message $x \in \mathbb{F}_q^m$, they first encode it using a linear error-correcting code for Hamming-type errors $C:   \mathbb{F}_q^m \rightarrow  \mathbb{F}_q^{n}$ to become $y = C(x)$. They then take care of the synchronization issues by inserting several sequences of $0$ symbols into the message and generating the final codeword as follows:
$$z_? = (0^{S_1}, y_1, 0^{S_2}, y_2, \cdots, 0^{S_n}, y_n).$$
The string  $S = (S_1, S_2, \cdots, S_n)$ is a pseudo-random string having synchronization properties similar to the ones studied in this survey.

More precisely, \cite{cheng2020efficient} defines a \emph{$\Lambda$-synchronization separator sequence} as a sequence $S$ for which any $z=(0^{S_1}, ?, 0^{S_2}, ?, \cdots, 0^{S_n}, ?)$ does not have self-matchings with more than $\Lambda$ \emph{undesirable} matches. An undesirable match is one between two `?'s like the $i$th and the $j$th `?' where $i\neq j$ and $p_i-p_{i'} =  p_j - p_{j'}$ where $(i', j')$ is the immediate previous match to $(i, j)$ in the matching and $p_i$ denote the position of the $i$th `?' in $z_?$. 

\cite{cheng2020efficient} provides explicit constructions for synchronization separator sequences and shows that, if used in the above construction, they enable the decoder to reconstruct the codeword $y$ up to a number of Hamming-type errors that is within a constant factor of the number of insertions and deletions applied; Hence, proving that this conversion preserves both linearity and the asymptotic goodness of the code.

\subsection{Coded Trace Reconstruction}
A recent work by Brakensiek \emph{et al.}~\cite{brakensiek2019coded} provides novel results for the coded trace reconstruction problem. Coded trace reconstruction asks for codes that satisfy the following: Assuming that a sender chooses a codeword of the code and sends multiple copies of it over independent binary deletion channels (called \emph{traces}), the receiver wants to be able to recover the original codeword with high probability. 
Using synchronization strings, \cite{brakensiek2019coded} provides a high-rate coded trace reconstruction scheme that is efficiently decodable from a constant number of traces.

\subsection{Coding for Binary Deletion Channels and Poisson Repeat Channels}
Con and Shpilka~\cite{con2020explicit} use the synchronization string-based Singleton-bound-approaching synchronization codes from \cref{sec:intro-main-result-1} to provide an efficient and explicit code for binary deletion channels that improve over the state-of-the-art in terms of error resilience. They, additionally, show that their code also works for the Poisson repeat channel where each bit appears on the receiver's side a number of times which follows some Poisson distribution.

\bibliographystyle{plain}
\bibliography{bibliography}

\begin{thebibliography}{10}

\bibitem{abdel2011helberg}
Khaled A~S Abdel-Ghaffar, Filip Paluncic, Hendrik~C Ferreira, and Willem~A
  Clarke.
\newblock On {H}elberg's generalization of the {L}evenshtein code for multiple
  deletion/insertion error correction.
\newblock {\em IEEE Transactions on Information Theory}, 58(3):1804--1808,
  2011.

\bibitem{arikan2008channel}
Erdal Arikan.
\newblock Channel polarization: a method for constructing capacity-achieving
  codes for symmetric binary-input memoryless channels.
\newblock {\em IEEE Transactions on Information Theory}, 55(7):3051--3073,
  2009.

\bibitem{belazzougui2016edit}
Djamal Belazzougui and Qin Zhang.
\newblock Edit distance: sketching, streaming, and document exchange.
\newblock In {\em Proceedings of the IEEE Symposium on Foundations of Computer
  Science (FOCS)}, pages 51--60, 2016.

\bibitem{blasiok2018polar}
Jaroslaw Blasiok, Venkatesan Guruswami, Preetum Nakkiran, Atri Rudra, and Madhu
  Sudan.
\newblock General strong polarization.
\newblock In {\em Proceedings of the ACM Symposium on Theory of Computing
  (STOC)}, pages 485--492, 2018.

\bibitem{blawat2016forward}
Meinolf Blawat, Klaus Gaedke, Ingo Huetter, Xiao-Ming Chen, Brian Turczyk,
  Samuel Inverso, et~al.
\newblock Forward error correction for {DNA} data storage.
\newblock {\em Procedia Computer Science}, 80:1011--1022, 2016.

\bibitem{bornholt2016dna}
James Bornholt, Randolph Lopez, Douglas~M Carmean, Luis Ceze, Georg Seelig, and
  Karin Strauss.
\newblock A {DNA}-based archival storage system.
\newblock {\em ACM SIGARCH Computer Architecture News}, 44(2):637--649, 2016.

\bibitem{brakensiek2017efficient}
Joshua Brakensiek, Venkatesan Guruswami, and Samuel Zbarsky.
\newblock Efficient low-redundancy codes for correcting multiple deletions.
\newblock {\em IEEE Transactions on Information Theory}, 64(5):3403--3410,
  2017.

\bibitem{brakensiek2019coded}
Joshua Brakensiek, Ray Li, and Bruce Spang.
\newblock Coded trace reconstruction in a constant number of traces.
\newblock In {\em Proceedings of the IEEE Symposium on Foundations of Computer
  Science (FOCS)}, 2020.

\bibitem{braverman2015coding}
Mark Braverman, Ran Gelles, Jieming Mao, and Rafail Ostrovsky.
\newblock Coding for interactive communication correcting insertions and
  deletions.
\newblock In {\em Proceedings of the International Conference on Automata,
  Languages, and Programming (ICALP)}, 2016.

\bibitem{bukh2017improved}
Boris Bukh, Venkatesan Guruswami, and Johan H{\aa}stad.
\newblock An improved bound on the fraction of correctable deletions.
\newblock {\em IEEE Transactions on Information Theory}, 63(1):93--103, 2017.

\bibitem{bukh2014longest}
Boris Bukh and Jie Ma.
\newblock Longest common subsequences in sets of words.
\newblock {\em SIAM Journal on Discrete Mathematics}, 28(4):2042--2049, 2014.

\bibitem{chakraborty2016streaming}
Diptarka Chakraborty, Elazar Goldenberg, and Michal Kouck{\`y}.
\newblock Streaming algorithms for embedding and computing edit distance in the
  low distance regime.
\newblock In {\em Proceedings of the ACM Symposium on Theory of Computing
  (STOC)}, pages 712--725, 2016.

\bibitem{chandrasekaran2013deterministic}
Karthekeyan Chandrasekaran, Navin Goyal, and Bernhard Haeupler.
\newblock Deterministic algorithms for the lov{\'a}sz local lemma.
\newblock {\em SIAM Journal on Computing}, 42(6):2132--2155, 2013.

\bibitem{cheng2020efficient}
Kuan Cheng, Venkatesan Guruswami, Bernhard Haeupler, and Xin Li.
\newblock Efficient linear and affine codes for correcting
  insertions/deletions.
\newblock In {\em Proceedings of the ACM-SIAM Symposium on Discrete Algorithms
  (SODA)}, 2020.

\bibitem{cheng2018synchronization}
Kuan Cheng, Bernhard Haeupler, Xin Li, Amirbehshad Shahrasbi, and Ke~Wu.
\newblock Synchronization strings: highly efficient deterministic constructions
  over small alphabets.
\newblock In {\em Proceedings of the ACM-SIAM Symposium on Discrete Algorithms
  (SODA)}, pages 2185--2204, 2019.

\bibitem{cheng2018deterministic}
Kuan Cheng, Zhengzhong Jin, Xin Li, and Ke~Wu.
\newblock Deterministic document exchange protocols, and almost optimal binary
  codes for edit errors.
\newblock In {\em Proceedings of the IEEE Symposium on Foundations of Computer
  Science (FOCS)}, 2018.

\bibitem{cheraghchi2019overview}
Mahdi Cheraghchi and Jo{\~a}o Ribeiro.
\newblock An overview of capacity results for synchronization channels.
\newblock {\em IEEE Transactions on Information Theory}, 2020.

\bibitem{church2012next}
George~M Church, Yuan Gao, and Sriram Kosuri.
\newblock Next-generation digital information storage in {DNA}.
\newblock {\em Science}, 337(6102):1628--1628, 2012.

\bibitem{con2020explicit}
Roni Con and Amir Shpilka.
\newblock Explicit and efficient constructions of coding schemes for the binary
  deletion channel.
\newblock In {\em Proceedings of the IEEE International Symposium on
  Information Theory (ISIT)}, pages 84--89, 2020.

\bibitem{dodis2008fuzzy}
Yevgeniy Dodis, Rafail Ostrovsky, Leonid Reyzin, and Adam Smith.
\newblock Fuzzy extractors: How to generate strong keys from biometrics and
  other noisy data.
\newblock {\em SIAM Journal on Computing}, 38(1):97--139, 2008.

\bibitem{ferreira1997insertion}
WC~Ferreira, Willem~A Clarke, Albertus S.~J. Helberg, Khaled A.~S.
  Abdel-Ghaffar, and AJ~Han Vinck.
\newblock Insertion/deletion correction with spectral nulls.
\newblock {\em IEEE Transactions on Information Theory}, 43(2):722--732, 1997.

\bibitem{gabrys2018codes}
Ryan Gabrys and Frederic Sala.
\newblock Codes correcting two deletions.
\newblock {\em IEEE Transactions on Information Theory}, 65(2):965--974, 2018.

\bibitem{ghaffari2014optimal}
Mohsen Ghaffari and Bernhard Haeupler.
\newblock Optimal error rates for interactive coding {II}: Efficiency and list
  decoding.
\newblock In {\em Proceedings of the IEEE Symposium on Foundations of Computer
  Science (FOCS)}, pages 394--403, 2014.

\bibitem{gilbert1960synchronization}
E~Gilbert.
\newblock Synchronization of binary messages.
\newblock {\em IRE Transactions on Information Theory}, 6(4):470--477, 1960.

\bibitem{goldman2013towards}
Nick Goldman, Paul Bertone, Siyuan Chen, Christophe Dessimoz, Emily~M LeProust,
  Botond Sipos, and Ewan Birney.
\newblock Towards practical, high-capacity, low-maintenance information storage
  in synthesized {DNA}.
\newblock {\em Nature}, 494(7435):77, 2013.

\bibitem{golomb1963synchronization}
SW~Golomb, J~Davey, I~Reed, H~Van~Trees, and J~Stiffler.
\newblock Synchronization.
\newblock {\em IEEE Transactions on Communications Systems}, 11(4):481--491,
  1963.

\bibitem{guibas1978maximal}
Leonidas~J Guibas and Andrew~M Odlyzko.
\newblock Maximal prefix-synchronized codes.
\newblock {\em SIAM Journal on Applied Mathematics}, 35(2):401--418, 1978.

\bibitem{guruswami2019optimally}
Venkatesan Guruswami, Bernhard Haeupler, and Amirbehshad Shahrasbi.
\newblock Optimally resilient codes for list-decoding from insertions and
  deletions.
\newblock In {\em Proceedings of the ACM Symposium on Theory of Computing
  (STOC)}, pages 524--537, 2020.

\bibitem{guruswami2005linear}
Venkatesan Guruswami and Piotr Indyk.
\newblock Linear-time encodable/decodable codes with near-optimal rate.
\newblock {\em IEEE Transactions on Information Theory}, 51(10):3393--3400,
  2005.

\bibitem{guruswami2016efficiently}
Venkatesan Guruswami and Ray Li.
\newblock Efficiently decodable insertion/deletion codes for high-noise and
  high-rate regimes.
\newblock In {\em Proceedings of the IEEE International Symposium on
  Information Theory (ISIT)}, pages 620--624, 2016.

\bibitem{guruswami2017deletion}
Venkatesan Guruswami and Carol Wang.
\newblock Deletion codes in the high-noise and high-rate regimes.
\newblock {\em IEEE Transactions on Information Theory}, 63(4):1961--1970,
  2017.

\bibitem{haeupler2014interactive:FOCS}
Bernhard Haeupler.
\newblock Interactive channel capacity revisited.
\newblock In {\em Proceedings of the IEEE Symposium on Foundations of Computer
  Science (FOCS)}, pages 226--235, 2014.

\bibitem{haeupler2018optimal}
Bernhard Haeupler.
\newblock Optimal document exchange and new codes for insertions and deletions.
\newblock In {\em Proceedings of the IEEE Symposium on Foundations of Computer
  Science (FOCS)}, pages 334--347, 2019.

\bibitem{haeupler2019near}
Bernhard Haeupler, Aviad Rubinstein, and Amirbehshad Shahrasbi.
\newblock Near-linear time insertion-deletion codes and
  (1+$\varepsilon$)-approximating edit distance via indexing.
\newblock In {\em Proceedings of the ACM Symposium on Theory of Computing
  (STOC)}, pages 697--708, 2019.

\bibitem{haeupler2017synchronization}
Bernhard Haeupler and Amirbehshad Shahrasbi.
\newblock Synchronization strings: Codes for insertions and deletions
  approaching the {Singleton} bound.
\newblock In {\em Proceedings of the ACM Symposium on Theory of Computing
  (STOC)}, pages 33--46, 2017.

\bibitem{haeupler2017synchronization3}
Bernhard Haeupler and Amirbehshad Shahrasbi.
\newblock Synchronization strings: Explicit constructions, local decoding, and
  applications.
\newblock In {\em Proceedings of the ACM Symposium on Theory of Computing
  (STOC)}, pages 841--854, 2018.

\bibitem{haeupler-list-dec-capacity2020}
Bernhard Haeupler and Amirbehshad Shahrasbi.
\newblock Rate-distance tradeoffs for list-decodable insertion-deletion codes.
\newblock {\em arXiv preprint arXiv:2009.13307}, 2020.

\bibitem{haeupler2018synchronization4}
Bernhard Haeupler, Amirbehshad Shahrasbi, and Madhu Sudan.
\newblock Synchronization strings: List decoding for insertions and deletions.
\newblock In {\em Proceedings of the International Conference on Automata,
  Languages, and Programming (ICALP)}, 2018.

\bibitem{haeupler2017synchronization2}
Bernhard Haeupler, Amirbehshad Shahrasbi, and Ellen Vitercik.
\newblock Synchronization strings: Channel simulations and interactive coding
  for insertions and deletions.
\newblock In {\em Proceedings of the International Conference on Automata,
  Languages, and Programming (ICALP)}, pages 75:1--75:14, 2018.

\bibitem{hayashi2018list}
Tomohiro Hayashi and Kenji Yasunaga.
\newblock On the list decodability of insertions and deletions.
\newblock In {\em Proceedings of the IEEE International Symposium on
  Information Theory (ISIT)}, pages 86--90, 2018.

\bibitem{helberg2002multiple}
Albertus S~J Helberg and Hendrik~C Ferreira.
\newblock On multiple insertion/deletion correcting codes.
\newblock {\em IEEE Transactions on Information Theory}, 48(1):305--308, 2002.

\bibitem{hemenway2017local}
Brett Hemenway, Noga Ron-Zewi, and Mary Wootters.
\newblock Local list recovery of high-rate tensor codes and applications.
\newblock {\em Proceedings of the IEEE Symposium on Foundations of Computer
  Science (FOCS)}, 2017.

\bibitem{hunt1977fast}
James~W Hunt and Thomas~G Szymanski.
\newblock A fast algorithm for computing longest common subsequences.
\newblock {\em Communications of the ACM}, 20(5):350--353, 1977.

\bibitem{irmak2005improved}
Utku Irmak, Svilen Mihaylov, and Torsten Suel.
\newblock Improved single-round protocols for remote file synchronization.
\newblock In {\em Proceedings of the Annual Joint Conference of the IEEE
  Computer and Communications Societies}, volume~3, pages 1665--1676, 2005.

\bibitem{jowhari2012efficient}
Hossein Jowhari.
\newblock Efficient communication protocols for deciding edit distance.
\newblock In {\em European Symposium on Algorithms}, pages 648--658. Springer,
  2012.

\bibitem{kautz1965fibonacci}
W~Kautz.
\newblock Fibonacci codes for synchronization control.
\newblock {\em IEEE Transactions on Information Theory}, 11(2):284--292, 1965.

\bibitem{kopparty2019list}
Swastik Kopparty, Nicolas Resch, Noga Ron-Zewi, Shubhangi Saraf, and Shashwat
  Silas.
\newblock On list recovery of high-rate tensor codes.
\newblock {\em IEEE Transactions on Information Theory}, 2020.

\bibitem{levenshtein1966binary}
Vladimir Levenshtein.
\newblock Binary codes capable of correcting deletions, insertions, and
  reversals.
\newblock {\em Doklady Akademii Nauk SSSR}, 163(4):845--848, 1965.
\newblock {E}nglish translation in \emph{Soviet Physics Doklady},
  10(8):707--710, 1966.

\bibitem{liu2019list}
Shu Liu, Ivan Tjuawinata, and Chaoping Xing.
\newblock List decoding of insertion and deletion codes.
\newblock {\em arXiv preprint arXiv:1906.09705}, 2019.

\bibitem{mercier2010survey}
Hugues Mercier, Vijay~K Bhargava, and Vahid Tarokh.
\newblock A survey of error-correcting codes for channels with symbol
  synchronization errors.
\newblock {\em IEEE Communications Surveys \& Tutorials}, 12(1):87--96, 2010.

\bibitem{mitzenmacher2009survey}
Michael Mitzenmacher.
\newblock A survey of results for deletion channels and related synchronization
  channels.
\newblock {\em Probability Surveys}, 6:1--33, 2009.

\bibitem{morita1997prefix}
H~Morita, AJ~Van~Wijngaarden, and AJ~Han Vinck.
\newblock Prefix synchronized codes capable of correcting single
  insertion/deletion errors.
\newblock In {\em Proceedings of the IEEE International Symposium on
  Information Theory (ISIT)}, page 409, 1997.

\bibitem{morita1996construction}
Hiroyoshi Morita, Adriaan~J van Wijngaarden, and AJ~Han Vinck.
\newblock On the construction of maximal prefix-synchronized codes.
\newblock {\em IEEE Transactions on Information Theory}, 42(6):2158--2166,
  1996.

\bibitem{organick2017scaling}
Lee Organick, Siena~Dumas Ang, Yuan-Jyue Chen, Randolph Lopez, Sergey Yekhanin,
  Konstantin Makarychev, et~al.
\newblock Scaling up {DNA} data storage and random access retrieval.
\newblock {\em BioRxiv}, 2017.

\bibitem{orlitsky1991interactive}
Alon Orlitsky.
\newblock Interactive communication: Balanced distributions, correlated files,
  and average-case complexity.
\newblock In {\em Proceedings of the IEEE Symposium on Foundations of Computer
  Science (FOCS)}, pages 228--238, 1991.

\bibitem{Rubinstein18-blog}
Aviad Rubinstein.
\newblock Approximating edit distance.
\newblock
  \url{https://theorydish.blog/2018/07/20/approximating-edit-distance/}, 2018.

\bibitem{schulman1999asymptotically}
Leonard~J. Schulman and David Zuckerman.
\newblock Asymptotically good codes correcting insertions, deletions, and
  transpositions.
\newblock {\em IEEE Transactions on Information Theory}, 45(7):2552--2557,
  1999.

\bibitem{sellers1962bit}
F~Sellers.
\newblock Bit loss and gain correction code.
\newblock {\em IRE Transactions on Information Theory}, 8(1):35--38, 1962.

\bibitem{sloane2002single}
Neil~JA Sloane.
\newblock On single-deletion-correcting codes.
\newblock {\em Codes and Designs}, 10:273--291, 2002.

\bibitem{tenengolts1984nonbinary}
Grigory Tenengolts.
\newblock Nonbinary codes, correcting single deletion or insertion (corresp.).
\newblock {\em IEEE Transactions on Information Theory}, 30(5):766--769, 1984.

\bibitem{van1995extended}
AJ~Van~Wijngaarden and B~Morita.
\newblock Extended prefix synchronization codes.
\newblock In {\em Proceedings of the IEEE International Symposium on
  Information Theory (ISIT)}, page 465, 1995.

\bibitem{wachter2017list}
Antonia Wachter-Zeh.
\newblock List decoding of insertions and deletions.
\newblock {\em IEEE Transactions on Information Theory}, 64(9):6297--6304,
  2018.

\bibitem{yazdi2015dna}
SM~Hossein~Tabatabaei Yazdi, Han~Mao Kiah, Eva Garcia-Ruiz, Jian Ma, Huimin
  Zhao, and Olgica Milenkovic.
\newblock {DNA}-based storage: Trends and methods.
\newblock {\em IEEE Transactions on Molecular, Biological and Multi-Scale
  Communications}, 1(3):230--248, 2015.

\end{thebibliography}

\end{document}